\documentclass[sigconf, nonacm]{acmart}
\usepackage[utf8]{inputenc}
\usepackage[T1]{fontenc}
\usepackage{amsmath}
\usepackage{amsfonts}
\usepackage{nicefrac}
\usepackage{microtype}
\usepackage{xcolor}
\usepackage{graphicx}
\usepackage{graphicx}
\usepackage{tikz}
\usetikzlibrary{arrows.meta}
\usepackage{hyperref}
\usepackage{url}
\usepackage{booktabs}
\usepackage{multirow}
\usepackage{makecell}
\usepackage{array}
\usepackage{longtable}
\usepackage{subcaption}
\usepackage{enumitem}
\usepackage{comment}
\usepackage{fancyvrb}
\usepackage{fvextra}
\usepackage{algorithm}
\usepackage{algpseudocodex}
\newcommand{\name}{\textsc{\textsf{Veda}}\xspace}
\newcommand{\effname}{\textsc{\textsf{EffVeda}}\xspace}

\usepackage{pifont}

\definecolor{redx}{RGB}{180,0,0}
\definecolor{greenx}{RGB}{0,180,0}

\usepackage{threeparttable}
\newtheorem{assumption}{Assumption}

\newcounter{example}[section]
\renewcommand{\theexample}{\thesection.\arabic{example}}

\newenvironment{example}[1][]%
{%
\refstepcounter{example}%
\par\smallskip\noindent\textit{Example~\theexample.}%
\if\relax\detokenize{#1}\relax\else: #1\fi\space\ignorespaces }%

\usepackage[capitalize]{cleveref}

\usepackage{xspace}

\newcommand{\vldbdoi}{XX.XX/XXX.XX}
\newcommand{\vldbpages}{XXX-XXX}
\newcommand{\vldbvolume}{14}
\newcommand{\vldbissue}{1}
\newcommand{\vldbyear}{2020}
\newcommand{\vldbauthors}{\authors}
\newcommand{\vldbtitle}{\shorttitle}
\newcommand{\vldbavailabilityurl}{URL_TO_YOUR_ARTIFACTS}
\newcommand{\vldbpagestyle}{plain}

\newcommand{\shanshan}[1]{{\color{black}#1}}
\newcommand{\change}[1]{{\color{black}#1}}

\begin{document}
    \title{Don't Stir the Pot! Authorized Vector Data Retrieval \\via Access-Aware Indexing}

  \author{Shanshan Han}
  \affiliation{%
  \institution{University of California, Irvine} 
  }
  \email{shanshan.han@uci.edu}

  \author{Vishal Chakraborty}
  \affiliation{%
  \institution{University of California, Irvine} 
  }
  \email{vchakrab@uci.edu}

  \author{Sharad Mehrotra}
  \affiliation{%
  \institution{University of California, Irvine} 
  }
  \email{sharad@ics.uci.edu}




  \begin{abstract}
    Vector databases increasingly enforce role-based access control, where each
    top-$k$ approximate nearest neighbor query must return only vectors the
    querying role is authorized to access. Two extremes bracket the design space.
    A single global index built over all vectorsavoids duplication but wastes search effort on
    unauthorized vectors and degrades recall, while an oracle index, built with all authorized vectors to the query roles, searches only 
    authorized vectors but duplicates every shared vector between roles or queries. 
    We present \name and its efficient variant \effname, two indexing strategies
    built on an \emph{access-aware lattice} to address access control in vector databases. The methods first partitions the dataset into disjoint data blocks by role combination, then leverage the structure of the access-aware lattice to apply \emph{copy} and
    \emph{merge} operations to group co-accessed blocks under a user-specified
    storage budget. Large nodes in the lattice are then indexed with HNSW, while small
    nodes are retained for linear scan. To facilite query processing on the lattice, our methods construct a query plan that selects the
    minimal set of nodes that covers all authorized data for each role. At query time,
    \emph{coordinated search} first queries pure (authorized-only) nodes to
    populate a global top-$k$ heap, then leverages the resulting distance bound of the $k^\mathit{th}$ data in the heap to prune
    exploration on impure nodes. 
    Evaluations show that
    our methods deliver higher throughput at high recall while closely tracking the storage budget.
  \end{abstract}

  \maketitle

  \pagestyle{\vldbpagestyle}
  \begingroup\small
  \noindent
  \raggedright\textbf{PVLDB Reference Format:}\\
  \vldbauthors. \vldbtitle. PVLDB, \vldbvolume(\vldbissue): \vldbpages,
  \vldbyear.\\
  \href{https://doi.org/\vldbdoi}{doi:\vldbdoi} \endgroup \begingroup
  \renewcommand{\thefootnote}{}
  \footnote{\noindent
  This work is licensed under the Creative Commons BY-NC-ND 4.0 International
  License. Visit \url{https://creativecommons.org/licenses/by-nc-nd/4.0/} to view
  a copy of this license. For any use beyond those covered by this license, obtain
  permission by emailing \href{mailto:info@vldb.org}{info@vldb.org}. Copyright
  is held by the owner/author(s). Publication rights licensed to the VLDB
  Endowment. \\
  \raggedright Proceedings of the VLDB Endowment, Vol. \vldbvolume, No. \vldbissue\ %
  ISSN 2150-8097. \\
  \href{https://doi.org/\vldbdoi}{doi:\vldbdoi} \\
  }
  \addtocounter{footnote}{-1}
  \endgroup

  \ifdefempty{\vldbavailabilityurl}{}{ \vspace{.3cm} \begingroup\small\noindent\raggedright\textbf{PVLDB Artifact Availability:}\\ The source code, data, and/or other artifacts have been made available at \url{\vldbavailabilityurl}. \endgroup }

  \section{Introduction}\label{sec:intro}

Vector databases now back semantic search, recommendation, and
retrieval-augmented generation (RAG)~\cite{lewis2020retrieval}, and
enterprise deployments increasingly hold data of differing
sensitivity. A hospital's RAG corpus, for instance, mixes clinical
notes with billing records: physicians may \shanshan{be permitted to \change{read}} the former but not the
latter, while administrators have the reverse view. The vector store
must therefore enforce role-based access control
(RBAC)~\cite{bertino2011access}: a top-$k$ approximate
nearest-neighbor (ANN) query issued by role $r$ must return only
vectors that $r$ is authorized to read. Regulations such as the EU AI
Act~\cite{AI_Act} make this isolation a compliance requirement rather
than an optional feature. Recall remains critical at the same time\change{:} a
clinical query that misses an authorized record can change a treatment
decision.

\smallskip
\noindent\textbf{Design space.}
There are two extreme strategies. A \emph{global index} builds
one ANN graph, e.g., HNSW~\cite{hnsw}, over the entire dataset and
discards unauthorized results after search. Storage stays minimal,
but most of the search effort lands on vectors the querying role
cannot access, and recall drops unless the system over-samples to
compensate. An \emph{oracle index}, in contrast, builds a dedicated
index over exactly the vectors authorized to \shanshan{each
role combination that might be queried}~\cite{acorn}. Every query then runs on an index that contains
only vectors authorized to the query role. No post-filtering or
\shanshan{beam (i.e., \change{the search priority queue})} expansion is needed, but every vector shared across $m$ roles is
stored $m$ times. \shanshan{More generally, supporting \change{every} distinct role combination \change{that might be queried would
require a separate pure index per role combination. The resulting indices incur high memory
pressure and per-query index-switching overhead, so} the oracle \change{is} a reference point rather than a practical
design. }We measure this trade-off with two
metrics:

\noindent\emph{Storage amplification} (\textsf{SA}) is the ratio of
the number of total vectors to the dataset size. The global index attains
$\textsf{SA}=1$ and the oracle index sets the upper bound.

\noindent\emph{Query amplification} (\textsf{QA}) is the average query
cost normalized to oracle indexing. The oracle attains $\textsf{QA}=1$ and
the global index sets the upper bound.

An ideal index layout sits near the lower-left corner of the
\textsf{SA}--\textsf{QA} plane.

\smallskip
\noindent\textbf{Prior work.}
Existing approaches fall into two broad categories. The first treats \emph{the access
condition as a generic attribute predicate and folds it into ANN
search.} Filtered-DiskANN~\cite{filtered_diskann} adds label-aware
edges to the Vamana graph so that traversal stays close to matching
vectors; ACORN~\cite{acorn} elongates each HNSW node's neighbor list by
a factor $\gamma$ so that the predicate-matching subgraph remains
navigable after filtering. Both are predicate-agnostic and do not
exploit the structure of RBAC: a small, mostly fixed set of roles
whose authorized views overlap heavily under union semantics, \shanshan{i.e., a multi-role user's view
is the union of its roles' authorized vectors.} The
second category \emph{materializes multiple sub-indices.}
SIEVE~\cite{sieve} mines historical query workloads to build HNSW
indices for the most profitable filters, falling back to a global
index or brute force for the rest. HoneyBee~\cite{honeybee} casts
RBAC-aware indexing as constrained optimization and produces
partitions that trade storage for latency. Two limitations remain.
First, each role's data must fit inside a single partition, so
fine-grained overlap among roles cannot be exploited. Second,
partitions are searched independently and merged afterward. When a
partition is \emph{impure} for role $r$, i.e., it contains vectors
that $r$ may not have access to, the search on that partition must inflate its
beam to recover enough authorized candidates.

\smallskip
\noindent\textbf{Data partition.}
Data partitioning is the natural fit for access-constrained ANN
search. Three properties of RBAC make this so:
(i)~\emph{Structure:} Access policies follow organizational hierarchy
rather than arbitrary attributes. Physicians read clinical notes,
administrators read billing data, and a small public slice is shared.
The data therefore admits a stable, role-aligned set of partitions,
unlike category filters (e.g., color, price) whose predicate
combinations are unbounded; 
(ii)~\emph{Union semantics:} A user's
view is the union of its roles' authorized vectors, so queries are
answered by \emph{unioning} partition results rather than intersecting them; 
(iii)~\emph{Locality of change:} Policies evolve
as departments merge or permissions are revoked. A partitioned layout
absorbs such changes locally \shanshan{(Appendix~\ref{app:dynamic-workloads})}, whereas removing a role's vectors from a
global ANN graph can sever connectivity and force a full rebuild.

\change{Partitioning is not always optimal: when a query's authorized
region covers a large portion of the dataset, a global index with
post-filtering is cheaper. We treat the two as
complementary, i.e., partitioning for selective queries and global indexing with filtering for
broad ones, and confirm the crossover on multi-role query workloads in 
Exp~\ref{exp:multi-role} in \S\ref{sec:evaluations}.}

\smallskip
\noindent\textbf{This paper.}
We present \name and its efficient variant \effname, two partitioning
strategies built on an \emph{access-aware lattice}. \change{\name exhaustively
identifies candidate node pairs for each operation, whereas \effname applies a more selective bottom-up traversal to
avoid repeated re-evaluation. }Both methods first construct an access-aware lattice by splitting the dataset
into disjoint \emph{exclusive blocks}, \shanshan{where
each block contains exactly vectors authorized to the same role set. 
Then, under a user-specified storage amplification, the methods restructure the lattice with two operations on blocks. A~\emph{copy} operation duplicates a block into
another block whose roles are a subset of its roles. It spends storage budget
but preserves purity, i.e., every vector remains authorized to the
roles on the destination block.} A~\emph{merge} operation fuses two blocks; it
spends no storage budget but may introduce vectors outside some role's view.

Our approaches optimize the structure of the lattice by applying operations greedily with the highest
\emph{benefit ratio}, the
query-cost reduction per unit of added
storage until the storage budget is exhausted or no operation has positive benefit.
\shanshan{Once the lattice is fixed, large nodes (sizes higher than an indexing threshold $\Lambda$) are materialized as HNSW indices, and smaller nodes are decomposed into exclusive blocks and are kept as \emph{leftovers} for efficient linear scan, since linear scan
dominates HNSW at small sizes (\S\ref{sec:challenges}). \change{Leftover vectors are
stored as exclusive blocks and each block is scanned as a unit when queried.}}
For each role
we derive a minimal set of nodes whose union
covers that role's authorized data and call it a \emph{query plan}. The query plan is executed with
\emph{coordinated search}, which shares a global top-$k$ heap across
the nodes in the plan. \shanshan{Searching on indices (and leftovers) that contain only authorized vectors is prioritized over indices that may also contain unauthorized vectors. The former fills the heap first and sets a global distance bound.
The latter is then probed with a standard HNSW search, and the beam of the HNSW index is expanded only if its local $k$-th candidate is closer than the global bound. Inflation is thus paid only when it can improve the result, and the mechanism is orthogonal to the partitioning strategy.}

\begin{figure}[h]
  \centering
  \includegraphics[width=0.8\linewidth]{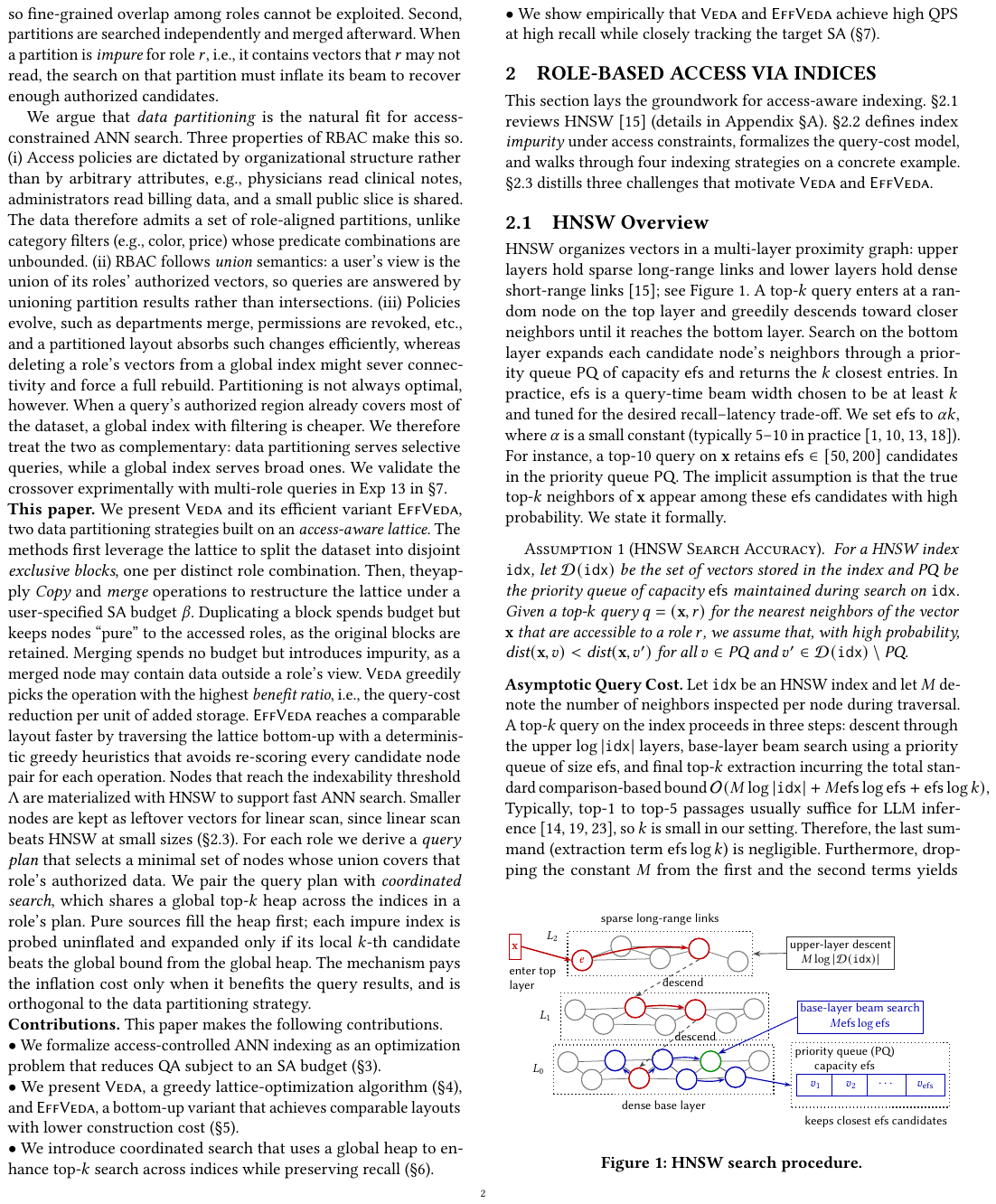}
  \caption{HNSW search through graph layers.}
  \label{fig:hnsw-overview}
\end{figure}

\smallskip
\noindent\textbf{Contributions.}
This paper makes the following contributions:

\noindent
$\bullet$ We formalize access-controlled ANN indexing as an
optimization problem that minimizes the query cost subject to an
\textsf{SA} budget (\S\ref{sec:problem-spec});

\noindent
$\bullet$ We present \name, a data paritition algorithm that exhaustively enumerates candidates for each operation,
(\S\ref{sec:veda}), and \effname, a more selective bottom-up variant that reaches
comparable search performance at a much lower construction cost (\S\ref{sec:eff_veda}).

\noindent
$\bullet$ We introduce \emph{coordinated search}, a query execution strategy
that schedules searches across selected indices and uses a global top-$k$ heap
to prune unnecessary inflated exploration on indices containing unauthorized
vectors, while preserving recall (\S\ref{sec:query-execution}).

\noindent
$\bullet$ \shanshan{Experimentally, \effname achieves searching efficiency comparable to \name while substantially reducing data-partitioning time, e.g., from $77.412$--$1338.673$s to $1.983$--$3.794$s on SIFT-1M.}

\noindent$\bullet$ \shanshan{
\name and \effname substantially outperform existing data partitioning
methods, e.g., under uniform single-role workloads across three datasets and
different \textsf{SA} budgets, \name is $2.24$--$5.60$x faster than SIEVE
and $8.69$--$42.61$x faster than HoneyBee, while \effname is
$2.22$--$4.19$x and $8.46$--$32.59$x faster, respectively.}

\section{Role-Based Access Via Indices}\label{sec:preliminary}

\begin{figure*}[t]
  \centering
\includegraphics[width=0.8\linewidth]{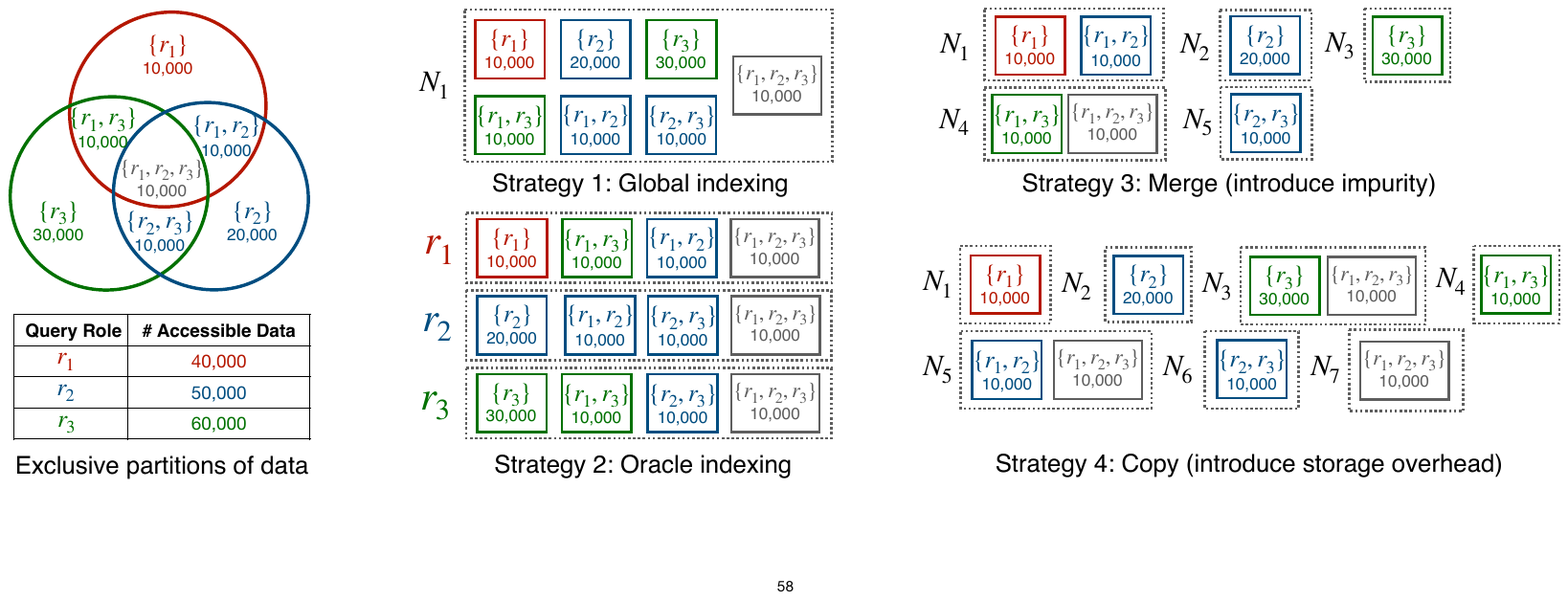}
  \caption{Indexing strategies with a running example with three roles. Dotted boxes denote data groups.}
  \label{fig:indexing_example}
\end{figure*}

This section lays the groundwork for access-aware indexing.
\S\ref{sec:hnsw} reviews HNSW~\cite{hnsw} (details in
Appendix~\S\ref{sec:prelim-hnsw}). \S\ref{sec:ac-indexing} defines index
\emph{impurity} under access constraints, formalizes the query-cost
model, and walks through four indexing strategies on an example to motivate our approach.
Finally, \S\ref{sec:challenges} distills three challenges that motivate the design choices for \name
and \effname.

\subsection{HNSW Overview}\label{sec:hnsw}
HNSW organizes vectors in a multi-layer proximity graph: upper layers hold
sparse long-range links and lower layers hold dense short-range
links~\cite{hnsw}; see Figure~\ref{fig:hnsw-overview}. 
A top-$k$ query enters at a random node on the top layer
and greedily descends toward closer neighbors until it reaches the bottom
layer. Search on the bottom layer expands each candidate node's neighbors
through a priority queue $\textsf{PQ}$ of capacity $\mathsf{efs}$ and returns
the $k$ closest entries.
In practice, $\mathsf{efs}$ is a query-time beam width chosen to be at least
$k$ and tuned for the desired recall--latency trade-off. We set $\mathsf{efs}$ to $\alpha k$, where $\alpha$ is a small
constant (typically $5$--$10$ in practice~\cite{faiss,acorn,sieve,aumuller2020ann}).
For instance, a top-$10$ query on $\mathbf{x}$ retains
$\mathsf{efs}\in[50,200]$ candidates in the priority queue $\textsf{PQ}$. The implicit assumption is that the true top-$k$ neighbors of $\mathbf{x}$ appear among these $\mathsf{efs}$ candidates with high probability. We state it formally.

\begin{assumption}[HNSW Search Accuracy]
\label{assump:hnsw}
For a HNSW index {\tt idx}, let $\mathcal{D}({\tt idx})$ be the set of vectors stored in the index and $\textsf{PQ}$ be the
priority queue of capacity $\mathsf{efs}$ maintained during search on
${\tt idx}$.
Given a top-$k$ query $q=(\mathbf{x}, r)$ for the nearest
neighbors of the vector $\mathbf{x}$ that are accessible to a role $r$, we assume that, 
with high probability,
$\textsf{dist}(\mathbf{x}, v) < \textsf{dist}(\mathbf{x}, v')$
for all $v \in \textsf{PQ}$ and $v' \in \mathcal{D}({\tt idx}) \setminus \textsf{PQ}$.
\end{assumption}

\noindent\textbf{Asymptotic Query Cost.}
Let ${\tt idx}$ be an HNSW index and let $M$ be the number of neighbors per node. A top-$k$ query on the index proceeds in three steps: descent through the upper $\log|{\tt idx}|$ layers, base-layer beam search using a priority queue of size $\mathsf{efs}$, and
final top-$k$ extraction. The total standard comparison-based bound is
$\mathcal{O}\!\left(M\log |{\tt idx}| + M\mathsf{efs}\log \mathsf{efs}
  + \mathsf{efs}\log k\right)$.
Typically, $k$ is small, e.g., top-$1$ to
top-$5$ passages suffice for LLM inference~\cite{reichman2024retrieval,
liu2023lost,xu2023retrieval}. 
To use this bound as a comparable cost proxy, 
we define $\theta=(a,b,c)$, where $a$ weights upper-layer traversal, $b$ weights
base-layer search, and $c$ captures fixed per-query overhead. The resulting
cost is
    $C^{\textsf{cmp}}_{\theta}({\tt idx},\mathsf{efs})
  = a\log |{\tt idx}| + b\,\mathsf{efs}\log\mathsf{efs} + c.$
However,
$C^{\textsf{cmp}}_{\theta}({\tt idx},\mathsf{efs})$ counts \emph{comparisons}, not wall-clock latency. The
$\log\mathsf{efs}$ factor arises solely from heap maintenance, whereas each of
the $\approx\mathsf{efs}$ expansion steps is dominated by $M$ distance
evaluations of $\mathcal{O}(d)$ FLOPs and cache-missing neighbor fetches
(${\sim}2{,}048$ multiply-adds vs.\ $\log_2\mathsf{efs}{\approx}7$ heap
comparisons at $d{=}128$, $M{=}16$). The marginal cost of an extra unit of
$\mathsf{efs}$ is therefore effectively constant, and we replace
$b\,\mathsf{efs}\log\mathsf{efs}$ with $b\,\mathsf{efs}$ in the cost model.

\begin{definition}[Cost of HNSW Search]
  Let ${\tt idx}$ be an HNSW index with size $|{\tt idx}|$ and beam width
$\mathsf{efs} = \alpha k$. Let $\theta=(a,b,c)$ be the calibrated coefficients
for index-size cost, beam-search cost, and fixed per-query overhead. We define the cost model as follows.
\begin{equation}\label{eqn:derivedcostmodel}
C_{\theta}({\tt idx},\mathsf{efs}) = a\log |{\tt idx}| + b\,\mathsf{efs} + c
\end{equation}
\end{definition}



Appendix~\ref{sec:cost-function-selection} validates this choice empirically
(e.g., $b\,\mathsf{efs}$ fits measured latency with $R^2{=}0.99$ vs.\ $0.98$ for
$\mathsf{efs}\log\mathsf{efs}$ at $d{=}128$, $M{=}16$) and details how
$\theta=(a,b,c)$ is estimated. 

\subsection{Access-Control-Aware Indexing}\label{sec:ac-indexing}

To enforce access control with vector databases, we partition the dataset into groups and
build one HNSW index per group. A group may hold vectors from several roles. An index with data accessible to role $r$ might be \emph{impure} with respect to $r$, i.e., the index stores vectors that $r$ is not authorized to access.

\begin{definition}[Pure and Impure Indices]
An index ${\tt idx}$ is \emph{pure} with respect to role $r$ if every vector
it stores is authorized to $r$; otherwise ${\tt idx}$ is \emph{impure}. 
\end{definition}

\begin{definition}[Inflation Factor]\label{def:impurity-factor}
Given a vector dataset $\mathcal{D}$ and an index ${\tt idx},$ let $\mathcal{D}(r)$ be all vectors accessible to role $r$ and let
$\mathcal{D}({\tt idx})$ be all vectors in index ${\tt idx}$.  
We quantify
the impurity of {\tt idx} for role $r$ with an inflation factor $\lambda^r_{{\tt idx}}$ as the ratio of the number of vectors in $\mathcal{D}({\tt idx})$ and the number of vectors in the subset of data in ${\tt idx}$ that is accessible to $r,$ i.e., $\mathcal{D}({\tt idx}) \cap \mathcal{D}(r)$. Formally, we have
$\lambda^r_{{\tt idx}} =
  \left\lceil
  \frac{|\mathcal{D}({\tt idx})|}
       {|\mathcal{D}({\tt idx}) \cap \mathcal{D}(r)|}
  \right\rceil.$
\end{definition}

When the HNSW index {\tt idx} is impure for role $r$ and is selected for answering queries with role $r$, to compensate for the vectors not accessible to $r$, the priority
queue $\mathsf{PQ}$ in the index must grow linearly with the impurity $\lambda^r_{{\tt idx}}$, thereby examining more candidates to maintain recall. Therefore,
$k$ is inflated to $\lambda^r_{{\tt idx}}\, k$ and 
$\mathsf{efs}$ is
inflated from $\alpha k$ to $\alpha\, \lambda^r_{{\tt idx}}\, k$.  To avoid notational clutter, when the role $r$ and the index {\tt idx} are clear from context, we write $\lambda$ for
$\lambda^r_{{\tt idx}}$ for simplicity. In the following, we define the cost model based on $\lambda$ and our analysis in Equation~\ref{eqn:derivedcostmodel}.

\begin{definition}[Cost Model] 
\label{def:hnsw-cost} 
Let {\tt idx} be an HNSW index (denoted by \textsf{H}), with priority queue of size $\mathsf{efs}$, built on a dataset $\mathcal{D}$. For a role $r$ with $\lambda$ as the inflation factor in $\mathcal{D}$, the cost function of a top-$k$ query with role $r$ on ${\tt idx}$ is defined as follows:
\[
  \mathsf{Cost}_{\textsf{H}}({\tt idx}, r) =
  \begin{cases}
  C_{\theta}({\tt idx},\mathsf{efs}), & \text{if } {\tt idx} \text{ is pure w.r.t.\ } r\\
  C_{\theta}({\tt idx},\lceil \lambda\mathsf{efs} \rceil), & \text{if impure w.r.t. } r\\
  \end{cases}
\]
\end{definition}
When the index is pure with respect to the role, no filtering is needed; otherwise, the size of the priority queue is inflated to $\lceil \lambda\mathsf{efs} \rceil$. When inflation exceeds the index size $|{\tt idx}|$, the query degenerates into a full scan. For ease of explanation, by default, we assume that the index size is large enough to accommodate the inflation.

\begin{figure}[h!]
  \centering
  \begin{subfigure}[t]{0.23\textwidth}
    \includegraphics[width=\linewidth]{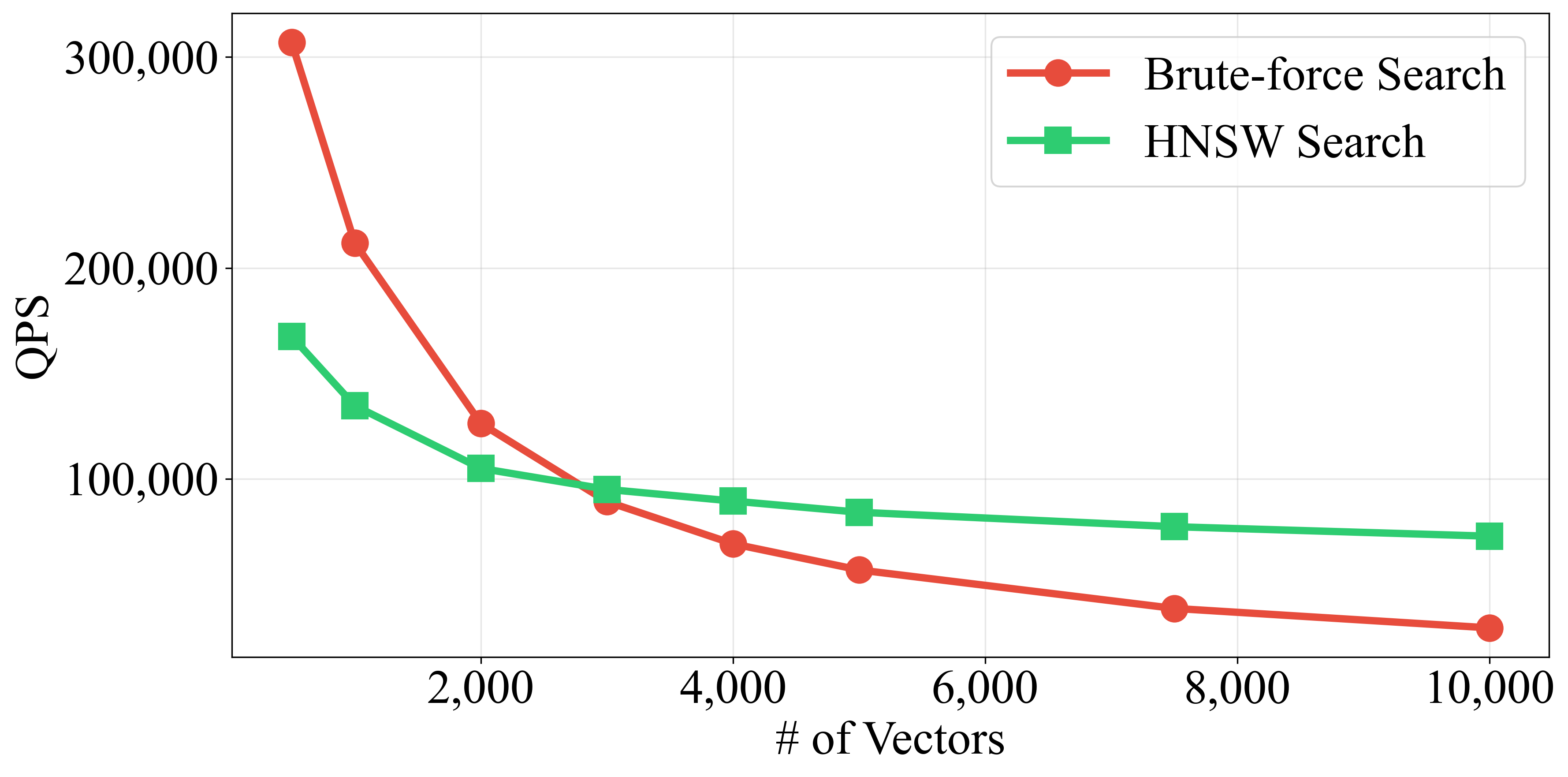}
    \caption{128D data.}\label{fig:128D}
  \end{subfigure}
  \hfill
  \begin{subfigure}[t]{0.23\textwidth}
    \includegraphics[width=\linewidth]{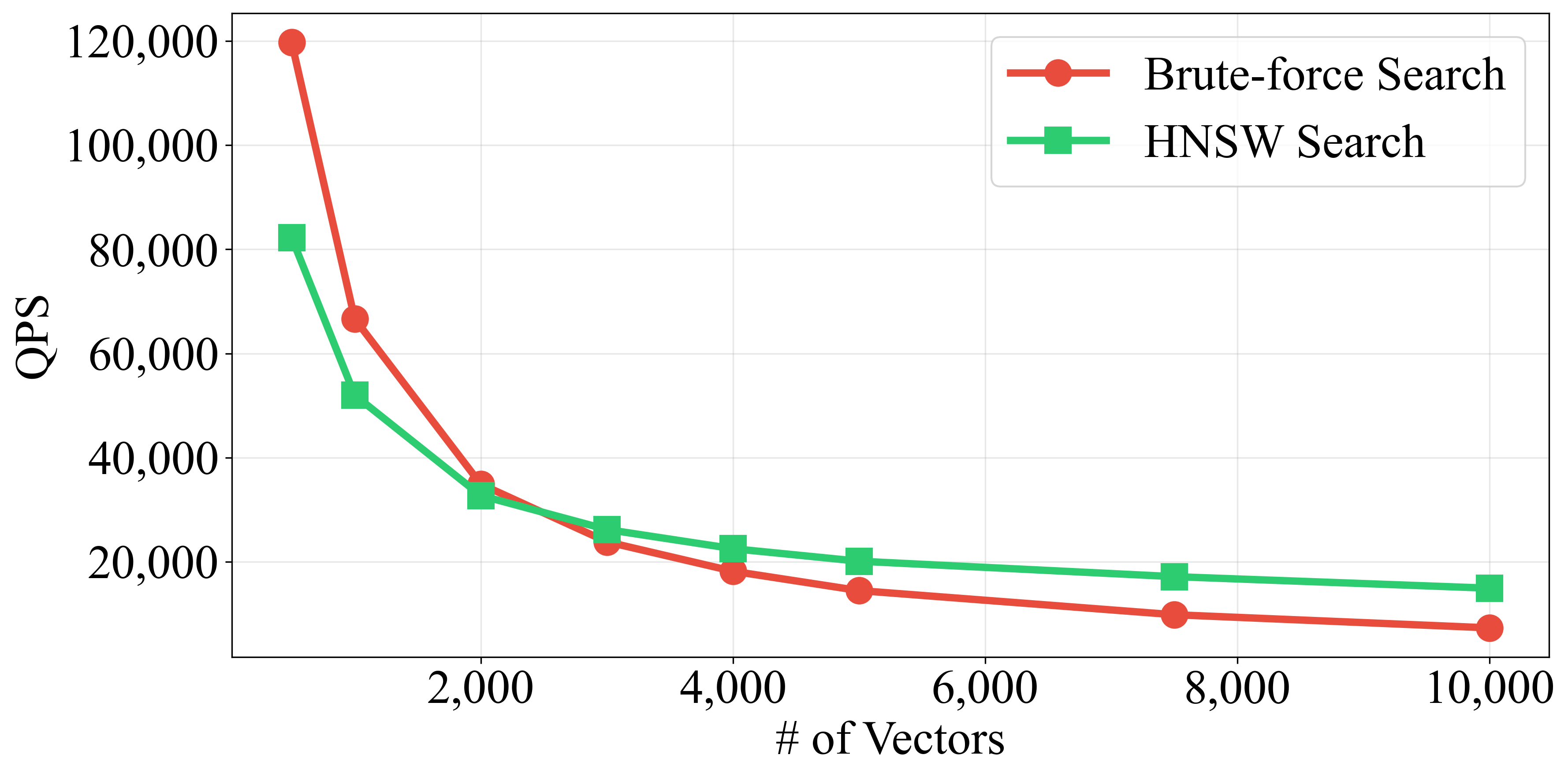}
    \caption{384D data.}\label{fig:384D}
  \end{subfigure}
  \caption{HNSW vs.\ brute-force search ($d{=}128, 384$).}
  \label{fig:vary-dimensions}
\end{figure}

We now compare four partitioning strategies on a toy dataset of $10{,}000$
vectors with three roles $\{r_1, r_2, r_3\}$ (Figure~\ref{fig:indexing_example}).

\noindent$\bullet$ \textbf{Baseline~1 (Global Index).} One index over the
entire dataset; every query post-filters unauthorized results ($\textsf{SA} = 1$). 

\noindent$\bullet$ \textbf{Baseline~2 (Oracle Index).} An ideal
baseline: construct an index over
exactly $\mathcal{D}(r)$ for each role $r$. This oracle partition
requires no post-filtering or search expansion, but it duplicates every
vector that is visible to multiple roles ($\textsf{SA} = 1.5$).

\noindent$\bullet$ \textbf{Strategy~3 (Merge).} Blocks are merged. No data is duplicated but impurity is introduced. In Figure~\ref{fig:indexing_example}, blocks of $\{r_1\}$ and $\{r_1, r_2\}$ are merged, so do blocks of
$\{r_1, r_3\}$ and $\{r_1, r_2, r_3\}$. $r_1$ and $r_3$ hit pure indices, but queries for $r_2$ have to traverse $10{,}000$ vectors exclusive to $r_1$ and $10{,}000$ vectors exclusive to $r_1$ and $r_3$ ($\textsf{SA}=1$).  

\noindent$\bullet$ \textbf{Strategy~4 (Copy).} Blocks are duplicated to improve query efficiency while cutting impurity. In
Figure~\ref{fig:indexing_example}, the block $\{r_1, r_2\}$ is copied into the group $\{r_1\}$. Queries with $r_2$ access the original node $\{r_1, r_2\}$, eliminating impurity when searching for role $r_2$.
Similarly, the block $\{r_1, r_2, r_3\}$ is copied
into $\{r_1, r_3\}$ ($\textsf{SA} = 1.2$).

\subsection{Challenges}\label{sec:challenges}

The example exposes an \textsf{SA}--\textsf{QA} trade-off and motivates an access-aware
indexing scheme that leverages copy and merge under a storage budget. 
Copying is preferable to merging when storage
permits, since it preserves the source block and never degrades any role; merging is the fallback when the
budget is exhausted. 
Thus, three
challenges follow.

\noindent\textbf{(C1)~Data Partition:} given an \textsf{SA} budget, partition
$\mathcal{D}$ to minimize overall query cost; choosing between HNSW and linear scan
per data group since linear scan dominates on small groups
(Figure~\ref{fig:vary-dimensions}).

\noindent\textbf{(C2)~Index Selection:} given the resulting indices, choose which
subset each role should query to minimize query cost.

\noindent\textbf{(C3)~Query Answering:} coordinate search across multiple (possibly
impure) indices so that redundant exploration is avoided. 

  \section{Data Partition and Index Selection}
\label{sec:problem-spec}

This section formalizes {Data Partition} and {Index Selection}
and illustrates both on the dataset of Figure~\ref{fig:indexing_example}.

\subsection{Problem Definition}
\label{sec:problem-def}

Given a vector dataset $\mathcal{D}$ in which each vector is accessible to one
or more roles in the set of roles $\mathcal{R}$, and a \textsf{SA} budget $\beta,$ we seek a partition of $\mathcal{D}$ into groups suitable for
indexing. The objective is to:
\textit{(i)}~minimize the average estimated query cost based on Definition~\ref{def:hnsw-cost}; \textit{(ii)}~limit the impurity of any group with respect to the
roles that access it; and \textit{(iii)}~respect the given \textsf{SA} budget $\beta$. We assume that $\mathcal{D}$ and $\mathcal{R}$ are static during construction, and discuss extensions to
dynamic changes in 
Appendix~\ref{app:dynamic-workloads}. For simplicity of explanation, we assume uniform single-role queries by default and extend to other workloads in \S\ref{sec:evaluations}. 

\smallskip\noindent\textbf{Roles and Access Tags.}
Let $\mathcal{R}=\{r_1,r_2,\ldots\}$ denote the set of roles on $\mathcal{D}$. Each vector
$v \in \mathcal{D}$ is associated with a non-empty set of roles (i.e., role combination)
$\tau_v \subseteq \mathcal{R}$. 
Any role $r\in \tau_v$ is authorized to access $v$.
For any non-empty set of roles $\tau \subseteq \mathcal{R}$, we define the
\emph{exclusive block} for $\tau$ as
$N^{\mathit{ex}}(\tau)=\{v \in \mathcal{D} : \tau_v=\tau\}.$
Thus, $N^{\mathit{ex}}(\tau)$ contains exactly vectors $\in \mathcal{D}$ that are exclusively accessible to the role combination $\tau$.
In Figure~\ref{fig:indexing_example}, the block
$N^{\mathit{ex}}(\{r_1,r_3\})$ contains the vectors that can be accessed by
both $r_1$ and $r_3$, and 
is disjoint from other blocks like $N^{\mathit{ex}}(\{r_1\})$ and
$N^{\mathit{ex}}(\{r_3\})$.
Let
$\mathcal{T}=\{\tau_v: \tau_v \subseteq \mathcal{R}, v \in \mathcal{D}\}$
be the set of role combinations that {are present} in dataset $\mathcal{D}$. 
Thus, $\mathcal{D}$ is partitioned
into disjoint exclusive blocks denoted as $\mathcal{N}_{\mathit{ex}}
= \{N^{\mathit{ex}}(\tau): \tau \in \mathcal{T}\}.$
Formally, $\mathcal{D}
=\bigcup_{\tau \in \mathcal{T}} N^{\mathit{ex}}(\tau),$ and $\forall \tau_1, \tau_2 \in \mathcal{T}, \text{ where } \tau_1 \ne \tau_2, \text{ we have }
N^{\mathit{ex}}(\tau_1) \cap N^{\mathit{ex}}(\tau_2)=\emptyset.$

For a role $r \in \mathcal{R}$, let $\mathcal{D}(r)$ denote the set of vectors
that $r$ is authorized to access. $\mathcal{D}(r)$ contains all exclusive blocks whose set of roles
includes $r.$ More specifically,
$\mathcal{D}(r) = \bigcup_{\tau \in \mathcal{T}: r \in \tau}
N^{\mathit{ex}}(\tau).$

\smallskip\noindent\textbf{Index Set.}
An index is built over a group of data that contains one or more exclusive blocks.
Different indices may contain the same exclusive block. Let $\mathcal{I}$ denote the set of all HNSW indices built over $\mathcal{D}$, and let $\mathcal{D}({\tt idx})$ be the set of vectors used to build index ${\tt idx}$. Each index can be represented as a union of exclusive blocks. Denote $\mathcal{T}^\prime$ as a set of role combinations corresponding to the exclusive blocks in ${\tt idx}$, 
we have $\mathcal{D}({\tt idx}) = \bigcup_{\tau \in \mathcal{T}^\prime} N^{\mathit{ex}}(\tau)$. 
Let $\mathcal{I}(r) \subseteq \mathcal{I}$ be the set of indices used to answer queries with role $r$. $\mathcal{I}(r)$ is \emph{correct} if 
the indices in $\mathcal{I}(r)$ cover all vectors that $r$ is authorized to access, i.e., $\mathcal{D}(r) \subseteq \bigcup_{{\tt idx} \in \mathcal{I}(r)} \mathcal{D}({\tt idx})$. 

\smallskip\noindent\textbf{Query Model.}
A query $q = (\mathbf{x}, r)$ is issued by users with role $r$ and retrieves
the top-$k$ nearest neighbors of $\mathbf{x}$ within $\mathcal{D}(r)$. The
query may touch any subset $\mathcal{I}(r) \subseteq \mathcal{I}$ whose union
covers $\mathcal{D}(r)$. The expected cost for role $r$ is computed as the sum over $\mathcal{I}(r)$, i.e., $\sum_{{\tt idx} \in \mathcal{I}(r)} \textsf{Cost}_{\textsf{H}}({\tt idx}, r).$ 

\smallskip\noindent\textbf{Objective.}
Let $Q = \{q_1, q_2, \ldots\}$ be a uniform single-role workload on $\mathcal{D}$ in which each
$q = (\mathbf{x}, r) \in Q$ draws $r \in \mathcal{R}$ with equal probability. The
estimated average query cost is
\begin{equation}
\textsf{AvgCost}(Q, \mathcal{I}) = \frac{1}{|Q|} \sum_{q \in Q} \sum_{{\tt idx} \in \mathcal{I}(r)} \textsf{Cost}_{\textsf{H}}({\tt idx}, r).
\end{equation}\label{eqn:avg_cost}

\begin{definition}
Given a vector dataset $\mathcal{D}$ with role set $\mathcal{R}$ 
and an \textsf{SA}
budget $\beta$, the {Index Selection} problem constructs an index set
$\mathcal{I}$ over $\mathcal{D}$ via \emph{merge} and \emph{copy} operations that minimizes
$\textsf{AvgCost}(Q, \mathcal{I})$ subject to $\textsf{SA}(\mathcal{I}) \le \beta$.
\end{definition}

The problem may be viewed as the {Budgeted Set Cover} problem~\cite{khuller1999budgeted,martello1987algorithms}: each
index in $\mathcal{I}$ covers a set of exclusive blocks; 
for each role $r$ and its indices $\mathcal{I}(r)$, we must cover every exclusive block accessible to $r$ under a knapsack constraint while
minimizing query cost~\cite{martello1987algorithms}. Since the {Budgeted Set Cover} problem is
NP-hard~\cite{khuller1999budgeted}, we do not seek exact solutions. Instead, we provide greedy heuristics in
\S\ref{sec:veda}--\S\ref{sec:eff_veda}, and give the MILP formulation in Appendix~\S\ref{sec:opt_formulation}.
We note that, in the worst case, the cardinality of $\mathcal{T}$ on $\mathcal{D}$ is asymptotically bounded by the  number of roles in $\mathcal{R}$, i.e., $|\mathcal{T}|\le 2^{|\mathcal{R}|}.$ However, in practice, $|\mathcal{T}|$ is bounded by the number of
\emph{distinct} permissions that are actually assigned, which is usually much smaller, e.g., $641$--$757$ with $|\mathcal{R}|$ = 64--87~(Table~\ref{tab:dataset_summary} in \S\ref{sec:evaluations}).


\begin{figure}[t]
  \centering
  \includegraphics[width=\linewidth]{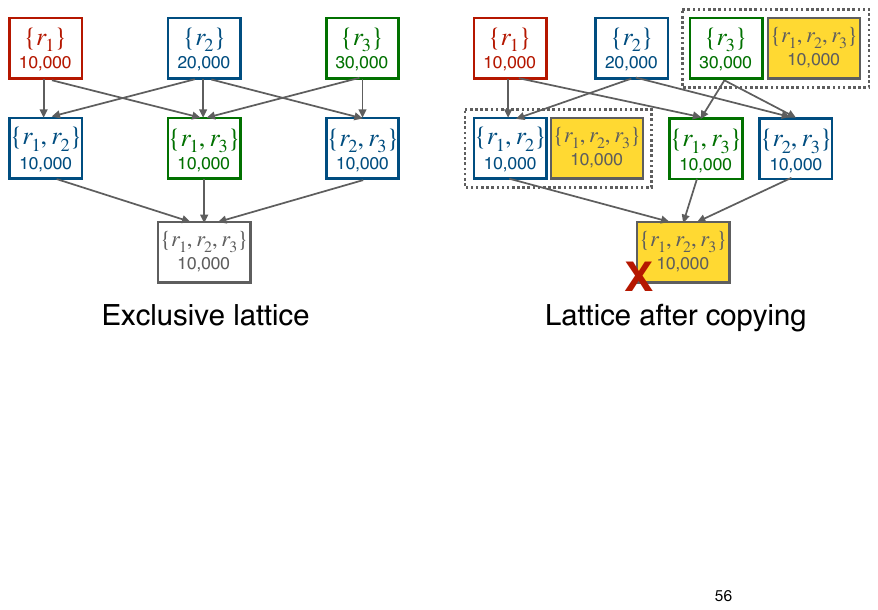}
  \caption{$\mathcal{L}_{\mathit{ex}}$ (left) and $\mathcal{L}$ after copy operations (right).}
  \label{fig:lattice_ex}
\end{figure}

\subsection{Exclusive Lattices}
{Exclusive blocks induce a partial order under role-set inclusion:
$N^{\mathit{ex}}(\tau')$ sits below $N^{\mathit{ex}}(\tau)$ when
$\tau\subset\tau'$, i.e., $\tau'$ is visible to a broader role set. We
organize the blocks by this order in a layered \emph{exclusive lattice} and
manipulate them with copy and merge to form indexable
groups~(\S\ref{sec:veda}--\S\ref{sec:eff_veda}); exclusive lattice construction is in
Appendix~\ref{sec:exclusive_lattice_construction}.}

{Figure~\ref{fig:lattice_ex} shows the original exclusive lattice (left) and the optimized lattice (right) for \textsf{Strategy~4}
of Figure~\ref{fig:indexing_example}. Each block sits on a \emph{layer} given
by $|\tau|$, so three roles yield at most three layers. Edges encode immediate
\emph{parent--child} relations and run from the stricter (fewer-role) parent
to the broader child; e.g., $N^\mathit{ex}(\{r_2, r_3\})$ is a child of both
$N^\mathit{ex}(\{r_2\})$ and $N^\mathit{ex}(\{r_3\})$.}

\begin{definition}\label{def:exclusive-lattice}
  The \emph{exclusive lattice} is a directed acyclic graph
  $\mathcal{L}_\mathit{ex} = (\mathcal{N}_\mathit{ex}, E)$. Each node
  $N^{\mathit{ex}}(\tau) \in \mathcal{N}_\mathit{ex}$ contains data exclusively accessible to $\tau$. Nodes within the same layer have the same cardinality of their role sets~($|\tau|$), and empty nodes and empty layers are omitted.
  $E$ captures access
  scope widening from stricter to broader role sets. 
  An edge $(N^{\mathit{ex}}(\tau), N^{\mathit{ex}}(\tau')) \in E$ defines a
  \emph{parent}--\emph{child} relation when both conditions hold:
  \begin{enumerate}
      \item \textbf{Containment:} $\tau \subset \tau^{\prime}$, so every role
      authorized at the parent is also authorized at the child.
      \item \textbf{Adjacency:} $|\tau| < |\tau^{\prime}|$, and no intermediate node
      $N^{\mathit{ex}}(\tau^{\prime\prime})$ exists with 
      $\tau \subset \tau^{\prime\prime}\subset \tau^{\prime}$.
  \end{enumerate}
  \emph{Descendant--ancestor} and \emph{sibling} relations are in Appendix~\S\ref{app:lattice-relations}.
\end{definition}

Searching based on the lattice requires a \emph{query plan} $\textsf{QP}$
that 
identifies nodes that together cover $\mathcal{D}(r)$ for each role $r\in\mathcal{R}$. Formally, the query plan for $r$, denoted as $\textsf{QP}(r)$, is defined as $\{{N}^\mathit{ex}(\tau) \in \mathcal{L}_\mathit{ex} \mid r \in \tau\}$. 
We leverage the query plan $\textsf{QP}$ when optimizing the lattice in \S\ref{sec:veda} and \S\ref{sec:eff_veda}, and {describe how to construct a minimal-cover $\textsf{QP}$ for each role} in \S\ref{sec:qplan}.
We summarize notations in Table~\ref{tab:notation} in Appendix~\S\ref{sec:notations}.

  \section{\name: The Adaptive Solution}
\label{sec:veda}

\name transforms the exclusive lattice
$\mathcal{L}_{\mathit{ex}}=(\mathcal{N}_{\mathit{ex}},E)$ into an optimized
lattice $\mathcal{L}$ using \emph{copy} and \emph{merge} operations, so that data that might be queried together are grouped. Since brute-force search dominates on small nodes (Figure~\ref{fig:vary-dimensions} and Figure~\ref{fig:vary-dimensions-app} in Appendix~\S\ref{sec:prelim-hnsw}),
 nodes will be indexed with HNSW or kept as leftovers for efficient brute-force search based on their sizes. 

\begin{definition}[Indexing threshold]
\label{def:indexing-threshold}
  The \emph{indexing threshold} $\Lambda$ is the minimum cardinality of nodes for materializing an HNSW index. For each node
  $N \in \mathcal{L}$, \name builds an HNSW index over
  $N$ if $|N| \geq \Lambda$; otherwise, $N$ is kept as leftover vectors.
\end{definition}

\subsection{Algorithm Overview}

\name acts on \emph{descendant--ancestor pairs}
$(N(\tau), N_{a}(\tau'))$ defined by reachability in
$\mathcal{L}_{\mathit{ex}}$: a directed path exists from the ancestor
$N_a(\tau')$ to the child $N(\tau)$ in $\mathcal{L}_{\mathit{ex}}$, equivalently $\tau'\subsetneq\tau$ under
Definition~\ref{def:exclusive-lattice}. A pair remains eligible only while
both nodes are still present in $\mathcal{L}$.
Because $\tau'\subsetneq\tau$, every query that reaches the exclusive block of the ancestor, e.g., $N_{a}^\mathit{ex}(\tau')$,
must also read the descendant block $N_c^\mathit{ex}(\tau)$ in $\mathcal{L}_{\mathit{ex}}$.
Thus, co-locating data selectively enables queries to be served
from fewer nodes. Initially, $\mathcal{L}$ is initialized as $\mathcal{L}_\mathit{ex}$.
Formally, two operations are defined on descendant--ancestor pair
$(N(\tau),\, N_{a}(\tau'))$:

\noindent$\bullet$ \textbf{Copy:} copy vectors in $N_c^\mathit{ex}(\tau)$ to $N_{a}(\tau')$ and keep $N_c^\mathit{ex}(\tau)$ as a separate node; {the source node is preserved} but \textsf{SA} increases.

\noindent$\bullet$ \textbf{Merge:} merge $N(\tau)$ and $N_{a}(\tau')$; {the original nodes $N(\tau)$ and $N_{a}(\tau')$ are removed.} No additional storage overhead; \emph{impurity} is introduced.

\noindent

To navigate the trade-off between \emph{Copy} and \emph{Merge}, \name takes a user-specified \textsf{SA} budget
$\beta$ and ranks candidate operations with \emph{benefit ratio}: query-cost
reduction per unit of added storage.


\begin{definition}[Benefit Ratio]
\label{def:benefit_function}
Let $\mathcal{R}$ be a set of roles and $Q$ be a uniform workload of single-role top-$k$ queries, where each 
query $q=(\mathbf{x},r) \in Q$ is issued with role $r \in \mathcal{R}$.
Let $e$ be an operation applied to a descendant--ancestor pair
$(N(\tau),N_{a}(\tau'))$ in lattice $\mathcal{L}$. Let
$\mathcal{L}^e$ be the lattice obtained after applying $e$, and let
$\Delta S(e)$ denote the additional storage overhead incurred by $e$.
For any lattice $\widetilde{\mathcal{L}}$, let
$\mathcal{I}(\widetilde{\mathcal{L}})$ denote the set of indices represented by
$\widetilde{\mathcal{L}}$. Using $\textsf{AvgCost}$ 
(Equation~\ref{eqn:avg_cost}), the \emph{benefit} of applying $e$ to
$\mathcal{L}$ is defined as
\begin{equation}
\label{eqn:benefit_function}
   f(\mathcal{L}, e)
   =
   \frac{
      \textsf{AvgCost}\!\left(\mathcal{R},\mathcal{I}(\mathcal{L})\right)
      -
      \textsf{AvgCost}\!\left(\mathcal{R},\mathcal{I}(\mathcal{L}^e)\right)
   }{
      \Delta S(e) + 1
   },
\end{equation}
\end{definition}

We set $\Delta S(e)$ to $0$ for merge operations, as merge operations do not incur any additional storage. We also add $1$ at the denominator to prevent division by zero. 
A positive benefit ratio indicates a reduction in query cost, {while} a negative
benefit ratio indicates {that $e$ increases query cost}. {Because} copy operations preserve source nodes
and do not harm queries (see Figure~\ref{fig:indexing_example} and explanations in \S\ref{sec:ac-indexing}), \name applies copy operations first
until it has consumed the storage budget. We justify this prioritization in Theorem~\ref{thm:copy-dominates} in
Appendix~\S\ref{sec:copy_dominance}.

\noindent\textbf{Workflow.}
{Algorithm~\ref{alg:adaptive} (Appendix~\S\ref{app:veda-overview-algorithm}) summarizes \name. Starting from
$\mathcal{L}=\mathcal{L}_{\mathit{ex}}$, \name enumerates the
descendant--ancestor pairs $\mathcal{DA}$ induced by reachability
(Definitions~\ref{def:exclusive-lattice}
and~\ref{def:lattice-ancestor-descendant}) and builds a query plan
$\textsf{QP}$ that defines which node to cover for queries for each role (\S\ref{sec:qplan}). It then iterates \emph{Copy}
(Phase~1) and \emph{Merge} (Phase~2) until the \textsf{SA} budget $\beta$ is
exhausted or no operation has positive benefit under
\cref{eqn:benefit_function}, and cleans up with \emph{Finalization}
(Phase~3).}

\begin{enumerate}[leftmargin=*]
   \item \textbf{Phase~1: Copy (\S\ref{sec:veda-copy-merge}).}
      Copy exclusive blocks to their ancestors; select operations with the highest benefit 
      while
      respecting~$\beta$.

   \item \textbf{Phase~2: Merge (\S\ref{sec:veda-copy-merge}).}
      Merge small nodes into indexable ones (size $\geq \Lambda$);
      reclaiming storage freed due to overlaps. 

   \item \textbf{Phase~3: Finalization (Appendix~\S\ref{sec:veda_finalize}).}
      Decompose remaining unindexable groups into exclusive blocks for brute-force search and handle ``super-impure'' nodes.
\end{enumerate}

During finalization, decomposing unindexable groups might remove duplicated vectors stored across data groups, which may free some storage
budget. \name uses this recovered budget to refine ``super-impure'' nodes: if a
node in a role's query plan contains only a small pure portion for that role,
\name materializes that pure portion as standalone exclusive blocks and
updates the role's query plan to include them. 
To do so, the algorithm check the query plan of each role and computes the purity of each node regarding each role, then sorts a combination of roles and nodes (e.g., $(r, N(\tau))$) based on impurity decreasingly to prioritize nodes with higher impurity.
This avoids searching the
larger impure node when budget permits; any remaining small blocks are kept as leftovers
for efficient brute-force search. 
Appendix~\S\ref{sec:veda_finalize} gives the detailed
procedure, including Algorithm~\ref{alg:adaptive-finalize}.

After Phase~3, 
\name materializes nodes of sizes $\geq\Lambda$ as HNSW indices and keeps other nodes as leftovers. Vectors of same exclusive blocks are grouped together, and each exclusive block is scanned as a unit when queried. 



\subsection{Greedy Copy and Merge}
\label{sec:veda-copy-merge}


{Both phases greedily select the highest-benefit operation each time:
copy duplicates exclusive blocks into ancestors to group co-queried data while
preserving ancestor purity, and merge then grows the resulting nodes into
indexable groups. Algorithms~\ref{alg:veda-copy} and~\ref{alg:veda-merge} give
the pseudocode; detailed helper algorithms are deferred to
Appendix~\S\ref{app:veda-copy-helpers} and~\S\ref{app:veda-merge-helpers}.}
Both phases maintain a candidate set $\textsf{PR}$ for operations
over descendant--ancestor pairs that are still present in $\mathcal{L}$, with available buffer
$\mathit{buf}=\beta|\mathcal{L}_{\mathit{ex}}| - |\mathcal{L}|$ ($|\mathcal{L}|$ counts duplicates). At each step the highest-benefit candidate is selected subject to
\begin{equation}
   \label{eqn:greedyconditions}
   e^*=\arg\max_{e \in \textsf{PR}} f(\mathcal{L},e),\quad
   f(\mathcal{L},e^*) \ge 0 \quad\text{and}\quad
   \Delta S(e^*) \le \mathit{buf}.
\end{equation}
A copy operation
$e(N(\tau), N_{a}(\tau'))$ consumes
$\Delta S(e) = |\mathcal{L}_{\mathit{ex}}[N(\tau)]|$
$+ |N_a(\tau')|$
$- |\mathcal{L}_{\mathit{ex}}[N(\tau)] \cup N_a(\tau')|$;
a merge consumes nothing. 
After each operation, $\mathit{buf}$, the query plan $\mathsf{QP}$, 
 and 
the candidate pairs that
involve the modified nodes are updated. 

\begin{example}
Suppose $\mathcal{L}$ is initialized with the exclusive lattice in Figure~\ref{fig:lattice_ex}. 
Suppose after a set of copy operations, $\mathcal{L}$ contains the following nodes: $N_1=\{N^\mathit{ex}({r_1}), N^\mathit{ex}({r_1, r_2}), N^\mathit{ex}({r_1, r_2, r_3})\}$, $N_2=\{N^\mathit{ex}({r_2})\}$, $N_3=\{N^\mathit{ex}({r_3})\}$, $N_4=\{N^\mathit{ex}({r_1, r_3})\}$, $N_5=\{N^\mathit{ex}({r_2, r_3})\}$, $N_6=\{N^\mathit{ex}({r_1, r_2})\}$, and $N_7$ $=\{N^\mathit{ex}({r_1, r_2, r_3})\}$. Queries with $r_2$ can be issued with $\mathsf{QP}_1(r_2) = \{N_1, N_2, N_5\}$ or $\mathsf{QP}_2(r_2) = \{N_2, N_5, N_6, N_7\}$. $\mathsf{QP}_1(r_2)$ searches one node less than $\mathsf{QP}_2(r_2)$, but has to probe unauthorized vectors in $N_1$. Thus, the query cost with $\mathsf{QP}_1(r_2)$ and $\mathsf{QP}_2(r_2)$  has to be re-evaluated to select the cheaper plan.
\end{example}

\begin{theorem}[Correctness of Greedy Copy Phase]
   \label{thm:copy-phase-correctness} Let $\mathcal{L}_{t}$ be the lattice
   after $t$ greedy copy operations. Then:
   \begin{enumerate}
      \item \emph{Monotonicity:}
         $\textsf{AvgCost}(\mathcal{R}, \mathcal{I}(\mathcal{L}_{t})) \le \textsf{AvgCost}(\mathcal{R},
         \mathcal{I}(\mathcal{L}_{t-1}))$.
      \item \emph{Budget Safety:}
         $|\mathcal{L}_t|/|\mathcal{L}_\mathit{ex}|\le \beta$.
      \item \emph{Termination:} the phase halts when no $e$ satisfies
         Equation~\ref{eqn:greedyconditions}.
   \end{enumerate}
\end{theorem}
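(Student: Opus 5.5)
Each of the three claims follows from the selection rule in Equation~\ref{eqn:greedyconditions}, so the plan is an induction on $t$. The invariant is that $\mathcal{L}_t$ is a valid lattice with $\mathit{buf}_t=\beta|\mathcal{L}_{\mathit{ex}}|-|\mathcal{L}_t|\ge 0$. In the base case $\mathcal{L}_0=\mathcal{L}_{\mathit{ex}}$, so $\mathit{buf}_0=(\beta-1)|\mathcal{L}_{\mathit{ex}}|\ge 0$. This uses the standing assumption $\beta\ge 1$: by definition, $\textsf{SA}$ is at least $1$.

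\emph{Monotonicity.} Let $e^*$ be the copy operation applied at step $t$. The selection rule guarantees $f(\mathcal{L}_{t-1},e^*)\ge 0$. By Definition~\ref{def:benefit_function}, the denominator $\Delta S(e^*)+1$ is strictly positive, because $\Delta S(e^*)$ is a set-size difference and hence $\ge 0$. Multiplying through gives $\textsf{AvgCost}(\mathcal{R},\mathcal{I}(\mathcal{L}_{t-1}))-\textsf{AvgCost}(\mathcal{R},\mathcal{I}(\mathcal{L}_t))\ge 0$, which is the first claim.

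I need one subtlety here: the cost of $\mathcal{L}_t$ must be evaluated with the updated query plan. Since the algorithm recomputes $\mathsf{QP}$ after each operation, I would argue the evaluated cost can only drop. A copy preserves the source block, so the previous plan for every role remains a correct cover. Hence the minimum over plans can only decrease, and this matches the value used when computing $f$.

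\emph{Budget safety.} The step is applied only when $\Delta S(e^*)\le \mathit{buf}_{t-1}$. The size update is exact: $|\mathcal{L}_t|=|\mathcal{L}_{t-1}|+\Delta S(e^*)$. This holds because the copy adds to $N_a(\tau')$ exactly those vectors of $\mathcal{L}_{\mathit{ex}}[N(\tau)]$ not already present, namely $|\mathcal{L}_{\mathit{ex}}[N(\tau)]\setminus N_a(\tau')|=\Delta S(e^*)$ by inclusion--exclusion, while all other nodes are unchanged. Therefore $\mathit{buf}_t=\mathit{buf}_{t-1}-\Delta S(e^*)\ge 0$, i.e.\ $|\mathcal{L}_t|\le\beta|\mathcal{L}_{\mathit{ex}}|$. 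Dividing by $|\mathcal{L}_{\mathit{ex}}|$ gives the second claim.

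\emph{Termination.} The loop halts exactly when no $e$ satisfies Equation~\ref{eqn:greedyconditions}, so what must be shown is that this happens after finitely many steps. The candidate set $\textsf{PR}$ consists of copy operations on descendant--ancestor pairs of $\mathcal{L}_{\mathit{ex}}$, which are finitely many (at most $|\mathcal{T}|^2$). Each copy of a given exclusive block into a given ancestor is idempotent: once performed, a repeat has $\Delta S=0$ and leaves $\mathcal{L}$ unchanged. So the algorithm must remove applied pairs from $\textsf{PR}$, or equivalently treat zero-change operations as ineligible.

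I expect this step to be the main obstacle, because a literal reading lets a zero-benefit, zero-storage copy be selected forever. I would handle it by the bookkeeping argument above: $\textsf{PR}$ strictly shrinks with each applied copy, so at most $|\textsf{PR}_0|$ iterations occur. Alternatively, the monotone quantity $|\mathcal{L}_t|$ is bounded by $\beta|\mathcal{L}_{\mathit{ex}}|$ and strictly increases for every non-idempotent copy, which bounds the number of iterations. In either case the phase reaches a state in which no candidate satisfies the greedy conditions, and it halts there.
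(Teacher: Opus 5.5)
Your argument is the paper's proof item by item: $f(\mathcal{L}_{t-1},e^*)\ge 0$ with a positive denominator for monotonicity, the admission test plus an exact decrement of $\mathit{buf}$ for budget safety, and finitely many pairs each applied at most once for termination. For the exact increment, cite Algorithm~\ref{alg:veda-copy}'s $\Delta S=|N_a\cup\mathcal{L}_{\mathit{ex}}[N_c]|-|N_a|$, since the main-text expression equals the overlap by inclusion--exclusion. Your additions are sound and tighten the paper's terse version: the base case does need $\beta\ge 1$, and zero-storage, zero-gain copies must be excluded because the literal guard $f\ge 0$ would re-select them forever, which the paper's ``each copy strictly enlarges some ancestor'' silently assumes away.

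Drop the ``subtlety'' paragraph, though. The old plan stays a valid cover, but its cost can rise because $N_a$ grew: a role $r\in\tau'$ that already reads $\mathcal{L}_{\mathit{ex}}[N(\tau)]$ through another ancestor it needs anyway simply pays a larger $a\log|N_a|$. So the minimum over plans need not decrease, which is why the algorithm must guard against negative-benefit copies. Monotonicity rests only on $f\ge 0$, with $f$ defined through the same $\textsf{AvgCost}$.
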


\begin{theorem}[Correctness of Merge Phase]
   \label{thm:merge-phase-correctness} Let $\mathcal{L}_{t}$ be the lattice
   after $t$ merge operations. Then:
   \begin{enumerate}
      \item \emph{Monotonicity:}
         $\textsf{AvgCost}(\mathcal{R}, \mathcal{I}(\mathcal{L}_{t})) < \textsf{AvgCost}
         (\mathcal{R}, \mathcal{I}(\mathcal{L}_{t-1}))$.
      \item \emph{Storage Non-Increase:} if $N(\tau)$ and $N_{a}(\tau')$
         share exclusive blocks (from prior copies), then
         $|\mathcal{L}_{t}| < |\mathcal{L}_{t-1}|$; otherwise
         $|\mathcal{L}_{t}| = |\mathcal{L}_{t-1}|$.
      \item \emph{Termination:} the phase halts when no $e$ satisfies
      Equation~\ref{eqn:greedyconditions}.
   \end{enumerate}
\end{theorem}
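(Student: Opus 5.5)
The proof follows the merge-phase selection rule, Equation~\ref{eqn:greedyconditions}, with $\Delta S(e)=0$ for every merge. We handle the three claims in turn.

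\emph{Monotonicity.} Let $e_t$ be the merge chosen at step $t$, and let $\mathcal{L}_t=\mathcal{L}_{t-1}^{e_t}$. Since $\Delta S(e_t)=0$, the denominator of the benefit ratio in Definition~\ref{def:benefit_function} equals $1$. Hence $f(\mathcal{L}_{t-1},e_t)$ is exactly $\textsf{AvgCost}(\mathcal{R},\mathcal{I}(\mathcal{L}_{t-1}))-\textsf{AvgCost}(\mathcal{R},\mathcal{I}(\mathcal{L}_t))$. The strict inequality needs $f>0$. Equation~\ref{eqn:greedyconditions} states only $f\ge 0$, but the workflow description says the phase stops when ``no operation has positive benefit.'' I will therefore read the merge-phase acceptance test as $f(\mathcal{L},e^*)>0$, and state this explicitly as the convention for Phase~2. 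If zero-benefit merges were admitted, only $\le$ would follow. Under the strict test, each accepted merge strictly decreases $\textsf{AvgCost}$.

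\emph{Storage non-increase.} Every node of $\mathcal{L}$ is a union of exclusive blocks, and $|\mathcal{L}|$ counts vectors with multiplicity across nodes. A merge replaces $N(\tau)$ and $N_a(\tau')$ with the single node $N(\tau)\cup N_a(\tau')$ and leaves all other nodes unchanged. Then
\[
|\mathcal{L}_t|-|\mathcal{L}_{t-1}| = |N(\tau)\cup N_a(\tau')| - |N(\tau)|-|N_a(\tau')| = -|N(\tau)\cap N_a(\tau')|.
\]
Because exclusive blocks are pairwise disjoint, the intersection is exactly the union of the blocks the two nodes share, and each such shared block must come from an earlier copy. Every nonempty shared block therefore gives a strict decrease, and no shared block gives equality. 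The lattice omits empty nodes, so every shared block has positive size.

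\emph{Termination.} Each merge removes two nodes and adds one, so the node count $|\mathcal{N}(\mathcal{L})|$ drops by exactly one per step. The phase therefore performs at most $|\mathcal{N}(\mathcal{L}_0)|-1$ merges. Alternatively, $\textsf{AvgCost}$ takes values in a finite set over the finitely many lattices reachable from $\mathcal{L}_0$, and it strictly decreases at each step. The candidate set $\textsf{PR}$ is finite because it ranges over descendant--ancestor pairs still present in $\mathcal{L}$. The loop ends exactly when $\textsf{PR}$ contains no $e$ satisfying Equation~\ref{eqn:greedyconditions}, which is the stated halting condition, and the finite bound guarantees that this point is reached.

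The main obstacle is the $\ge$ versus $>$ mismatch in the monotonicity claim. A secondary point is checking that the query plan $\textsf{QP}$ is updated before $f$ is evaluated, so that $\textsf{AvgCost}(\mathcal{I}(\mathcal{L}^e))$ is the cost actually realized after the merge. I would handle this by defining $f$ with the re-optimized query plan, so the inequality transfers directly.
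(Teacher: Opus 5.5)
Your proposal is correct and follows the paper's proof step for step: $\Delta S(e)=0$ makes the benefit exactly the $\textsf{AvgCost}$ drop, the stored count falls by exactly the overlap $|N(\tau)\cap N_a(\tau')|$, and termination holds because each merge deletes one node. The strict positivity you adopt as a convention is what the paper's proof uses directly, since Algorithm~\ref{alg:veda-merge} breaks as soon as the best benefit is $\le 0$, so only strictly positive merges are ever applied.
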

\noindent Proofs for the theorems above are deferred to Appendix~\S\ref{app:veda-correctness-proofs}.

\noindent\textbf{\textit{Complexity.}}
Each step re-evaluates up to $\mathcal{O}(|\mathcal{N}|^{2})$ candidate pairs;
each evaluation rebuilds the affected roles' query plans in
$\mathcal{O}(|\mathcal{R}|\!\cdot\!|\mathcal{N}|)$. With $\mathcal{O}(|\mathcal{N}|)$
steps, the copy phase runs in $\mathcal{O}(|\mathcal{N}|^{5}|\mathcal{R}|)$ and the merge
phase in $\mathcal{O}(|\mathcal{N}|^{4}|\mathcal{R}|)$. This cost is dominated by the
$\textsf{QP}$ re-derivation inside Equation~\eqref{eqn:benefit_function};
removing that dependency is exactly what \effname does next.

\begin{algorithm}
\caption{\textbf{\name - Copy}}
\label{alg:veda-copy}
\begin{algorithmic}[1]

\Require exclusive lattice $\mathcal{L}_\mathit{ex}$, lattice $\mathcal{L}$, \textsf{SA} budget $\beta$, descendant--ancestor pairs $\mathcal{DA}s$, query plan $\textsf{QP}$ for $\mathcal{L}$.

\State $\mathit{buf}\gets \beta\times|\mathcal{L}_\mathit{ex}| - |\mathcal{L}|$, $\textsf{PR}\gets \emptyset$  
\If{$\mathit{buf}\leq 0$} \textbf{return} $\mathcal{L}$
\EndIf

\For{$(N_c, N_a)\in\mathcal{DA}$}
~$\textsf{PR}[(N_a, N_c)]\gets -1$
\EndFor

\State $\textsf{PR}\gets \textsf{GetCopyPairs}(\mathcal{L}, \textsf{PR}, \texttt{None}, \textsf{QP})$\Comment{Algorithm~\ref{alg:veda-copy-complementary}}

\While{$\mathit{buf}> 0$}
    
    \If{the best pair $\in\textsf{PR}$ has benefit below 0}~\textbf{break}
    \EndIf
    \For{$(N_c, N_a)\in\textsf{PR}$}

     \State $\Delta S(e)\gets |N_a\cup\mathcal{L}_\mathit{ex}[N_c]|-|N_a|$
     \If{$\Delta S(e)\leq \mathit{buf}$}
     \State $N_a.\texttt{add}(\mathcal{L}_\mathit{ex}[N_c])$, $\textsf{QP.renew()}$, $\mathit{buf}.\textsf{adjust}()$, 
     $\textsf{PR}\gets\textsf{GetCopyPairs}(\mathcal{L}, \textsf{PR}, N_a, \textsf{QP})$, \textbf{break} \Comment{Rescore pairs where $N_a$ is the ancestor}
     \EndIf
    \EndFor
    \If{no copy for this round} \textbf{break}
    \EndIf
\EndWhile

\State \textbf{return} $\mathcal{L}$, $\textsf{QP}$
\end{algorithmic}
\end{algorithm}

\begin{algorithm}[h!]
\caption{\textbf{\name - Merge}}
\label{alg:veda-merge}
\begin{algorithmic}[1]
\Require exclusive lattice $\mathcal{L}_\mathit{ex}$, lattice $\mathcal{L}$, descendant--ancestor pairs $\mathcal{DA}s$, query plan $\textsf{QP}$ for $\mathcal{L}$.

    \State $\textsf{PR}\gets \phi$ 

    \For{$(N_c, N_a)\in\mathcal{DA}$}~$\textsf{PR}[(N_a, N_c)]\gets -1$
    \EndFor
    \State $\textsf{PR}\gets \textsf{GetMergePairs}(\mathcal{L}, \textsf{PR}, N_a, N_c, \textsf{QP})$\Comment{Algorithm~\ref{alg:veda-merge-complementary}}
    \While{\texttt{True}}
    
    \State 
    $(N_a, N_c) \gets \arg\max_{(N_a', N_c') \in \textsf{PR}} f(\mathcal{L}, e(N_a', N_c'))$

    \If{$f(\mathcal{L}, e(N_a, N_c)) \leq 0$}~\textbf{break}
    \EndIf
    
    \State $N_a.\textsf{add}(N_c)$, $N_c.\textsf{delete}()$, 
    $\textsf{QP.renew}()$

    \State $\textsf{PR}\gets \textsf{GetMergePairs}(\mathcal{L}, \textsf{PR}, N_a, N_c, \textsf{QP})$\Comment{Algorithm~\ref{alg:veda-merge-complementary}}
    \EndWhile
    \State \textbf{return} $\mathcal{L}$

\end{algorithmic}
\end{algorithm}
  \section{\effname: An Efficient Solution}
\label{sec:eff_veda}

\name suffers from iterations of copy--merge phases and heavy 
re-evaluation after each operation. Candidate node pairs are re-scored with
Equation~\ref{eqn:benefit_function} after each operation, 
and therefore requires refreshing query plan $\textsf{QP}$ after each operation. 
\effname gives the copy and merge phases specific goals. It applies
a fixed traversal order to avoid iterations of copy and merge phases and heavy re-scoring, and achieves search performance comparable to \name (detailed in \S\ref{sec:evaluations}). 
The goals of Copy and Merge in \effname are as follows. 

\noindent$\bullet$ \textbf{Copy. }
Bottom-up traversal and aggressive full-node duplication: each time duplicates a node's \emph{entire} data (not just one exclusive block) into one or
more ancestors while keeping every node \emph{pure} towards its original role set. 

\noindent$\bullet$ \textbf{Merge. }  Target indexability: greedily grow nodes that sit just
below the threshold $\Lambda$ until they become indexable.

Each time, the copy phase identifies \emph{valid partitions}, i.e., a set of ancestors whose role sets cover $\tau$ exactly and disjointly, for each candidate copied node $N(\tau)$. 
Unlike \name, \effname estimates benefit from the marginal cost reduction per role alone and does not maintain a query plan $\textsf{QP}$ or re-score candidate node pairs after each operation. 
Since the merge phase introduces impurity, 
{nodes are no longer pure for their original role sets after merging and cannot be copied again, so \effname omits the copy--merge iteration.} However, it applies the same finalization step (including splitting small nodes into leftovers and handling super-impure nodes; see Appendix~\S\ref{sec:veda_finalize}) as \name to compensate for the omitted copy--merge iterations.
The workflow is summarized in Algorithm~\ref{alg:eff-adaptive} in Appendix~\S\ref{app:effveda-complementary-algorithms}.

\subsection{Phase 1: Copying}
\label{sec:eff_veda_copy}

\effname avoids heavy re-computation with a bottom-up traversal and a restriction of copy operations to a set of ancestors, i.e., a \emph{valid partition} of the copied node. Role combinations of the ancestors in the valid partition tile the child's role set exactly and disjointly. Under this
restriction, the change in $\textsf{AvgCost}$ can be computed from the copied node and the ancestor involved each time (Lemma~\ref{lem:copy_locality}). The benefit function
collapses to a closed-form expression in $C_\theta$, which does not require updating the query plan $\mathsf{QP}$ and the lattice is swept once, bottom-up.


\begin{definition}[Valid Partition]
\label{def:valid_partition} Let $A_c^\tau$ denote the set of ancestors of $N(\tau)$, i.e., $A_c^\tau = \{N(\tau')\in\mathcal{L}:\tau'\subsetneq\tau\}$.
A set of ancestor role sets $\textsf{P}_\tau=\{\tau_1,\ldots,\tau_m\}$, where
$N_a(\tau_j)\in A_c^\tau$ for all $1\leq j\leq m$, is a \emph{valid partition} of
$N(\tau)$ if
\begin{enumerate}
  \item \textbf{coverage:}\; $\bigcup_{j=1}^{m}\tau_j=\tau$,\quad and
  \item \textbf{disjointness:}\; $\tau_j\cap\tau_{j'}=\emptyset$ for all
        $j\neq j'$ \;(i.e., $\sum_j|\tau_j|=|\tau|$).
\end{enumerate}
We write $\mathbb{P}(N(\tau))$ for the set of all valid partitions of $N(\tau)$.
\end{definition}

\begin{example}
\label{ex:valid_partition}
In the lattice of Figure~\ref{fig:lattice_ex},
$\{\{r_1,r_2\},\{r_3\}\}$, $\{\{r_1,r_3\},\{r_2\}\}$ and
$\{\{r_1\},\{r_2\},\{r_3\}\}$ are all valid partitions of node 
$N(\{r_1,r_2,r_3\})$, whereas $\{\{r_1,r_2\},\{r_2,r_3\}\}$ is not because
$r_2$ is covered twice.
\end{example}

{Copying $N(\tau)$ into a valid partition $\textsf{P}_\tau$ replicates
$N(\tau)$ into every $N_a(\tau_j)\in\textsf{P}_\tau$ and deletes $N(\tau)$.
Since the ancestors cover $\tau$ exactly and disjointly, each role $r\in\tau$
finds its authorized vectors from $N(\tau)$ in exactly one ancestor, so
$N(\tau)$ is no longer needed for any $r\in\tau$ and is safe to delete. The added storage is
$|N(\tau)|\cdot(|\textsf{P}_\tau|-1)$. Since role sets of ancestors in $\textsf{P}_\tau$ are disjoint, no query has to retrieve the same vector
twice.}

\begin{theorem}[Purity Preservation]
  \label{thm:node_purity_after_copying}
  Let $\textsf{P}_\tau\in\mathbb{P}(N(\tau))$ and suppose every node in $\mathcal{L}$ is
  pure for its own role set before the copy. After copying $N(\tau)$ into ancestors in
  $\textsf{P}_\tau$ and deleting $N(\tau)$, every surviving node is still pure
  for its own role set.
  \end{theorem}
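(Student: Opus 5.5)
The plan is to prove the statement by a direct case analysis on which nodes change during the copy, using only the definitions of purity, exclusive blocks, and valid partitions (Definition~\ref{def:valid_partition}). First I fix what ``pure for its own role set'' means for a node $N(\sigma)\in\mathcal{L}$: every vector $v\in N(\sigma)$ satisfies $\sigma\subseteq\tau_v$. Equivalently, every role $r\in\sigma$ is authorized to read every $v\in N(\sigma)$. The hypothesis is that this holds for every node before the operation, and the goal is to show it still holds afterwards.

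The operation changes the lattice in only three ways, and I treat each separately.
\begin{enumerate}
\item Nodes not in $\{N(\tau)\}\cup\{N_a(\tau_j):\tau_j\in\textsf{P}_\tau\}$ are untouched, so their purity is inherited directly from the hypothesis.
\item The node $N(\tau)$ is deleted, so it is no longer a surviving node and imposes no condition.
\item Each ancestor $N_a(\tau_j)$ now stores $N_a(\tau_j)\cup N(\tau)$, and this is the only case that needs an argument.
\end{enumerate}

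For an old vector $v\in N_a(\tau_j)$, purity gives $\tau_j\subseteq\tau_v$ as before. For a newly copied vector $v\in N(\tau)$, purity of $N(\tau)$ before the copy gives $\tau\subseteq\tau_v$. Since $N_a(\tau_j)\in A_c^\tau$, we have $\tau_j\subsetneq\tau$, so $\tau_j\subseteq\tau\subseteq\tau_v$. Every role in $\tau_j$ is therefore authorized for $v$, and $N_a(\tau_j)$ remains pure.

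The argument uses only the ancestor property $\tau_j\subseteq\tau$. The coverage and disjointness conditions of a valid partition are not needed for purity; they matter for the correctness of query plans, not here. I would note this explicitly in the proof.

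The one step that needs care is that the ancestors $N_a(\tau_j)$ may already contain copied data from earlier operations, so they need not be exclusive blocks. This is why I phrase purity as the invariant ``$\sigma\subseteq\tau_v$ for all $v$ in the node'' rather than ``$\tau_v=\sigma$.'' With that formulation, the transitivity argument above goes through regardless of what each node contained before, and an induction over the bottom-up sequence of copy operations in Phase~1 extends the invariant to the whole phase. The base case is $\mathcal{L}_{\mathit{ex}}$, where every node is an exclusive block and so is trivially pure.
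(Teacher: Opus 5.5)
Your proof is correct and follows the same argument as the paper: only the ancestors in $\textsf{P}_\tau$ change, their old vectors remain authorized for $\tau_j$, and each copied vector from $N(\tau)$ is authorized for all of $\tau$, hence for $\tau_j\subsetneq\tau$. Your explicit invariant $\sigma\subseteq\tau_v$ and your remark that coverage and disjointness are not needed spell out what the paper's shorter proof uses implicitly.
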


The proof of Theorem~\ref{thm:node_purity_after_copying} is deferred to Appendix~\S\ref{app:effveda-proofs}.
Since each ancestor $N_a(\tau_j)$ in $\textsf{P}_\tau$ remains pure for $\tau_j$ after receiving vectors from $N(\tau)$, and the role sets of ancestors in $\textsf{P}_\tau$ are disjoint, the change to the query plan $\mathsf{QP}$ is ``local'': only one index in $\mathsf{QP}(r)$ ($r\in\tau$) is affected, and that index is known in advance. For any role $r\in\tau_j$, 
its query probes both $N_a(\tau_j)$ and $N(\tau)$ before the copy, but probes only the enlarged $N_a(\tau_j)$ afterwards; \emph{every other index in $\mathsf{QP}(r)$ is unchanged}.
The global cost reduction is therefore the sum of these per-role reductions over the affected roles, and depends only on the cardinality of each $\tau_j$ and the pre-copy sizes of $N(\tau)$ and $N_a(\tau_j)$.

\begin{definition}[Benefit Ratio for Copy]
\label{def:copy_benefit_effveda}
Let
$I(\cdot)$ denote an HNSW index built on vectors on a node. Given node $N(\tau)$ and its valid partition $\mathsf{P}_\tau$ ($|\mathsf{P}_\tau|\ge 2$), for each $N_a(\tau_j)\in \mathsf{P}_\tau$,
\emph{per-role gain} for each $r\in\tau_j$, denoted as $\Delta_c\!\bigl(N(\tau),N_a(\tau_j)\bigr)$, is defined as
\begin{equation}
\label{eqn:per_role_gain}
\begin{aligned}
  \Delta_c\!\bigl(N(\tau),N_a(\tau_j)\bigr)
  &= C_\theta\!\bigl(I(N_a(\tau_j)),\mathsf{efs}\bigr)
  + C_\theta\!\bigl(I(N(\tau)),\mathsf{efs}\bigr) \\
  &\quad
  - C_\theta\!\bigl(I(N_a(\tau_j)\cup N(\tau)),\mathsf{efs}\bigr).
\end{aligned}
\end{equation}
Thus, \emph{benefit ratio} over all roles in $\tau$, denoted as $f\bigl(N(\tau),\textsf{P}_\tau\bigr)$, is:
\begin{equation}
\label{eqn:copy_benefit_eff_adaptive_simplified}
  f\bigl(N(\tau),\textsf{P}_\tau\bigr)
  \;=\;
  \frac{\sum_{\tau_j\in\textsf{P}_\tau}|\tau_j|\cdot
        \Delta_c\!\bigl(N(\tau),N_a(\tau_j)\bigr)}
       {|N(\tau)|\cdot(|\textsf{P}_\tau|-1)}.
\end{equation}
\end{definition}

\begin{lemma}
\label{lem:copy_locality}
Under a uniform single-role query workload, copying $N(\tau)$ into a valid partition 
$\textsf{P}_\tau$ reduces $\textsf{AvgCost}$ by $\tfrac{1}{|\mathcal{R}|}\sum_{\tau_j\in\textsf{P}_\tau}|\tau_j|\cdot
 \Delta_c(N(\tau),N_a(\tau_j))$,
i.e., Equation~\ref{eqn:copy_benefit_eff_adaptive_simplified} is equivalent with Equation~\ref{eqn:benefit_function}.
\end{lemma}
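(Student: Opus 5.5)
We compute $\textsf{AvgCost}$ before and after the copy directly from its definition and show that only the terms for roles $r\in\tau$ change. Under a uniform single-role workload, $\textsf{AvgCost}(\mathcal{R},\mathcal{I}(\mathcal{L})) = \frac{1}{|\mathcal{R}|}\sum_{r\in\mathcal{R}}\sum_{{\tt idx}\in\textsf{QP}(r)}\textsf{Cost}_{\textsf{H}}({\tt idx},r)$. Roles are drawn with equal probability, so the empirical average over $Q$ is replaced by the expectation over roles. We therefore compare the per-role sums $S(r)=\sum_{{\tt idx}\in\textsf{QP}(r)}\textsf{Cost}_{\textsf{H}}({\tt idx},r)$ on $\mathcal{L}$ and on $\mathcal{L}^e$, where $e$ copies $N(\tau)$ into $\textsf{P}_\tau$ and deletes $N(\tau)$.

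\textbf{Step 1: roles outside $\tau$.} For $r\notin\tau$, $N(\tau)$ is not in $\textsf{QP}(r)$, since it holds no vectors authorized to $r$ and $\textsf{QP}(r)$ only collects nodes whose role set contains $r$. The ancestors $N_a(\tau_j)$ with $r\notin\tau_j$ are likewise not in $\textsf{QP}(r)$. By disjointness and coverage, every $\tau_j\subseteq\tau$, so $r\notin\tau_j$ for all $j$. Hence no node in $\textsf{QP}(r)$ is modified, and $S(r)$ is unchanged.

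\textbf{Step 2: roles inside $\tau$.} Fix $r\in\tau$. Disjointness gives exactly one $j$ with $r\in\tau_j$. Before the copy, $\textsf{QP}(r)$ contains both $N_a(\tau_j)$ and $N(\tau)$. After the copy, $N(\tau)$ is deleted. Its vectors now lie in $N_a(\tau_j)$, so $\textsf{QP}(r)$ contains $N_a(\tau_j)\cup N(\tau)$ in place of the two nodes. Every other ancestor $N_a(\tau_{j'})$ with $j'\neq j$ has $r\notin\tau_{j'}$, so it is not in $\textsf{QP}(r)$, and all other nodes of $\textsf{QP}(r)$ are untouched.

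By Theorem~\ref{thm:node_purity_after_copying}, every node involved is pure for its role set both before and after the copy, so every relevant node is pure for $r$. Definition~\ref{def:hnsw-cost} then gives cost $C_\theta(\cdot,\mathsf{efs})$ with no inflation. It follows that $S_{\mathcal{L}}(r)-S_{\mathcal{L}^e}(r)=\Delta_c(N(\tau),N_a(\tau_j))$ as in Equation~\ref{eqn:per_role_gain}.

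\textbf{Step 3: summing.} Summing over $r\in\tau$ and grouping by $j$ yields $\sum_j|\tau_j|\,\Delta_c$, and dividing by $|\mathcal{R}|$ gives the stated reduction. For the equivalence with Equation~\ref{eqn:benefit_function}, note that $\Delta S(e)=|N(\tau)|(|\textsf{P}_\tau|-1)$. The two ratios share the same numerator up to the constant factor $1/|\mathcal{R}|$. Their denominators differ only by the $+1$ regularizer, which does not affect the ranking of copies of a fixed node. Equivalence is therefore understood as proportional numerators, hence the same greedy ordering up to that normalization; I would state this caveat explicitly.

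\textbf{Main obstacle.} The delicate point is Step 2's claim that $\textsf{QP}(r)$ changes only by replacing $\{N_a(\tau_j),N(\tau)\}$ with the merged node. If $\textsf{QP}$ is a minimal cover chosen by optimization, it is not the simple membership rule. I would first argue using the definition $\textsf{QP}(r)=\{N\in\mathcal{L}\mid r\in\tau_N\}$ stated for the lattice. That rule is valid here because all nodes remain pure (Theorem~\ref{thm:node_purity_after_copying}), so no impure-node shortcut plan exists. If needed, I would restrict to the case where $N(\tau)$ and $N_a(\tau_j)$ are indexed as HNSW nodes, so that $C_\theta$ applies.
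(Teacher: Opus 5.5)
Your Steps 1--3 reproduce the paper's proof: roles outside $\tau$ keep their plan and cost, each $r\in\tau$ lies in a unique $\tau_j$ by disjointness, its plan swaps $N_a(\tau_j)$ and $N(\tau)$ for their union while staying pure by Theorem~\ref{thm:node_purity_after_copying}, and summing $|\tau_j|\,\Delta_c$ and dividing by $|\mathcal{R}|$ gives the reduction. Your caveats go beyond what the paper states and are sound: the $1/|\mathcal{R}|$ factor and the $+1$ mean ``equivalent'' can only mean proportional numerators, and the argument relies on reading $\textsf{QP}$ as the membership rule. One fix: drop the Step~1 justification that $N(\tau)$ holds no vectors authorized to $r\notin\tau$, since this fails once descendant blocks have been copied into $N(\tau)$, and rely on the membership rule alone.
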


\begin{lemma}
\label{lem:copy_positivity}
$\Delta_c(N(\tau),N_a(\tau_j))>0$ for any nonempty $N(\tau)$ and $N_a(\tau_j)$, hence
$f(N(\tau),\textsf{P}_\tau)>0$ for every $\textsf{P}_\tau\in\mathbb{P}(N(\tau))$.
\end{lemma}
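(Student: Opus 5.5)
Expanding $C_\theta$ from Equation~\ref{eqn:derivedcostmodel} in the definition of the per-role gain gives
\[
\Delta_c = a\log|N_a(\tau_j)| + a\log|N(\tau)| - a\log|N_a(\tau_j)\cup N(\tau)| + b\,\mathsf{efs} + c .
\]
The $b\,\mathsf{efs}$ and $c$ terms appear twice before the copy and once after, so one copy of each survives. The plan is to show that this expression is strictly positive. The positivity of $f$ then follows immediately, because the numerator of Equation~\ref{eqn:copy_benefit_eff_adaptive_simplified} is a sum of terms $|\tau_j|\cdot\Delta_c$ with $|\tau_j|\ge 1$ (ancestor role sets are nonempty), and the denominator $|N(\tau)|(|\textsf{P}_\tau|-1)$ is positive because $N(\tau)$ is nonempty and $|\textsf{P}_\tau|\ge 2$ by Definition~\ref{def:copy_benefit_effveda}.

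For the log terms, I would first bound the size of the union by $|N_a(\tau_j)\cup N(\tau)|\le |N_a(\tau_j)|+|N(\tau)|$. With $x=|N_a(\tau_j)|\ge 1$ and $y=|N(\tau)|\ge 1$, I then use $x+y\le xy+1\le 2xy$. It follows that
\[
\log xy - \log|N_a\cup N|\ \ge\ \log xy - \log(x+y)\ \ge\ -\log 2 .
\]
Hence
\[
\Delta_c \ \ge\ b\,\mathsf{efs} + c - a\log 2 .
\]

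The main obstacle is the constant: this last bound is positive only when $b\,\mathsf{efs}+c > a\log 2$. I would handle it in one of two ways. The first is to note that when $x,y\ge 2$ we have $x+y\le xy$, so the log part is already $\ge 0$ and $\Delta_c\ge b\,\mathsf{efs}+c>0$, provided $b,c\ge 0$ and not both zero, which calibrated coefficients satisfy. The remaining edge case, where one node has a single vector, costs at most $a\log 2$, and I would argue it is dominated by the per-query overhead; in any case such tiny nodes fall below $\Lambda$ and are treated as leftovers rather than HNSW indices. The second is to state the assumption explicitly: calibrated $\theta$ has $a,b,c\ge 0$ and $b\,\mathsf{efs}\ge a\log 2$, which holds since $\mathsf{efs}=\alpha k\ge 5$ and beam-search cost dominates descent. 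I would use the first route as the main proof and mention the second as the fallback. Either way, $\Delta_c>0$, so every summand in the numerator is positive and $f(N(\tau),\textsf{P}_\tau)>0$ for every valid partition.
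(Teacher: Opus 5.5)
Your proposal is correct and follows essentially the same route as the paper's proof: expand $C_\theta$, use $n_an_c\ge n_a+n_c$ for node sizes at least $2$ so that the $a\log$ term is nonnegative, and rely on $b\,\mathsf{efs}+c>0$, with the single-vector case dismissed by appeal to $\Lambda$ exactly as the paper does (``$2\ll\Lambda$''). You are slightly more careful than the paper. You bound $|N_a(\tau_j)\cup N(\tau)|$ by the sum instead of assuming disjointness, you treat $|N(\tau)|(|\textsf{P}_\tau|-1)$ as the denominator (the paper calls it the numerator), and you make the $-a\log 2$ edge-case cost explicit. Two small fixes remain: your fallback assumption should be the strict $b\,\mathsf{efs}+c>a\log 2$, and both routes implicitly need $a\ge 0$.
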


{Proofs for Lemma~\ref{lem:copy_locality} and Lemma~\ref{lem:copy_positivity} are deferred to Appendix~\S\ref{app:effveda-proofs}.
Lemma~\ref{lem:copy_locality} is the source of \effname's efficiency:
Equation~\ref{eqn:copy_benefit_eff_adaptive_simplified} depends only on
$|N(\tau)|$, $|\mathsf{P}_\tau|$, and the node sizes and role-set cardinalities of the ancestors in
$\textsf{P}_\tau$. Moreover, each time, the algorithm scores all $|\textsf{P}_\tau|$ pairs at once, whereas
\name re-derives a full query plan
($\mathcal{O}(|\mathcal{R}|\!\cdot\!|\mathcal{N}|)$) after every single copy.}

\noindent\textbf{Default two-way valid partitions.} 
{\effname defaults to
two-way valid partitions since they add exactly one extra copy of $N(\tau)$,
the minimum nonzero overhead. Larger partitions expose gains to more ancestors
but each extra ancestor costs a full copy and exhaustive enumeration is
combinatorial. \effname therefore first scores one ancestor at a time. For each ancestor $N_a(\tau_j)$, it records the benefit of copying $N(\tau)$ into $N_a(\tau_j)$. If its complementary
ancestor $N_a(\tau_{j^\prime})$ exists in $A_c^\tau$ (i.e., $\tau_{j^\prime}\cup \tau_j=\tau$), it adds up the benefits of copying $N(\tau)$ into $N_a(\tau_j)$ and $N_a(\tau_{j^\prime})$. If the
highest-benefit candidate contains two ancestors, \effname uses it
directly; otherwise, \effname keeps that
ancestor as the seed and greedily adds further ancestors whose role sets are
disjoint subsets of the uncovered roles.
If $\mathsf{P}_\tau$ covers all roles in $\tau$, the source node $N(\tau)$ is deleted after being copied into the ancestors in $\mathsf{P}_\tau$.
If a residual role set, denoted as $\tau^{\mathsf{res}}$, remains uncovered after the greedy extension, $N(\tau)$ is relabeled
to $\tau^{\mathsf{res}}$. This moves no vectors,
contributes zero gain for roles in $\tau^{\mathsf{res}}$, and preserves
purity, while the copied ancestors still obtain the local reduction of
Lemma~\ref{lem:copy_locality}.}

\begin{algorithm}[t]
\caption{\textbf{\effname-Copy}}
\label{alg:eff-veda-copy}
\begin{algorithmic}[1]
\Require exclusive lattice $\mathcal{L}_\mathit{ex}$ and \textsf{SA} budget $\beta$.
\State $\mathcal{L}\gets \mathcal{L}_\mathit{ex}$;
       $\mathit{buf}\gets (\beta-1)\cdot|\mathcal{D}|$
       \label{ln:copy-init}
\If{$\mathit{buf}\le 0$} \Return $\mathcal{L}$ \EndIf
\For{$\ell \gets \mathsf{Depth}(\mathcal{L}_\mathit{ex})$ \textbf{down to} $2$}
       \label{ln:copy-layer}
       \State $\mathbb{P}_\ell\gets\emptyset$ \Comment{Best valid partitions for all nodes on layer $\ell$}
       \ForAll{$N(\tau)$ on layer $\ell$ of $\mathcal{L}$}
       \label{ln:copy-scan}
       \If{$|N(\tau)|>\mathit{buf}$} \textbf{continue} \EndIf
       \label{ln:copy-skip}
       \State $A_c^\tau\gets\{N_a(\tau')\in\mathcal{L}: \tau'\subsetneq\tau\}$ 
       \label{ln:copy-anc}
       \State $(\textsf{P}_\tau,f_c)\gets
              \textsf{FindBestPartition}(N(\tau),A_c^\tau,\mathcal{L},\mathit{buf})$ \Comment{Algorithm~\ref{alg:eff-veda-copy-find-best-partition}; $\tau^{\mathsf{res}}$ for $N(\tau)$ (if exists) included in $\mathsf{P}_\tau$}
       \label{ln:copy-find}
       \If{$\textsf{P}_\tau\neq\emptyset$}~$\mathbb{P}_\ell\gets\mathbb{P}_\ell\cup\{(N(\tau),\textsf{P}_\tau,f_c)\}$
       \EndIf
       \EndFor
       \State Sort $\mathbb{P}$ by $f_c$ in descending order 
       \label{ln:copy-sort}
       \ForAll{$(N(\tau),\textsf{P}_\tau,f_c)\in\mathbb{P}$}
       \label{ln:copy-commit}
       \State $\Delta S\gets |N(\tau)|\cdot(|\textsf{P}_\tau|-1)$
       \label{ln:copy-ds}
       \If{$\Delta S>\mathit{buf}$} \textbf{continue} \EndIf
       \ForAll{$\tau_j\in\textsf{P}_\tau$}
       \label{ln:copy-dist}
       \If{$N_a(\tau_j)\in\mathcal{L}$} \Comment{$\tau_j$ is not $\tau^{\mathsf{res}}$}
              \State $N_a(\tau_j)\gets N_a(\tau_j)\cup N(\tau)$
       \Else~$N_a(\tau_j)\gets N(\tau)$ \Comment{Relabeling}
       \label{ln:copy-relabel}
       \EndIf
       \EndFor
       \State Delete $N(\tau)$ from $\mathcal{L}$;
              $\mathit{buf}\gets\mathit{buf}-\Delta S$
       \label{ln:copy-delete}
       \EndFor
\EndFor
\State \Return $\mathcal{L}$
\end{algorithmic}
\end{algorithm}

{Algorithm~\ref{alg:eff-veda-copy} processes the lattice bottom-up until layer~$2$, since singleton role sets do not have ancestors. It
initializes $\mathcal{L}$ from $\mathcal{L}_\mathit{ex}$ and the storage
buffer $\mathit{buf}$ from $\beta$ and $|\mathcal{D}|$. On each layer~$\ell$,
every node $N(\tau)$ that fits the remaining buffer is scored against its
ancestor set $A_c^\tau$, and its best partition  $\textsf{P}_\tau$, as well as the benefit ratio, denoted as $f_c$, is
collected in $\mathbb{P}_\ell$. As we process the lattice bottom-up and all nodes on $\ell$ are scored against
the {same unmodified upper layers}, their $f_c$ values are directly
comparable. $\mathbb{P}_\ell$ is then committed in descending $f_c$ order:
each $N(\tau)$ that still fits the buffer is copied into every ancestor in
$\textsf{P}_\tau$ and the source $N(\tau)$ is deleted. If a residual role set
$\tau^{\mathsf{res}}$ remains, a fresh key $\tau^{\mathsf{res}}$ is created
with the vectors of $N(\tau)$, equivalent to relabeling $N(\tau)$
rather than deleting it.}

\noindent\textbf{Best partition selection. }
{Algorithm~\ref{alg:eff-veda-copy-find-best-partition}
(Appendix~\S\ref{app:effveda-complementary-algorithms}) implements
Line~\ref{ln:copy-find} of Algorithm~\ref{alg:eff-veda-copy} without
enumerating all valid partitions of $N(\tau)$, i.e., $\mathbb{P}(N(\tau))$. For each ancestor
$N_a(\tau_j)\in A_c^\tau$, it forms the complement
$\tau_j'=\tau\setminus\tau_j$. If $N_a(\tau_j')$ exists in $\mathcal{L}$,
$\{N_a(\tau_j),N_a(\tau_j')\}$ is a two-way candidate; otherwise
$\{N_a(\tau_j)\}$ is recorded alone. Each candidate is scored with
Equation~\ref{eqn:copy_benefit_eff_adaptive_simplified} and only the best is
kept. If the best partition $\textsf{P}_\tau$ has a single ancestor $N_a(\tau_s)$ ($s$ for seed),
the residual $\tau^{\mathsf{res}}=\tau\setminus\tau_s$ is covered greedily by
adding further disjoint ancestors (largest role set cardinality first) until the buffer
is exhausted or $\tau^{\mathsf{res}}$ is empty. The final $\textsf{P}_\tau$
and any remaining $\tau^{\mathsf{res}}$ are recorded in $\mathbb{P}_\ell$.}


\begin{example}
\label{ex:copy_walkthrough}
{On the lattice in Figure~\ref{fig:lattice_ex}, Algorithm~\ref{alg:eff-veda-copy}
starts at layer~$3$ with $N_c(\{r_1,r_2,r_3\})$. \textsf{FindBestPartition}
pairs $\tau'{=}\{r_1,r_2\}$ with its complement $\{r_3\}$, both in $A_c^\tau$,
giving $\textsf{P}_\tau=\{N_a(\{r_1,r_2\}),N_a(\{r_3\})\}$ with benefit
$(2\Delta_c(N_c(\{r_1,r_2,r_3\}),N_a(\{r_1,r_2\}))$
$+\Delta_c(N_c(\{r_1,r_2,r_3\}),$ $N_a(\{r_3\})))/|N_c(\{r_1,r_2,r_3\})|$; other
ancestors yield analogous candidates. If this partition wins,
$N_c(\{r_1,r_2,r_3\})$ is copied into $N_a(\{r_1,r_2\})$ and $N_a(\{r_3\})$ and then deleted. Queries with
$r_3$ drops from four indices to three, saving
$\Delta_c(N_c(\{r_1,r_2,r_3\}),N_a(\{r_3\}))$.}
\end{example}

\smallskip\noindent\textbf{\textit{Complexity.}} 
\textsf{FindBestPartition} performs $|A_c^\tau|$ iterations with
$\mathcal{O}(1)$ work for each node, i.e., $\mathcal{O}(|\mathcal{N}|)$, where $|\mathcal{N}|$ is the number of nodes in $\mathcal{L}_\mathit{ex}$. The algorithm iterates through all nodes on each layer in $\mathcal{O}(|\mathcal{N}|)$ time in total. The total complexity is
$\mathcal{O}(|\mathcal{N}|^2)$. 

\subsection{Phase 2: Merging}
\label{sec:eff_veda_merging}

The copy phase ensures each node in $\mathcal{L}$ remains pure to their role sets, yet leaves many of them below the indexing 
threshold $\Lambda$. The merging phase grows these pure nodes without consuming storage by absorbing nodes with overlapping role sets (i.e., ancestors, descendants, and siblings; see Appendix~\S\ref{app:lattice-relations}), until
they reach the indexing threshold $\Lambda$. 

Different from copy operations in \effname, merge introduces impurity, thus the benefit score must account for an
inflated beam width (Definition~\ref{def:impurity-factor} in \S\ref{sec:ac-indexing}). 
\effname leverages a lightweight \emph{virtual decomposition} of each node on $\mathcal{L}$, where only roles on the virtual decomposition would be routed to the node, no matter what exclusive blocks the node contains. 

\begin{definition}[Virtual Decomposition]
  \label{def:virtual_decomp}
  Denote each node that survives the copy phase as $N^\rho(\tau_\rho)$ and the
  post-copy lattice as $\mathcal{L}_\rho$. Define $\mathbb{V}$ as the virtual
  decomposition of each current node in $\mathcal{L}$ in terms of frozen
  post-copy nodes from $\mathcal{L}_\rho$.
  At the beginning of the merge phase, $\mathcal{L}$ is initialized with $\mathcal{L}_\rho$; for each node $N(\tau)\in\mathcal{L}$ (i.e., $N(\tau) = N^\rho(\tau_\rho)$ for now), its virtual decomposition is initialized as $\mathbb{V}(N(\tau))=\{N^\rho(\tau_\rho)\}$.
  Whenever $N(\tau)$ absorbs $N(\tau^\prime)$,
  $\mathbb{V}(N(\tau))$ is set to $\mathbb{V}(N(\tau))\cup\mathbb{V}(N(\tau^\prime))$, and
  $\mathbb{V}(N(\tau^\prime))$ is discarded.
\end{definition}

\effname leverages virtual decompositions for estimating the benefit of each operation while tracking changes in nodes in $\mathcal{L}$, without rebuilding any query plan. Since nodes produced by the copy phase are pure, a merged node is a union of pure \emph{virtual decomposition}  whose constituent sizes determine per-role impurity.
Since each merge deletes the absorbed node, every frozen
$N^\rho(\tau_\rho)$ in $\mathcal{L}_\rho$ belongs to exactly one
$\mathbb{V}(N(\tau))$ at any point during the merge phase; the family
$\{\mathbb{V}(N(\tau))\}_{N(\tau)\in\mathcal{L}}$ therefore partitions
$\mathcal{L}_\rho$, and the \emph{inherited plan}, i.e., route each $N^\rho(\tau_\rho)$ to its unique current container, remains a
valid cover of $\mathcal{D}(r)$ for every $r$.

\begin{theorem}[Inherited Routing Invariant]\label{th:routed_query_for_roles}
  Define $\Pi(N(\tau))$ as the set of roles whose post-copy query components are routed to $N(\tau)$ by the merge phase. Then for every current node $N(\tau)\in\mathcal{L}$,
$\Pi(N(\tau))=\bigcup_{N^\rho(\tau_\rho)\in\mathbb{V}(N(\tau))}\tau_\rho.$
\end{theorem}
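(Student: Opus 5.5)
We prove the invariant by induction on the number $t$ of merge operations performed in the merge phase. The routing rule is the one described just before the statement: each frozen post-copy node $N^\rho(\tau_\rho)\in\mathcal{L}_\rho$ is routed to the unique current node whose virtual decomposition contains it. Under this rule $\Pi(N(\tau))$ is the set of roles $r$ for which some $N^\rho(\tau_\rho)\in\mathbb{V}(N(\tau))$ lies in the post-copy query plan of $r$. The first step is to pin down that post-copy plan. By Theorem~\ref{thm:node_purity_after_copying}, applied inductively over the copy phase, every surviving node $N^\rho(\tau_\rho)$ is pure for $\tau_\rho$. The post-copy plan $\textsf{QP}(r)$ consists of exactly those $N^\rho(\tau_\rho)$ with $r\in\tau_\rho$. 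Purity gives the containment that a node is searched only by roles in its label. For the other direction we use the copy-phase bookkeeping: when $N(\tau)$ is deleted its data move to ancestors with disjoint role sets covering $\tau$, or to the relabeled residual node. So each post-copy node is needed by, and routed to, precisely the roles in its label. In short, the set of roles routed to $N^\rho(\tau_\rho)$ is $\tau_\rho$.

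\textbf{Base case ($t=0$).} Here $\mathcal{L}=\mathcal{L}_\rho$ and $\mathbb{V}(N(\tau))=\{N^\rho(\tau_\rho)\}$ with $N(\tau)=N^\rho(\tau_\rho)$. The claim reduces to $\Pi(N^\rho(\tau_\rho))=\tau_\rho$, which is the first step.

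\textbf{Inductive step.} Suppose the invariant holds before a merge in which $N(\tau)$ absorbs $N(\tau')$. By Definition~\ref{def:virtual_decomp}, the new decomposition is $\mathbb{V}(N(\tau))\cup\mathbb{V}(N(\tau'))$ and $\mathbb{V}(N(\tau'))$ is discarded; every other node's decomposition is untouched. The decompositions partition $\mathcal{L}_\rho$, since each frozen node lies in exactly one of them and merges only move whole blocks. Under inherited routing, the components formerly sent to $N(\tau')$ are therefore now sent to $N(\tau)$, and nothing else changes. Hence
\[
\Pi_{\text{new}}(N(\tau))=\Pi_{\text{old}}(N(\tau))\cup\Pi_{\text{old}}(N(\tau')),
\]
and the induction hypothesis turns the right-hand side into the union of $\tau_\rho$ over the merged decomposition. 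For every other surviving node, both sides of the claimed equality are unchanged. The absorbed node leaves $\mathcal{L}$, so no statement about it is required.

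\textbf{Main obstacle.} The delicate point is the first step, namely the exact equality of roles routed to $N^\rho(\tau_\rho)$ with $\tau_\rho$, not just containment. Two situations need checking: relabeled residual nodes, and the fact that a pure ancestor holding copies of a descendant still serves only its own label. We would handle this by a separate induction over the committed copy operations of Algorithm~\ref{alg:eff-veda-copy}. The invariant for that induction is that each role $r$ is routed to exactly the current nodes whose labels contain $r$. Disjointness of a valid partition (Definition~\ref{def:valid_partition}) ensures each $r\in\tau$ is reassigned to exactly one $N_a(\tau_j)$ with $r\in\tau_j$. The relabeling case assigns the residual roles to the node labeled $\tau^{\mathsf{res}}$.
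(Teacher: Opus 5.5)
Your proof is correct and follows the paper's route: induction over merges, with the base case supplied by the natural post-copy plan (each $N^\rho(\tau_\rho)$ probed by exactly the roles in $\tau_\rho$) and the inductive step supplied by the union update of Definition~\ref{def:virtual_decomp}. Your sub-induction over copy operations goes further than the paper, which takes this natural plan as the initial routing by definition and cites purity only for its validity; note that the containment $\Pi(N^\rho(\tau_\rho))\subseteq\tau_\rho$ comes from that routing rule itself, not from purity, since a pure ancestor holding copies of descendants still contains vectors visible to roles outside its label.
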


The proof of Theorem~\ref{th:routed_query_for_roles} is deferred to Appendix~\S\ref{app:effveda-proofs}.
Theorem~\ref{th:routed_query_for_roles} characterizes the routing that
\effname maintains for merge scoring; it is a \emph{valid} plan, not necessarily the minimum-cost one. 
Since a merge will be admitted only when the re-evaluated benefit is positive (similar with \name),
the inherited-plan cost is non-increasing over the merge phase, and since the inherited plan is one
feasible cover for roles in $\Pi(N(\tau))\cup \Pi(N(\tau^\prime))$, we have
\[
\textsf{AvgCost}^{\textsf{opt}}_{\text{final}}
\;\le\;\textsf{AvgCost}^{\textsf{inh}}_{\text{final}}
\;\le\;\textsf{AvgCost}^{\textsf{inh}}_{0}
\;=\;\textsf{AvgCost}_{\text{post-copy}}.
\]
Thus \effname needs no probabilistic claim that roles are routed to merged
nodes. Even when \S\ref{sec:qplan} might select a cheaper alternative container
for some block, the merge phase has already guaranteed no regression relative
to the post-copy layout, and the final plan can only improve on that bound.

\begin{definition}[Impurity from Virtual Decomposition]\label{def:impurity_of_merged_nodes_effveda}
For a current node $N(\tau)$ and a routed role $r\in\Pi(N(\tau))$, define the pure size for $r$ in $N(\tau)$ as
$\omega(N(\tau), r)=
\sum_{N^\rho(\tau_\rho)\in\mathbb{V}(N(\tau)): r\in\tau_\rho}
|N^\rho(\tau_\rho)|.$ The impurity of $N(\tau)$ for role $r$ is
$\lambda_{N(\tau)}^r=
\frac{|N(\tau)|}{\omega(N(\tau), r)}$.
\end{definition}


\begin{definition}[Merge Benefit]
\label{def:merge_benefit_effveda}
Consider two nodes $N(\tau), N(\tau^\prime)$ for merge. 
For any node $N_i$ with roles routed to $N_i$ as $\Pi(N_i)$, define
$H(N_i)
  = \sum_{r\in \Pi(N_i)}\mathsf{Cost}_{\textsf{H}}(\mathcal{I}(N_i), r)
   = |\Pi(N_i)|\,a\log_2(|N_i|+1)
  + \sum_{r\in \Pi(N_i)}\bigl(b\,\lambda_{N_i}^r\mathsf{efs}+c\bigr),$
where $\lambda_N^r$ is computed as in
Definition~\ref{def:impurity_of_merged_nodes_effveda}. The benefit of the
merge is computed as $H(N(\tau))+H(N(\tau^\prime))-H(N(\tau)\cup N(\tau^\prime)).$
\end{definition}

To evaluate a merge, \effname inspects $\mathbb{V}$ to compute
inflation ratios $\lambda$ for the impacted roles in
$\Pi(N(\tau))\cup\Pi(N(\tau^\prime))$. By the same locality argument as
Lemma~\ref{lem:copy_locality}, all other roles keep the same inherited plan.
Since merging
has zero storage overhead, its benefit ratio is exactly the reduction in
$\textsf{AvgCost}$ \emph{under the inherited routing}.

\begin{algorithm}[t]
\caption{\textbf{\effname-Merge}}
\label{alg:eff-veda-merge}
\begin{algorithmic}[1]
\Require post-copy lattice $\mathcal{L}_\rho$; indexing threshold $\Lambda$.
\State $\mathcal{L}\gets\mathcal{L}_\rho$, $\mathbb{V}.\mathsf{initialize}(\mathcal{L})$, $\Pi.\mathsf{initialize}(\mathcal{L})$ \Comment{Definition~\ref{def:virtual_decomp}}
          \label{ln:mg-init}
\State $i\gets 0$, get $\mathcal{N}_\mathit{desc}$: sort nodes in $\mathcal{L}$ by size, descending
          \label{ln:mg-sort}
\While{$i\le|\mathcal{N}_\mathit{desc}|$} \Comment{Prioritize large unindexable nodes}
          \label{ln:mg-loop}
     \State $N(\tau)\gets\mathcal{N}_\mathit{desc}[i]$
     \If{$N(\tau)\notin\mathcal{L}$ \textbf{or} $|N(\tau)|\ge\Lambda$}
          $i\gets i{+}1$; \textbf{continue}
          \label{ln:mg-skip}
     \EndIf
     \State $\textsf{RL}\gets\mathsf{get\_candidates}(N(\tau),\mathcal{L})$, $\mathbb{B}_m\gets\emptyset$
          \label{ln:mg-rel}
     \For{$N(\tau^\prime)\in\textsf{RL}$} \Comment{Compute benefit with Definition~\ref{def:merge_benefit_effveda}} 
          \State $\Delta_m\gets \mathsf{eff\_merge\_benefit}(N(\tau),N(\tau'), \Pi)$ 
          \State $\mathbb{B}_m\gets\mathbb{B}_m\cup\{(N(\tau'),\Delta_m)\}$
     \EndFor
     \State Get $\mathbb{B}_m^{\mathit{desc}}$: sort $\mathbb{B}_m$ by $\Delta_m$ in descending order
          \label{ln:mg-score}
     \ForAll{$(N(\tau'),\Delta_m)\in\mathbb{B}_m^{\mathit{desc}}$ with $N(\tau')\in\mathcal{L}$}\label{ln:mg-recheck}
     \If{$\Delta_m\le 0$} \textbf{break} \EndIf\label{ln:mg-recheck}
     \If{first merge}~$\mathsf{MergeNodes}(N(\tau), N(\tau'), \mathbb{V}, \Pi)$
     \Else~Renew $\Delta_m$, if $>0$, $\mathsf{MergeNodes}(N(\tau), N(\tau'), \mathbb{V}, \Pi)$
     \EndIf
     \If{$|N(\tau)|\ge\Lambda$} \textbf{break} \EndIf\label{ln:mg-thresh}
     \EndFor
     \If{$|N(\tau)|\geq\Lambda$}~$i\gets i{+}1$ \Comment{Process next node}\label{ln:mg-next}
     \Else~$\mathcal{N}_\mathit{desc}[i]\gets N(\tau)$ \Comment{Process $N(\tau)$ again}
          \label{ln:mg-retry}
     \EndIf
\EndWhile
\State \Return $\mathcal{L}$

\Function{MergeNodes}{$N(\tau), N(\tau'), \mathbb{V}, \Pi$}
\State Merge: $N(\tau)\gets N(\tau)\cup N(\tau')$ \Comment{Definition~\ref{def:virtual_decomp}}
\State $\mathbb{V}.\mathsf{update}()$, $\Pi.\mathsf{update}()$, delete $N(\tau')$ from $\mathcal{L}$, $\mathbb{V}$, and $\Pi$ 
\EndFunction
\end{algorithmic}
\end{algorithm}

{Algorithm~\ref{alg:eff-veda-merge} describes the merge phase. It initializes $\mathcal{L}$,
$\mathbb{V}$, and $\Pi$ from the post-copy lattice $\mathcal{L}_\rho$
(Line~\ref{ln:mg-init}) and visits nodes in descending size. Large
unindexable nodes near $\Lambda$ are grown first; indexable or
already-absorbed nodes are skipped (Line~\ref{ln:mg-skip}). For each $N(\tau)$,
ancestors, descendants, and siblings are scored by
Definition~\ref{def:merge_benefit_effveda} and consumed in descending benefit,
ignoring candidates already removed from $\mathcal{L}$ and stopping at the
first nonpositive benefit (Line~\ref{ln:mg-recheck}). The first merge uses the
cached score; later merges re-score $\Delta_m$ since earlier merges may have
changed $\Pi$ and $\mathbb{V}$, and apply only if the updated benefit stays
positive. Processing advances once $|N(\tau)|\geq\Lambda$
(Line~\ref{ln:mg-next}); otherwise the node is revisited with a fresh
candidate set (Line~\ref{ln:mg-retry}).}

\begin{example}
Continuing Example~\ref{ex:copy_walkthrough}, after Phase~1 the lattice holds
$N^\rho(\{r_1\})$, $N^\rho(\{r_2\})$, and $N^\rho(\{r_3\})$ on layer~1 alongside
$N^\rho(\{r_1,r_2\})$, $N^\rho(\{r_1,r_3\})$, and $N^\rho(\{r_2,r_3\})$ on layer~2, each with
$\mathbb{V}(N(\tau))=\{N^\rho(\tau_\rho)\}$. Suppose $\Lambda=15{,}000$, and suppose we merge
$N^\rho(\{r_2,r_3\})$ and $N^\rho(\{r_3\})$. The merged node includes $N^\rho(\{r_2,r_3\})$ and $N^\rho(\{r_3\})$, so $r_3$ reads it with $\lambda^{r_3}=1$ while $r_2$ reads it with $\lambda^{r_2}=5$; both factors are obtained from
$\mathbb{V}$ alone, without touching $\textsf{QP}$.
\end{example}

\smallskip\noindent\textbf{Choice of \name and \effname.}
\name explores the full descendant--ancestor candidate space and yields slightly
lower \textsf{QA} (Exps~\ref{exp:qa-vs-sa} and~\ref{exp:indexing-threshold}) at $\mathcal{O}(|\mathcal{N}|^5|Q|)$
construction cost; \effname reaches a comparable layout in
$\mathcal{O}(|\mathcal{N}|^2)$ per phase. We recommend \effname when
the number of distinct role combinations is large in the dataset or when policies change frequently enough that
re-indexing dominates (Appendix~\ref{app:dynamic-workloads}), and \name
when the number of roles is small and static and the last few percent of \textsf{QA} matter.

  \section{Query Answering}
\label{sec:query-execution}

This section describes query answering with the indices constructed by \name or \effname. 
We note that, after obtaining $\mathcal{L}$ with \name or \effname, indexable nodes in $\mathcal{L}$ ($\geq \Lambda$) materialize HNSW indices, while unindexable nodes ($< \Lambda$) are decomposed into exclusive blocks and kept as leftovers for linear scan. Given a query $q=(\mathbf{x},r)$, two tasks remain:
\textit{(i)}~select a minimal coverage for each role $r$ such that $\mathcal{D}(r)$, the authorized data for $r$ is covered (\S\ref{sec:qplan}); and
\textit{(ii)}~jointly search across the selected HNSW indices and leftovers while filtering
unauthorized data out of impure indices (\S\ref{sec:coordinated-search}).

\begin{figure}
    \centering
    \includegraphics[width=0.498\textwidth]{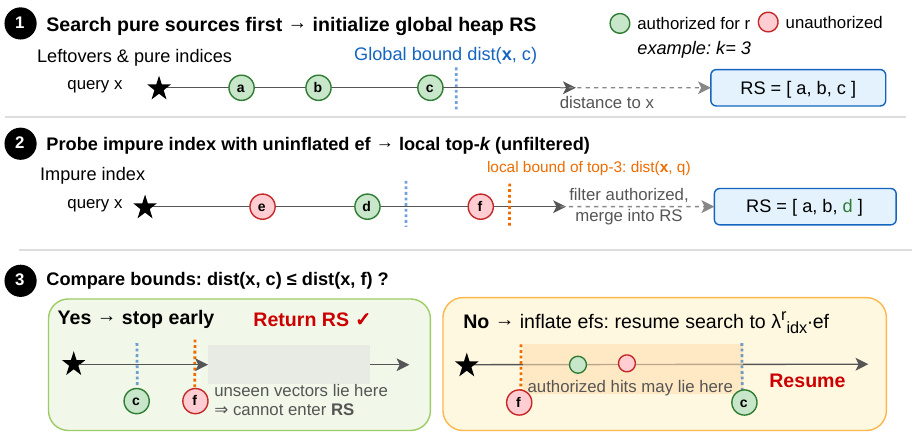}
    \caption{Illustration of coordinated search.}
    \label{fig:coord_search_example}
\end{figure}

\begin{algorithm}[h!]
\caption{\textbf{Coordinated Search}}
\label{alg:coordinated-search}
\begin{algorithmic}[1]
    
\Require query $q$ with role $r$, query plan $\textsf{QP}(r)$, authorized data IDs $\mathcal{D}_\mathit{rids}(r)$ for filtering, search parameter $k$ and $\textsf{efs}$.

\State $\mathcal{U}(r), \mathcal{I}(r) \gets \textsf{QP}(r)$, $\textsf{RS}\gets\textsf{BruteForceSeasrch}(\mathcal{U}(r), k)$ 

Get pure indices$\mathcal{I}^*_\mathit{pure}(r)$ and impure indices $\mathcal{I}^*_\mathit{impure}(r)$.

\State $\mathcal{I}^*_\mathit{impure}(r)\gets \mathcal{I}^(r)\backslash\mathcal{I}^*_\mathit{pure}(r)$


\For{each ${\tt idx}\in \mathcal{I}_\mathit{pure}^*(r)$}
    \State $\textsf{RS}_{{\tt idx}}\gets \textsf{HNSW}({\tt idx}, k, r, \textsf{efs})$, $\textsf{RS}\gets \texttt{merge}(\textsf{RS}_{{\tt idx}}, \textsf{RS})$
\EndFor

\For{each ${\tt idx}\in \mathcal{I}_\mathit{impure}^*(r)$}  
\State  $v_k\gets \textsf{RS}[k]$, $\textsf{RS}_{{\tt idx}}\gets \textsf{HNSW}({\tt idx}, k, r, \textsf{efs})$ 



\State $\textsf{RS}\gets \texttt{merge}(\textsf{RS}_{{\tt idx}}, \textsf{RS}, \mathcal{D}_\mathit{rids}(r), k)$
\If{$\textsf{dist}(\textsf{RS}_{{\tt idx}}[k], \mathbf{x})\geq \textsf{dist}(v_k, \mathbf{x})$}~ \textbf{break} 
    \EndIf

    \State $i\gets 0$, ${\tt idx}.\textsf{PQ.clear}()$, compute $\lambda^r_{{\tt idx}}$\Comment{Definition~\ref{def:impurity-factor}} 
    
    \While{$i < (\lambda^r_{{\tt idx}} - 1)\textsf{efs}$}

    \State ${\tt idx}.\textsf{PQ.add}({\tt idx}.\textsf{NextVector}())$, $i\gets i + 1$
    \EndWhile
    \State $\textsf{RS}_{{\tt idx}}\gets {\tt idx}.\textsf{PQ}()$, $\textsf{RS}\gets \texttt{merge}(\textsf{RS}_{{\tt idx}}, \textsf{RS}, \mathcal{D}_\mathit{rids}(r), k)$
\EndFor

\State \textbf{return} $\textsf{RS}$.\textsf{top}($k$)

    

\end{algorithmic}
\end{algorithm}

\subsection{Query Plan Construction}
\label{sec:qplan}

Given lattice $\mathcal{L}$ produced by \name or \effname, the query plan
identifies nodes that are selected for queries with each role $r\in\mathcal{R}$. We write the plan as
$\textsf{QP}(r)=(\mathcal{I}(r),\mathcal{U}(r))$, where $\mathcal{I}(r)$ is the
set of HNSW indices selected for $r$ and $\mathcal{U}(r)$ is the set of
leftover vectors. $\textsf{QP}(r)$ is valid if it covers $\mathcal{D}(r)$, all authorized
data for $r$; among valid plans we prefer the one with minimum estimated cost,
i.e., a minimal coverage of $\mathcal{D}(r)$.

Let $\mathcal{N}^{\mathit{ex}}(r)=
\{N^{\mathit{ex}}(\tau)\in\mathcal{L}_{\mathit{ex}}\mid r\in\tau\}$
be the exclusive blocks authorized for $r$. Due to copy operations, each
exclusive block may present in several nodes in $\mathcal{L}$. Define $\Phi$, a mapping from exclusive blocks to nodes in $\mathcal{L}$, as follows:
$\Phi(N^{\mathit{ex}}(\tau))=
\{N(\tau')\in\mathcal{L}\mid N^{\mathit{ex}}(\tau)\subseteq N(\tau')\}.$ 
$\Phi$ specifies the ``locations’’ of each $N^\mathit{ex}(\tau)$ in $\mathcal{L}$.
For role $r$, if any $N^\mathit{ex}(\tau)\in\mathcal{N}^{\mathit{ex}}(r)$ exists only in one node $N(\tau')$ in $\mathcal{L}$, $N(\tau')$ is mandatory for $\textsf{QP}(r)$, i.e., 
\[
\{N(\tau')\in\mathcal{L}\mid
N(\tau')\in\Phi(N^{\mathit{ex}}(\tau))~\text{and}~
|\Phi(N^{\mathit{ex}}(\tau))|=1\}\subseteq\textsf{QP}(r).
\]
After fixing mandatory nodes, each remaining block $N^\mathit{ex}(\tau')$ in $\mathcal{N}^{\mathit{ex}}(r)$ is then covered by choosing
among $\Phi(N^{\mathit{ex}}(\tau'))$, which can be solved
with ILP (Algorithm~\ref{alg:query_plan_ILP} in Appendix~\S\ref{app:query-answering-algorithms}) or approximately with a greedy heuristic, i.e., prioritizing nodes that cover more remaining blocks (Algorithm~\ref{alg:query_plan_greedy} in Appendix~\S\ref{app:query-answering-algorithms}).

\begin{table*}[h]
    \small
    \centering
    \caption{Summary of Datasets. Distribution columns report shifted Zipf parameters $(s,\alpha)$; larger $\alpha$ yields higher skew.}
    \label{tab:dataset_summary}
    \setlength{\tabcolsep}{3.5pt}
        \begin{tabular}{@{}l | r | r | l | l | c | c | c | c | c  c@{}}
            \toprule
            \textbf{Dataset} & \makecell{\textbf{Data Size}} & \textbf{Dim.} & \textbf{Type} & \makecell{\textbf{AC}\\\textbf{Rules}} & \makecell{\textbf{\#~Roles}} & \makecell{\textbf{\#~Permissions}} & \makecell{\textbf{Permission}\\\textbf{Dist.}~\textbf{$(s',\alpha')$}} & \makecell{\textbf{Block}\\\textbf{Dist.}~\textbf{$(s,\alpha)$}}
            & \makecell{\textbf{SA w/}\\\textbf{Oracle}~\textbf{Idx}}\\ 
            \midrule
            SIFT-1M~\cite{sift}   & 1,000,000 & 128 & Image & Medium  & 82 & 757 & (2, 1.5) & (1, 1.5)   
            & 11.423\\  
            PAPER~\cite{paper}     & 2,029,997 & 200 & Text  & Hard   & 87 & 676 & (2, 1.5) & (1, 2)   & 5.255\\  
            AMZN~\cite{amazon_books_reviews,amazon_books_reviews2}      &   212,404 & 384 & Text  & Easy   & 64 & 641 & (1, 1.5) & (1, 2)   & 15.084\\  
            \bottomrule
        \end{tabular}
    \end{table*}

\subsection{Coordinated Query Execution}
\label{sec:topk-exec}
\label{sec:coordinated-search}

{$\textsf{QP}(r)$ might mix pure indices, impure indices, and leftovers
(\S\ref{sec:ac-indexing}). Leftovers and pure indices hold only authorized
data and are searched directly; impure indices require filtering and inflated
parameters (Definition~\ref{def:impurity-factor}). The straightforward strategy
searches each component independently, e.g., scanning $\mathcal{U}(r)$, running standard
HNSW on pure indices and inflated HNSW with
$k'=\lceil\lambda^r_{{\tt idx}} k\rceil$,
$\textsf{efs}'=\lceil\lambda^r_{{\tt idx}}\textsf{efs}\rceil$ on impure ones,
then filter and merge into a size-$k$ heap
(Appendix~\S\ref{app:topk-exec-details}; correctness in
Appendix~\S\ref{sec:inflated-k-accuracy}). This wastes work when pure sources
already supply strong authorized candidates.}

{Coordinated search (Figure~\ref{fig:coord_search_example}) is an
instance of threshold-based top-$k$ aggregation~\cite{fagin2001optimal}
adapted to graph ANN: the global $k$-th distance plays the role of Fagin's
threshold~\cite{fagin2001optimal} and each sub-index is a sorted-access source. Leftovers and pure
indices contain only authorized data, so their results initialize a global
heap $\mathsf{RS}$ of size $k$; the distance $\textsf{dist}(\mathbf{x},v_k^g)$
of its $k$-th entry $v_k^g$ is the global bound. Each impure index
${\tt idx}$ is first probed with the \emph{uninflated} $\textsf{efs}$, and 
authorized candidates are merged into $\mathsf{RS}$. Denote $v_k^{l}$ as
the $k$-th entry of the \emph{unfiltered} local result of ${\tt idx}$. If
$\textsf{dist}(\mathbf{x},v_k^g)\le\textsf{dist}(\mathbf{x},v_k^{l})$, then
under Assumption~\ref{assump:hnsw} no unseen vector in ${\tt idx}$ can
improve $\mathsf{RS}$ and the probe stops; otherwise it resumes with
$\mathsf{efs}$ inflated by $\lambda^r_{{\tt idx}}$ and merges the newly
found authorized candidates. Algorithm~\ref{alg:coordinated-search} gives
the procedure; the continued base-layer traversal under the global bound is
in Appendix~\S\ref{app:continued-hnsw}.}

\section{Evaluations}\label{sec:evaluations}

\newcounter{experiment}
\newcommand{\experiment}[2]{%
    \refstepcounter{experiment}%
    \noindent\textbf{Exp~\theexperiment. #1.}\label{#2}%
}

We evaluate \name and \effname with the following experiments: 
Exps~\ref{exp:index-creation-time}--\ref{exp:desired-vs-achieved-sa}
study index construction;
Exps~\ref{exp:qa-vs-sa}--\ref{exp:efs} study query cost, purity, and
parameter sensitivity;
Exps~\ref{exp:qps-recall-sift}--\ref{exp:multi-role} report end-to-end
QPS--recall on three datasets and different query workloads.

\subsection{Setup}
\noindent
\textbf{Setting.}
All experiments were conducted on a machine running macOS~26.3 (Darwin~25.3.0, ARM64) with an Apple M4~Max processor (14 cores) and 36~GiB of unified memory.

\noindent
\textbf{Datasets.}
We use SIFT-1M~\cite{sift}, PAPER~\cite{paper}, and Amazon Books Reviews
(AMZN)~\cite{amazon_books_reviews,amazon_books_reviews2}
(Table~\ref{tab:dataset_summary}). Access-control policies are generated
from OrgAccess~\cite{sanyal2025orgaccess}, which grants each role the
union of a set of departments. We use the \textit{Easy} subset (roles
clustered to a target count), and \textit{Medium}/\textit{Hard} unmodified.
Block sizes follow a shifted Zipf distribution
$(i+s)^{-\alpha}$
~\cite{zipf2016human,mandelbrot1953informational};
the number of blocks assigned to department $j$ follows
$(j+s')^{-\alpha'}$. 
Higher skew in this \textit{permission distribution} means that a
few departments are associated with substantially more data than
the rest.

\noindent
\textbf{Compared Methods.}
We compare against partition-based methods SIEVE~\cite{sieve} and
HoneyBee~\cite{honeybee}, in-search filtering methods ACORN-1 and
ACORN-$\gamma$~\cite{acorn}, and the Global (Baseline~1) and Oracle
(Baseline~2) indices. Oracle is an ideal reference rather than a practical
method: the last column of Table~\ref{tab:dataset_summary} shows that even
single-role oracle indexing incurs $\textsf{SA}{\in}[5.3,15.1]$, and richer
query predicates would require still more pure indices. For \emph{all} partition-based
methods we apply SIEVE's heterogeneous-search rule~\cite{sieve}, downscaling
\textsf{efs} on each sub-index by the log-ratio of its size to
$|\mathcal{D}|$ to avoid small partitions being over-searched. SIEVE is given a held-out workload sampled from the same role
distribution as the test queries.


\noindent\textbf{Metrics.}
{We report \textsf{SA} (total number of vectors to the data size), \textsf{QA} (query cost
normalized to Oracle, isolating algorithmic efficiency from hardware
constants), QPS (end-to-end throughput), and recall@$k$ (fraction of the
brute-force top-$k$ over $\mathcal{D}$ that appears in the returned set).}


\noindent
\textbf{Parameters.}
We set the parameter $M$ of HNSW to $16/32/32$ for
SIFT-1M/PAPER/AMZN. We set $\textsf{efs}$ and $\textsf{efc}$ to $100$ and $200$, respectively. We set $\gamma{=}1/m_s{=}12$ for 
ACORN-$\gamma$, where $m_s$ is {the \textit{minimum selectivity}}, i.e., the smallest fraction of the dataset admitted by any role.
Following~\cite{sieve}, we use a brute-force search bound of
1/12 for ACORN-1 and ACORN-$\gamma$. Without this bound, $\gamma$ reaches $23{,}810$ on
SIFT-1M and each node would store $\gamma M$ neighbors.
{$\Lambda$ is $2{,}900/3{,}000/2{,}600$ for SIFT-1M/PAPER/AMZN per
Figure~\ref{fig:vary-dimensions} and Figure~\ref{fig:vary-dimensions-app} (Appendix~\S\ref{sec:prelim-hnsw}). We set the default \textsf{SA} for all budgeted
methods (SIEVE, HoneyBee, ACORN-1, ACORN-$\gamma$) to $1.1$, 
highlight the advantages of data partitioning in vector search with
access control: even modest storage overhead yields large
gains. We also test under different \textsf{SA} values.
}

\begin{figure*}[!t]
    \centering
    \begin{minipage}{\linewidth}
        \centering
        \includegraphics[width=0.8\linewidth]{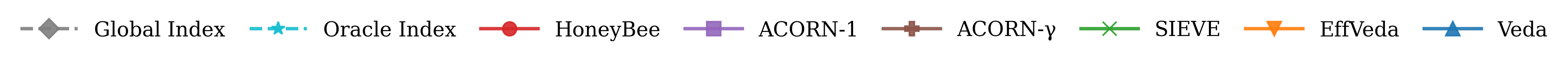}
    \end{minipage}
    \begin{minipage}[t]{0.39\textwidth}
        \centering
        \begin{subfigure}[t]{0.48\linewidth}
            \centering
            \includegraphics[width=\linewidth]{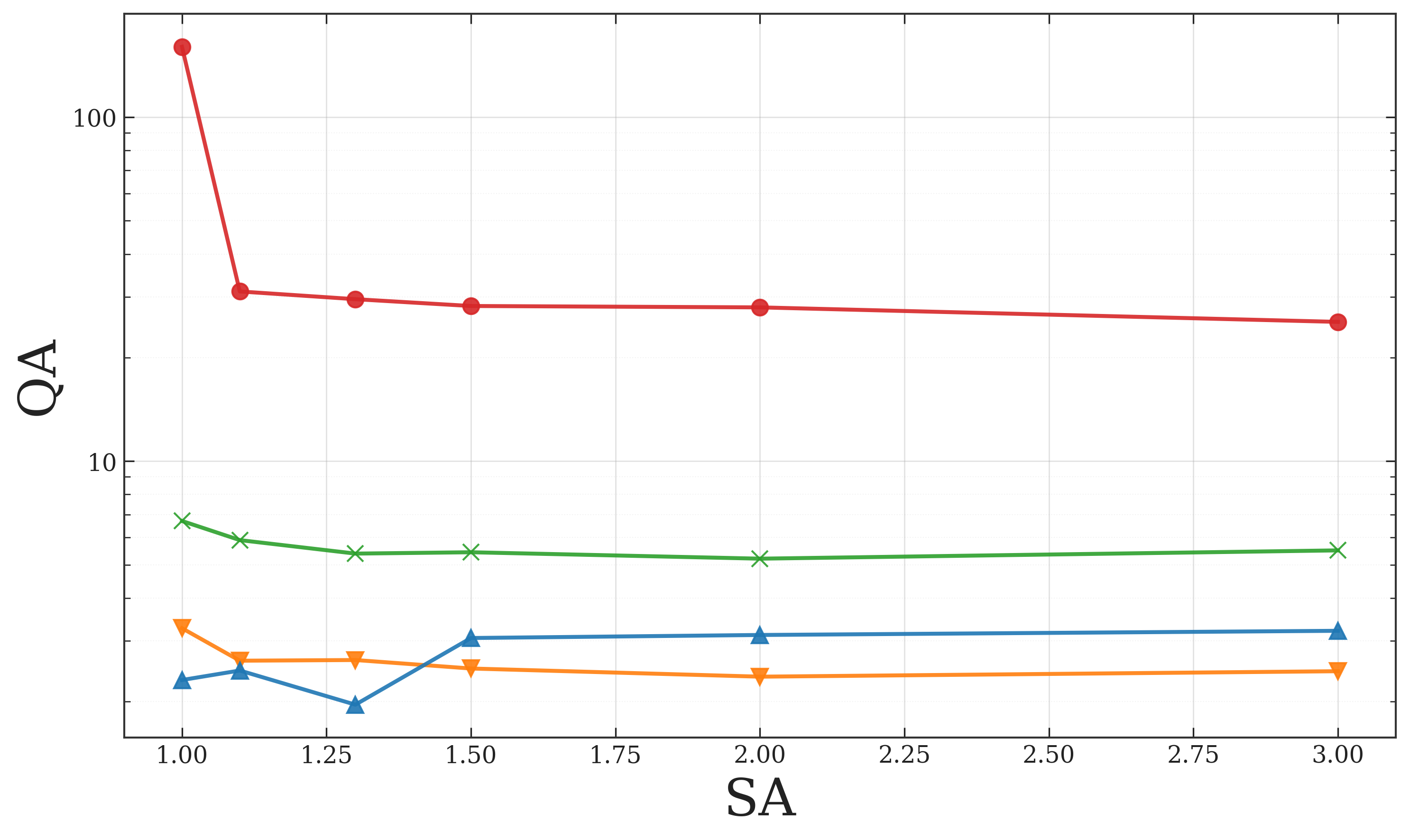}
            \caption{QA vs.\ SA.}
            \label{fig:qa_vs_sa_sift1m_single}
        \end{subfigure}
        \hfill
        \begin{subfigure}[t]{0.48\linewidth}
            \centering
            \includegraphics[width=\linewidth]{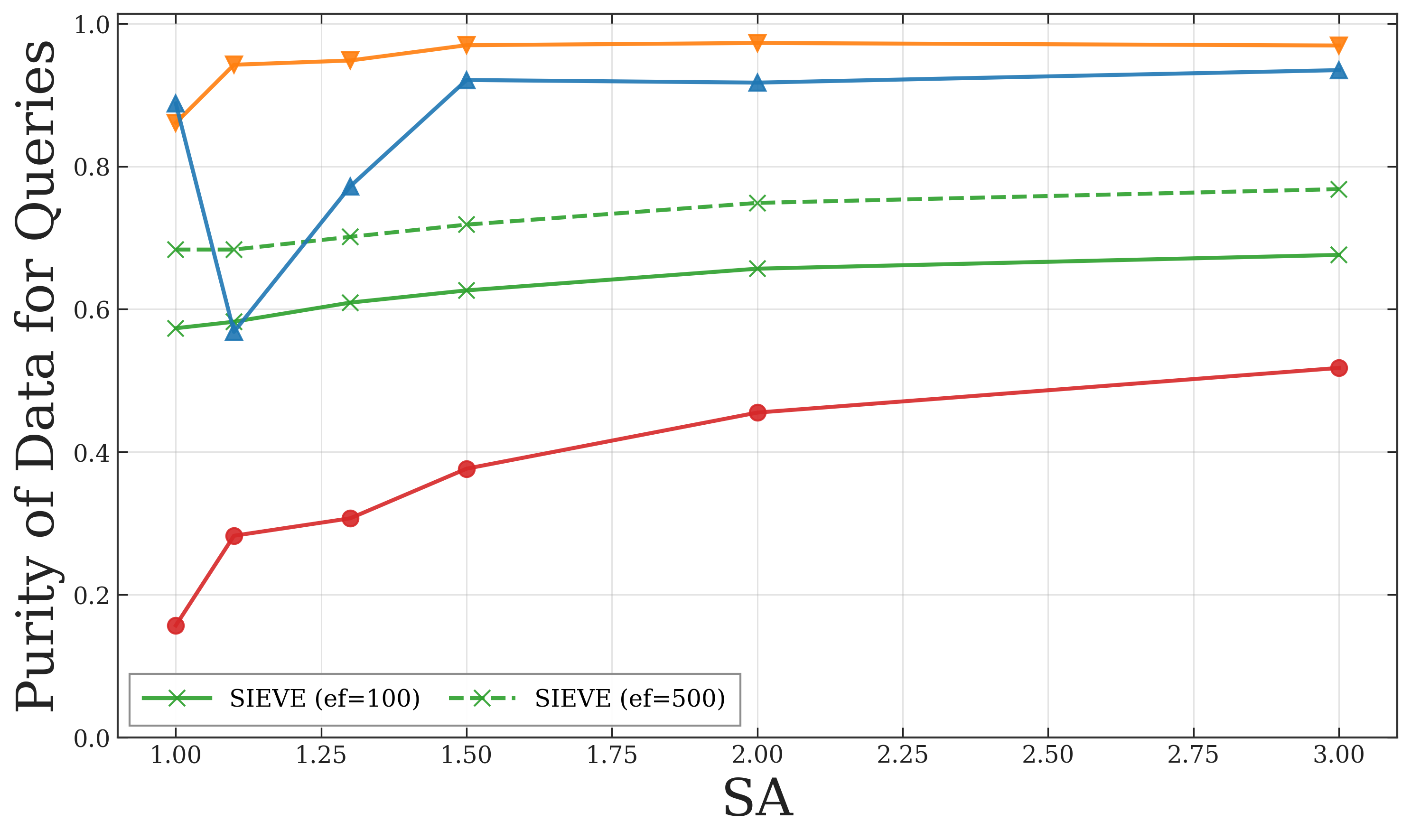}
            \caption{Purity vs.\ SA.}
            \label{fig:purity_vs_sa_ac}
        \end{subfigure}
        \caption{Comparisons with partitioning methods.}
        \label{fig:data_partitioning_comparison_sift}
    \end{minipage}
    \hfill
    \setcounter{subfigure}{0}
    \begin{minipage}[t]{0.59\textwidth}
        \centering
        \begin{subfigure}[t]{0.32\linewidth}
            \centering
            \includegraphics[width=\linewidth]{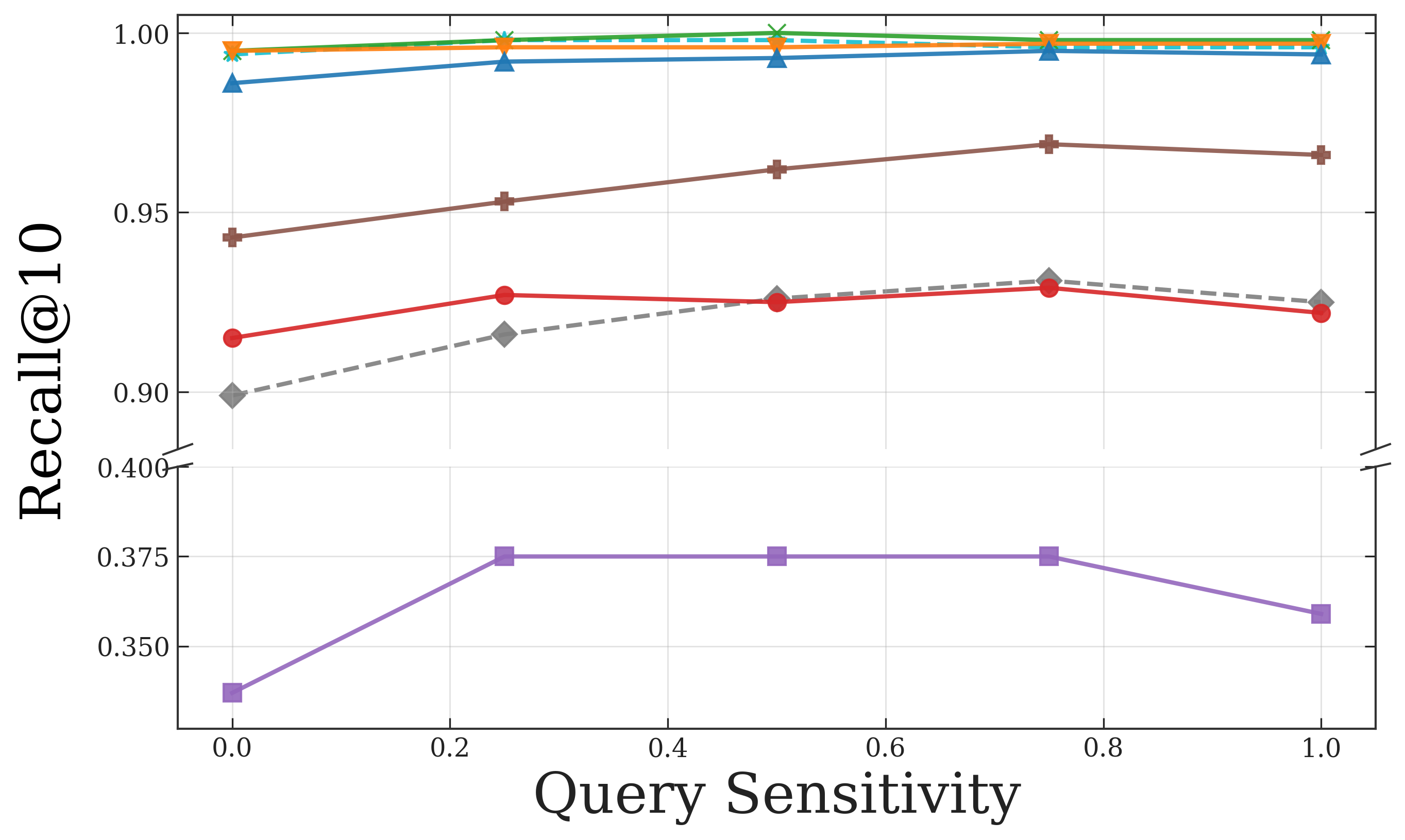}
            \caption{Sensitivity.}
            \label{fig:recall_vs_sensitivity_ac}
        \end{subfigure}
        \hfill
        \begin{subfigure}[t]{0.32\linewidth}
            \centering
            \includegraphics[width=\linewidth]{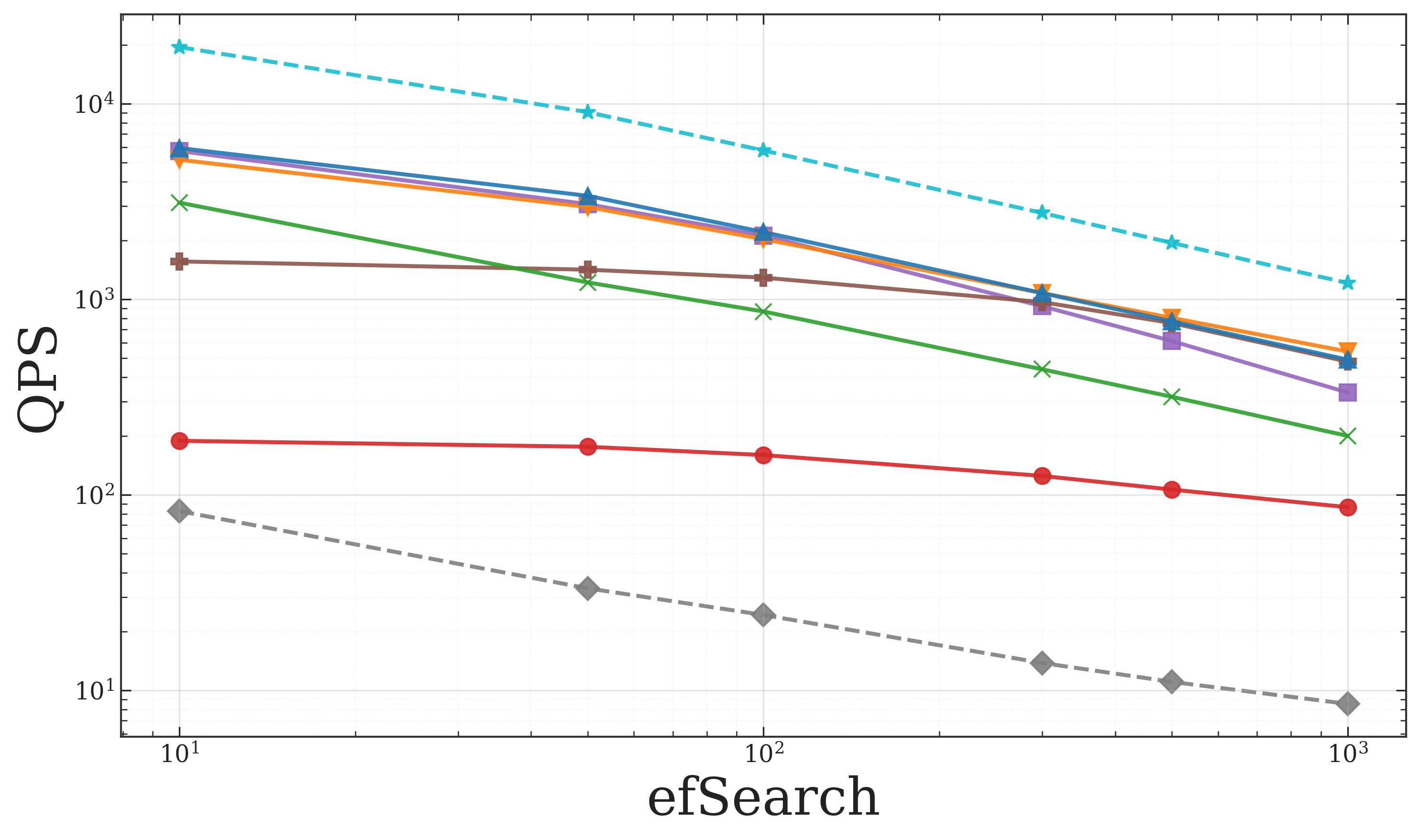}
            \caption{QPS vs.\ \textsf{efs}.}
            \label{fig:qps_vs_ef_ac}
        \end{subfigure}
        \hfill
        \begin{subfigure}[t]{0.32\linewidth}
            \centering
            \includegraphics[width=\linewidth]{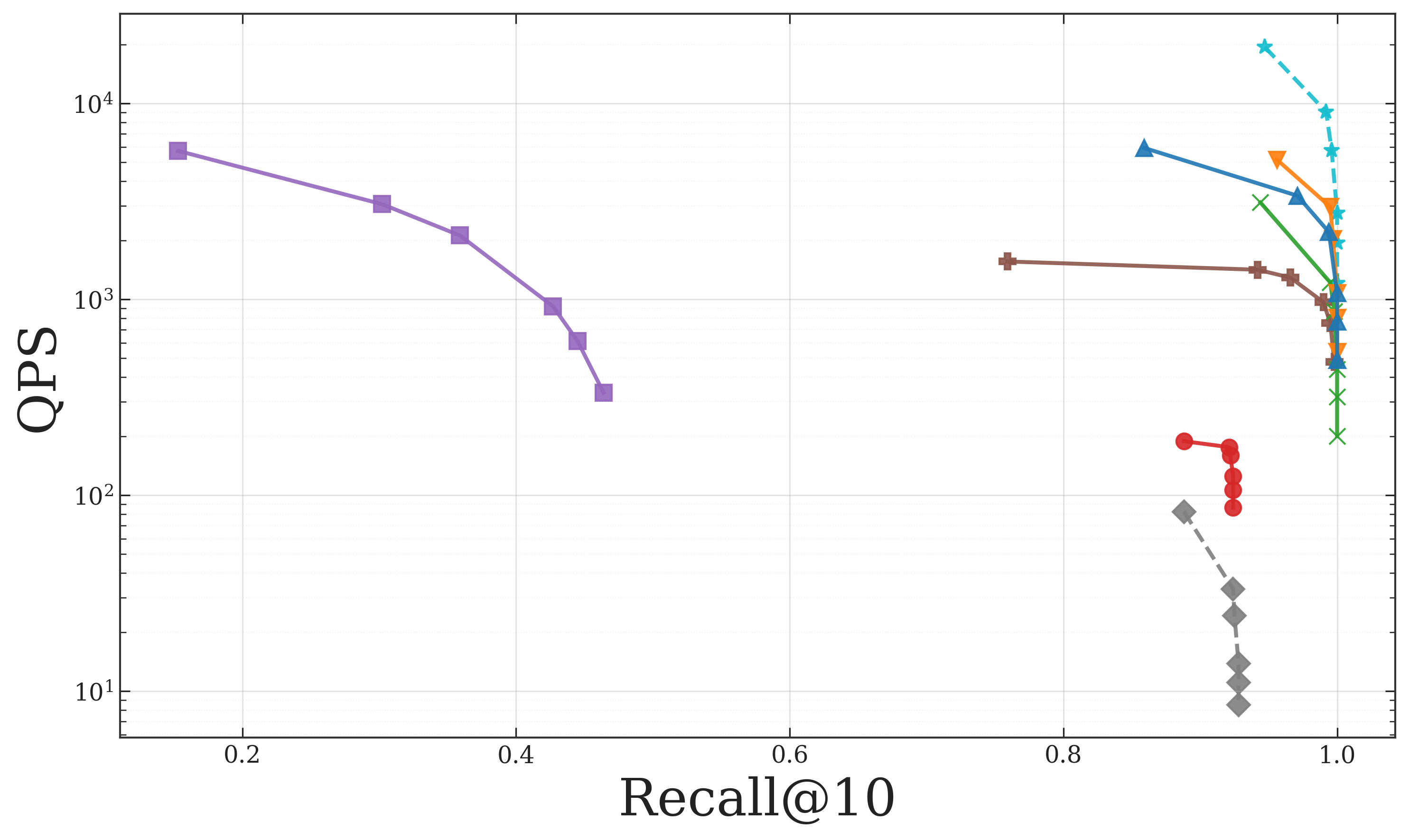}
            \caption{QPS vs.\ recall.}
            \label{fig:qps_vs_recall_sift}
        \end{subfigure}
        \caption{Single-role queries with SIFT-1M.}
        \label{fig:single_role_queries_sift}
    \end{minipage}
\end{figure*}

\begin{figure*}[!t]
    \centering
    \begin{minipage}{\linewidth}
        \centering
        \includegraphics[width=0.8\linewidth]{figures/legend.png}
    \end{minipage}
    \begin{subfigure}[t]{0.195\textwidth}
        \includegraphics[width=\linewidth]{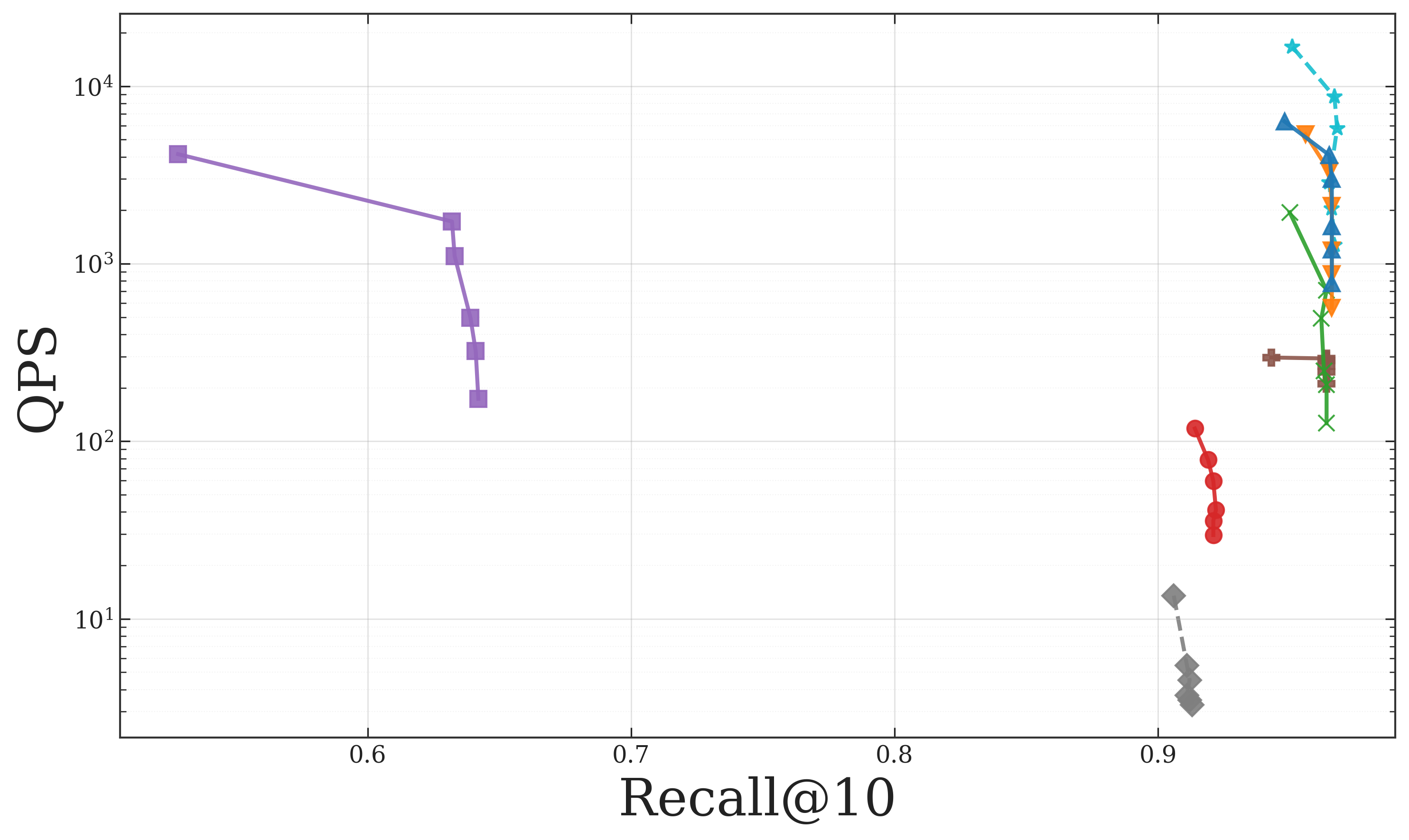}
        \caption{PAPER.}
        \label{fig:qps_vs_recall_paper}
    \end{subfigure}\hfill
    \begin{subfigure}[t]{0.195\textwidth}
        \includegraphics[width=\linewidth]{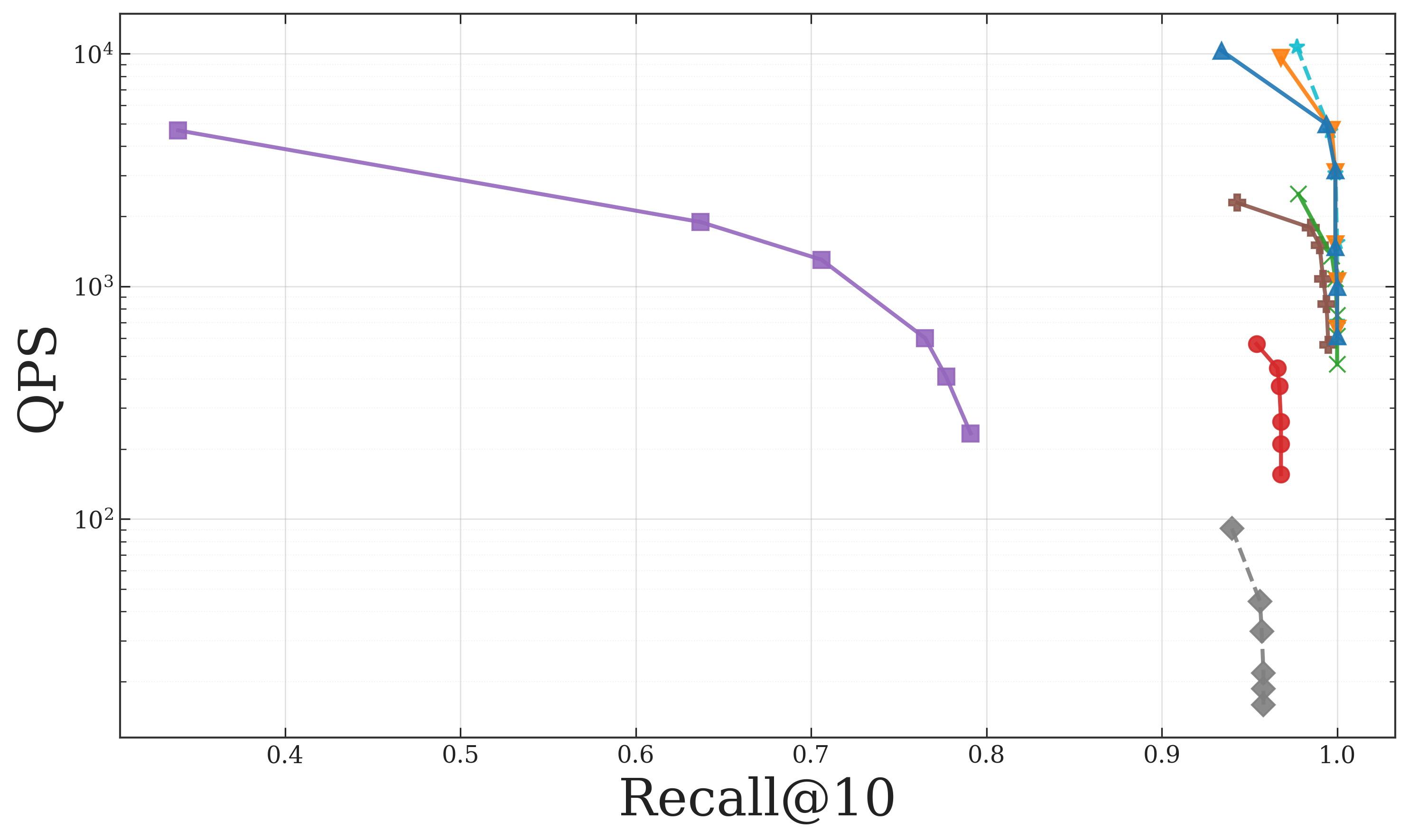}
        \caption{AMZN.}
        \label{fig:qps_vs_recall_amzn}
    \end{subfigure}\hfill
    \begin{subfigure}[t]{0.195\textwidth}
        \includegraphics[width=\linewidth]{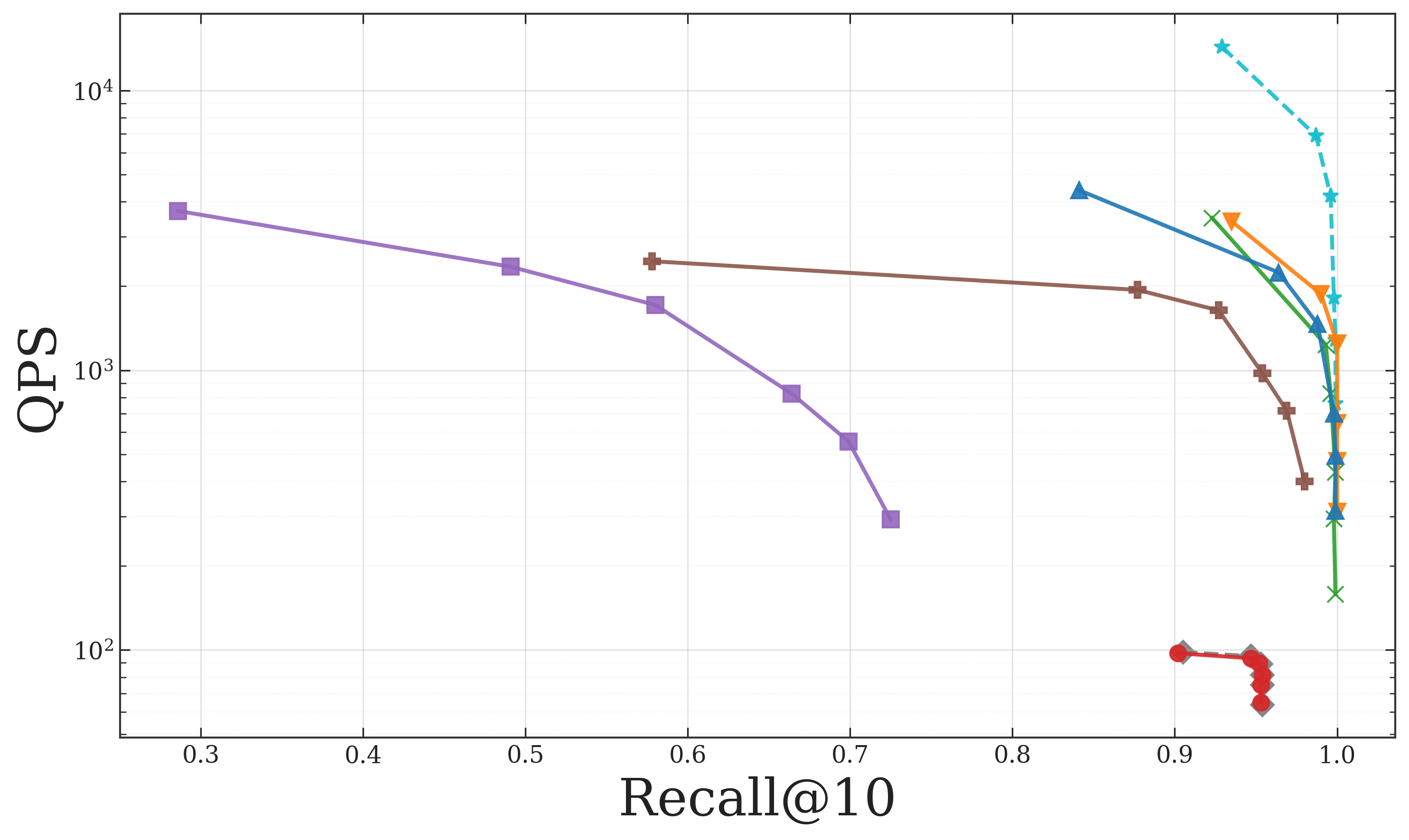}
        \caption{Weighted 1-role.}
        \label{fig:single_role_weighted_query_qps_vs_recall}
    \end{subfigure}\hfill
    \begin{subfigure}[t]{0.195\textwidth}
        \includegraphics[width=\linewidth]{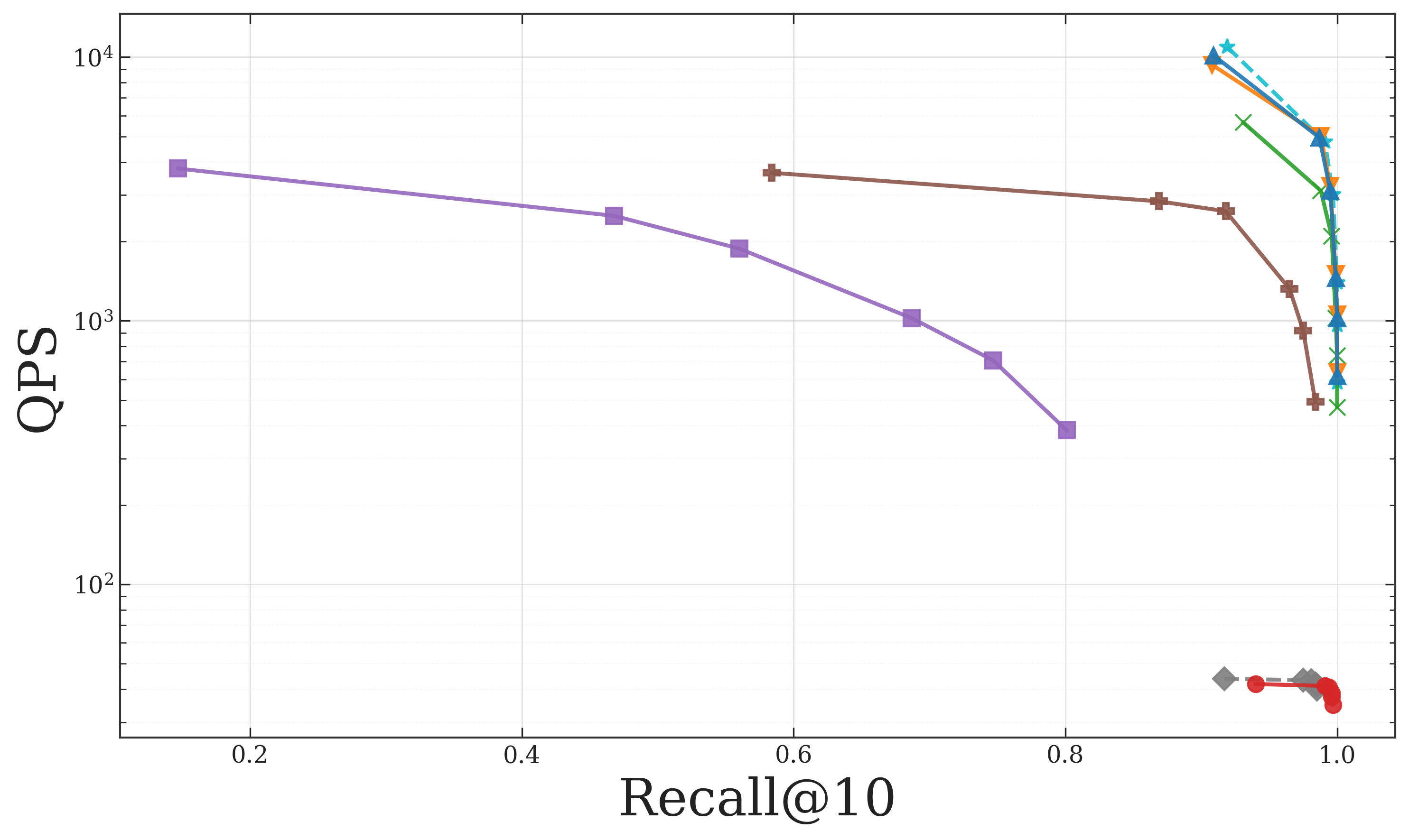}
        \caption{Multi-role wtd.}
        \label{fig:multi_role_weighted_query_qps_vs_recall}
    \end{subfigure}\hfill
    \begin{subfigure}[t]{0.195\textwidth}
        \includegraphics[width=\linewidth]{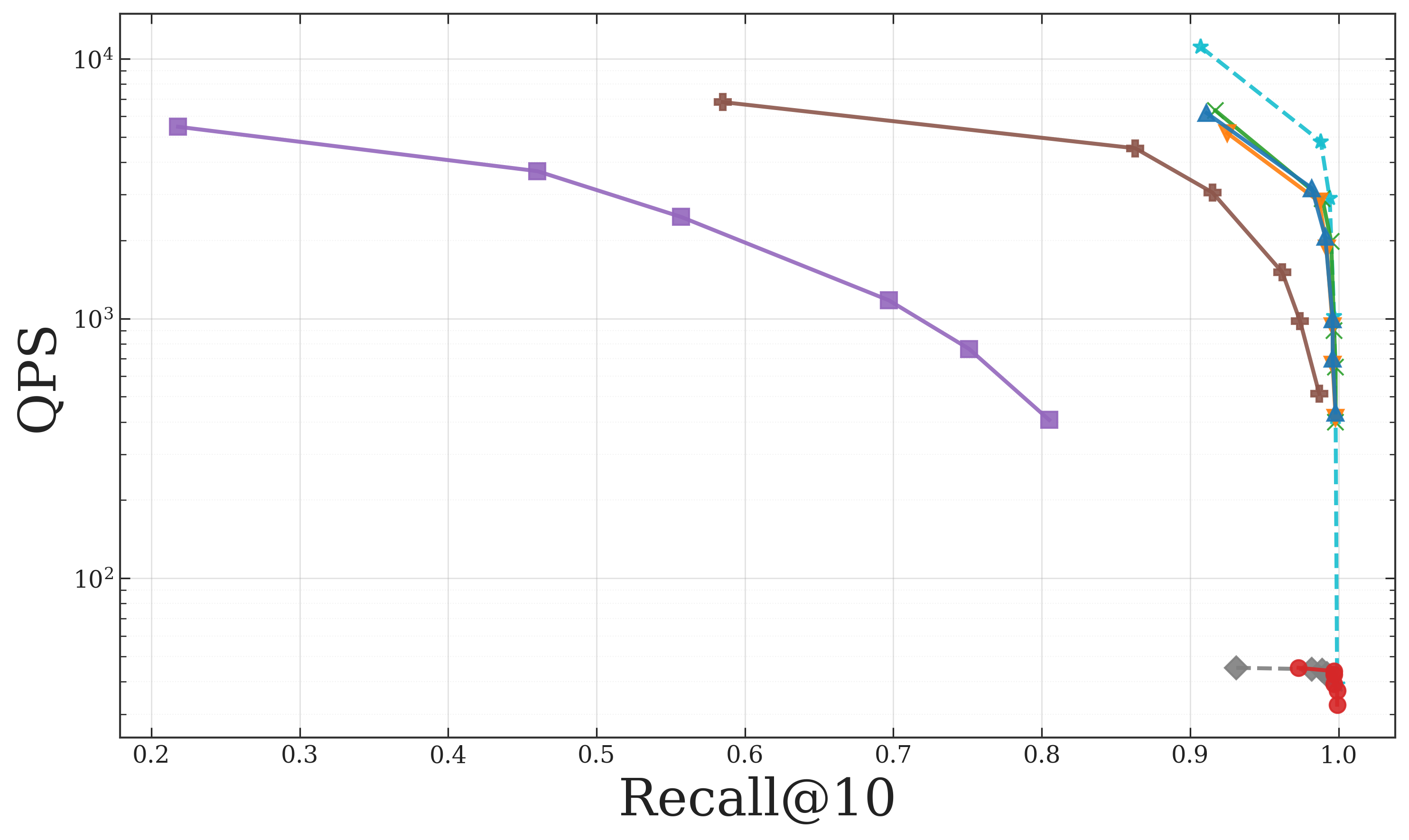}
        \caption{Multi-role unif.}
        \label{fig:multi_role_avg_query_qps_vs_recall}
    \end{subfigure}
    \caption{QPS vs. recall on PAPER and AMZN (a--b), and three alternative workloads on SIFT-1M (c--f).}
    \label{fig:qps_vs_recall_other_datasets}
\end{figure*}



\noindent\textbf{Query workloads.} A query is $q=(\mathbf{x},r)$ or
$q=(\mathbf{x},\tau)$ with $\tau\subseteq\mathcal{R}$. We use four workloads:
(1)~\emph{uniform single-role}, (2)~\emph{weighted single-role} ($\propto|\mathcal{D}(r)|$), (3)~\emph{uniform multi-role} over non-empty
$\tau\in\mathcal{T}$ ($|\tau|{>}1$), and (4)~\emph{weighted multi-role}
($\propto$ exclusive-block sizes). Each workload has 100
queries, averaged over 10 runs. By default we use SIFT-1M with uniform
single-role. 
{The query vector $\mathbf{x}$ may or may not
be drawn from the querying role's authorized data; we capture
this with \textit{query sensitivity}, the fraction of query vectors drawn
from authorized data. For a single-role query with role $r$, the} default value is $1.0$, i.e., all query vectors are drawn from $\mathcal{D}(r)$.

\subsection{Index Creation Evaluation}
\label{sec:eval-index-creation}

\textsf{SA} varies over
$\{1.0,1.1,1.3,1.5,2.0,3.0\}$ in Exp~\ref{exp:index-creation-time}--Exp~\ref{exp:desired-vs-achieved-sa}. Plots and detailed results are deferred to Appendix~\S\ref{app:index-creation-complementary-results}.

\experiment{Index-creation time vs.\ \textsf{SA}}{exp:index-creation-time}
We ran each data partitioning strategy for 3 times to get an average end-to-end building time and data partitioning time. Results are reported in {Figure~\ref{fig:build_time} and Table~\ref{tab:construction-time-ac-sift}
(Appendix~\S\ref{app:index-creation-complementary-results}). \effname cuts data partitioning from $77.412$--$1338.673$s
of \name to $1.983$--$3.794$s by avoiding candidate re-scoring, and is on
average $2117.38$x\,/\,$107.19$x faster than HoneyBee in
data partitioning\,/\,total build and $1.35$x faster than SIEVE in total build.
SIEVE's own data partitioning is negligible since it only selects predicate
subsets from historical workload rather than reorganizing data. \name beats HoneyBee by $5.78$x in
data partitioning and $5.63$x in total build.}


\experiment{Number of indexed vs.\ leftover data}{exp:number-of-indexed-leftover-data}
{Figure~\ref{fig:number-of-indexed-leftover-data}
(Appendix~\S\ref{app:index-creation-complementary-results}) shows that both \name and \effname
index most data and leave only a small leftover fraction, so partitions cover
the data directly rather than relying on brute-force scan. Also, the indexed total
grows with \textsf{SA}.}

\experiment{Number of indices vs.\ \textsf{SA}}{exp:num-indices}
{Figure~\ref{fig:num_indices_built_vs_sa}
(Appendix~\S\ref{app:index-creation-complementary-results}) shows \name and
\effname produce more indices than SIEVE and HoneyBee at every \textsf{SA} budget. This is because the
exclusive lattice exposes fine-grained role-combination blocks and a larger
budget makes more of them indexable, whereas SIEVE partitions data based on historical workload and HoneyBee use coarser role-level partitions.}

\experiment{Desired vs.\ achieved \textsf{SA}}{exp:desired-vs-achieved-sa}
Figure~\ref{fig:input_vs_achieved_sa}  in
Appendix~\S\ref{app:index-creation-complementary-results} shows that \name and \effname track the target \textsf{SA} closely.
HoneyBee and SIEVE account for storage only
\emph{before} admitting a partition, so their final admission can overshoot
the target \textsf{SA}.

\subsection{Query-Based Evaluations}
\label{sec:eval-query}

\experiment{\textsf{QA} vs.\ \textsf{SA}}{exp:qa-vs-sa}
{Figure~\ref{fig:qa_vs_sa_sift1m_single} reports \textsf{QA} to measure the query cost normalized to Oracle indexing (lower is
better). 
\name and \effname consistently outperform HoneyBee and SIEVE, with
the largest gains at small \textsf{SA}. Figure~\ref{fig:qa_vs_sa_sift1m_single} shows that a modest \textsf{SA} budget already removes most
unnecessary search for data partitioning approaches via high-benefit copies and merges, as well as finalization of super-impure nodes. The curves are not strictly monotone
because a larger budget can change which nodes cross $\Lambda$ or which impure
nodes are finalized.}

\experiment{Purity of selected data}{exp:purity}
Figure~\ref{fig:purity_vs_sa_ac} reports the fraction of data touched per
query that is authorized for the queried role for each data partitioning approach. Higher purity indicates
fewer wasted data are selected during query execution. 
\name and \effname achieve high purity across all \textsf{SA} budgets due to our efficient lattice construction strategies. SIEVE's purity depends on \textsf{efs} (we plot
\textsf{efs}$=$100 and 500) because a larger beam (higher \textsf{efs}) pushes more queries to
brute-force scan. HoneyBee achieves low purity across all SA budgets.
Purity directly explains the \textsf{QA} ordering in
Exp~\ref{exp:qa-vs-sa}.

\experiment{Number of indices per query}{exp:n_indcies_per_query}
We report the average number of indices touched per query for \name and \effname in Table~\ref{tab:avg-hnsw-indices-per-query} in 
Appendix~\S\ref{app:ablation-tables}.
Both methods touch fewer than 7 indices per query on average, thus index switching overhead for our approaches is negligible.

\experiment{Indexing threshold}{exp:indexing-threshold}
Tables~\ref{tab:vary-lambda-sift1m-qps-sa11}--\ref{tab:vary-lambda-sift1m-qps-sa15}
and Figure~\ref{fig:big-idx-num-vs-indexing-threshold-sift} in
Appendix~\ref{app:vary-lambda-more-sa} report QPS and the number of indices as
$\Lambda$ varies at $\textsf{SA}\in\{1.1,1.3,1.5\}$. The number of indices reduces as $\Lambda$ increases because fewer nodes exceed $\Lambda$ and become
materialized HNSW indices as $\Lambda$ increases.
Also, strong results occur around $\Lambda{=}2{,}900$, but performance is not
sensitive to the exact value of $\Lambda$. 
Across \name and \effname, the worst QPS remains at least $77.96\%$ of the
best, and in most \textsf{SA} values it is close to $90\%$. This is because the finalization phase for \name and \effname handles the impure nodes efficiently, which absorbs most
threshold-induced differences in the resulting lattices, so neither method is
sensitive to the exact value of $\Lambda$. 

\experiment{Effect of coordinated search}{exp:coordinated-search}
Coordinated search skips phase~2 on at least $67.95\%$ of impure-index visits
at every \textsf{SA} and cuts the inflated \textsf{efs} by
$7.95\%$--$23.67\%$ (Tables~\ref{tab:phase2-skip-rate}
and~\ref{tab:ef-budget-savings}, Appendix~\ref{app:ablation-tables}).
Because finalization removes highly impure nodes, the surviving impure indices
have low impurity and the global-top-$k$ bound suppresses phase~2 with high
probability. Appendix~\ref{app:leftover-effectiveness} further evaluates the
effectiveness of leftover search.

\experiment{Effect of \textsf{efs}}{exp:efs}
{Figure~\ref{fig:qps_vs_ef_ac} reports QPS with \textsf{efs} varied over
$\{10,50,100,300,500,1000\}$. QPS falls for all methods as \textsf{efs} grows because a larger beam explores more candidates in HNSW search.
Global Indexing is the slowest while Oracle is the upper bound since each query hits a single
pure index. \name and \effname are the strongest practical methods throughout. 
ACORN-1 is competitive only at small \textsf{efs}, ACORN-$\gamma$ catches up with \name and \effname at
large \textsf{efs}. HoneyBee stays low due to coarse impure partitions. Figure~\ref{fig:qps_vs_ef_paper} and Figure~\ref{fig:qps_vs_ef_amzn} in
Appendix~\S\ref{app:efs-more-datasets} shows results on PAPER and AMZN.}

\experiment{Comparison with other methods across datasets}{exp:qps-recall-sift}
Figures~\ref{fig:qps_vs_recall_sift}, \ref{fig:qps_vs_recall_paper},
and~\ref{fig:qps_vs_recall_amzn} report QPS vs.\ Recall@10 for $\textsf{efs}\!\in\!\{10,50,$ $100,300,500,1000\}$, and Tables~\ref{tab:qps-vs-efs-sift},~\ref{tab:qps-vs-efs-paper},~\ref{tab:qps-vs-efs-amzn} in Appendix~\ref{app:qps-vs-efs-numbers} show the concrete QPS values. 
\name and \effname occupy the upper-right frontier on all three datasets,
achieving high QPS while maintaining near-oracle recall. 
On average (across all \textsf{efs}), 
\name is $2.24\mathtt{x}$, $5.60\mathtt{x}$, and
$2.61\mathtt{x}$ faster than SIEVE on SIFT-1M, PAPER, and AMZN, respectively;
the corresponding speedups over HoneyBee are $13.17\mathtt{x}$,
$42.61\mathtt{x}$, and $8.69\mathtt{x}$. \effname shows the same trend,
outperforming SIEVE by $2.22\mathtt{x}$, $4.19\mathtt{x}$, and
$2.56\mathtt{x}$, and HoneyBee by $12.55\mathtt{x}$, $32.59\mathtt{x}$, and
$8.46\mathtt{x}$ on the same datasets. 



\experiment{Query sensitivity}{exp:query-sensitivity}
Figure~\ref{fig:recall_vs_sensitivity_ac} varies sensitivity over
0, 0.25, 0.5, 0.75, and 1.0. \name, \effname, SIEVE, and Oracle hold
Recall@10${\geq}0.99$ throughout. ACORN-1 stays below $0.40$ and
ACORN-$\gamma$ at $0.94$--$0.97$, as filtering a global graph fails to recover
all authorized data; Baseline~1 and HoneyBee degrade to $0.90$--$0.93$ once
the query lies outside $\mathcal{D}(r)$. Figure~\ref{fig:recall_vs_sensitivity_additional} in Appendix~\S\ref{app:query-sensitivity-complementary}
shows results on PAPER and AMZN.

\experiment{Weighted single-role queries}{exp:weighted-single-role}
Figure~\ref{fig:single_role_weighted_query_qps_vs_recall} reports QPS vs. Recall@10 for weighted single-role queries; QPS values are in Table~\ref{tab:qps-vs-efs-single-role-weighted} in Appendix~\S\ref{app:qps-vs-efs-numbers}. 
On average (across all \textsf{efs}), 
\name is $1.69\mathtt{x}$ faster than SIEVE and $17.57\mathtt{x}$ faster than HoneyBee, while \effname achieves speedups of $1.51\mathtt{x}$ over SIEVE and $14.67\mathtt{x}$ over HoneyBee.

\experiment{Multi-role queries}{exp:multi-role}
A multi-role query authorizes the union of several roles and can cover most of
$\mathcal{D}$, so we keep one additional global index and route any query
whose authorized region exceeds $80\%$ of $|\mathcal{D}|$ to filtered global
search. Accordingly, we add $1$ to \textsf{SA} for HoneyBee and SIEVE, which are given
$\textsf{SA}{=}2.1$.
Figures~\ref{fig:multi_role_weighted_query_qps_vs_recall}
and~\ref{fig:multi_role_avg_query_qps_vs_recall} show that
\name and \effname stay close to SIEVE and Oracle in the high-recall region on
both uniform and weighted mixed-role workloads. This confirms the crossover argued
in \S\ref{sec:intro}: partitioning wins for selective access, while filtered global
search wins for broad access. 
Oracle itself is sometimes slower than other methods
(Figure~\ref{fig:multi_role_avg_query_qps_vs_recall}) because different roles may route to different pure indices. Maintaining these
role-set-specific indices increases memory overhead, and the resulting
index-switching overhead persists even when the indices are already loaded into memory.
Tables~\ref{tab:qps-vs-efs-multi-role-avg}
and~\ref{tab:qps-vs-efs-multi-role-weighted} in
Appendix~\ref{app:qps-vs-efs-numbers} provide the detailed QPS values across
\textsf{efs} for the two multi-role workloads.

  \section{Related Work}\label{sec:related_work}
General ANN indices supply the building blocks
but assume a single global search space; we therefore focus the remainder of
this section on a qualitative comparison with the four systems closest to our
setting, Filtered-DiskANN~\cite{filtered_diskann}, ACORN~\cite{acorn},
SIEVE~\cite{sieve}, and HoneyBee~\cite{honeybee}, and defer the full landscape of related works to Appendix~\ref{app:detailed_related_work}.

\noindent\textbf{Predicate Model.}
Filtered-DiskANN and UNG~\cite{ung} bake \emph{label} predicates into graph
construction, and ACORN traverses the predicate-induced subgraph of an HNSW
index for \emph{arbitrary} Boolean filters. All three treat the predicate as
query-supplied metadata. SIEVE and HoneyBee instead exploit that the predicate
\emph{distribution} is known offline, SIEVE from a historical workload, HoneyBee from the RBAC role lattice, 
and materialize sub-indexes accordingly.
\name and \effname follow the latter view but represent the policy as exclusive
role-subset blocks, so every role's visible set is an exact union of blocks
rather than an approximate match against a learned filter.

\noindent\textbf{Storage--Latency Trade-off.}
{Filtered-DiskANN and ACORN keep $\textsf{SA}{\approx}1$ and pay at
query time when the predicate is selective; SIEVE and HoneyBee expose an
explicit memory budget and select sub-indexes or partitions under it. \name
exposes the same knob but explores it with two primitives: \emph{merge} trades
latency for storage by combining co-accessed blocks, and \emph{copy} trades
storage for latency by replicating a block into a co-accessed group.}

\noindent\textbf{Query Execution.}
ACORN and Filtered-DiskANN answer every query from \emph{one} graph. SIEVE
routes each query to the single cheapest subsuming sub-index (or brute force),
and HoneyBee routes a role to the partition set implied by its split. None of
them lets results from one index influence search on another. \name and \effname instead
run a coordinated multi-index search: pure partitions are probed first and
their top-$k$ distances tighten a global threshold that prunes the subsequent
beam search on impure partitions and residual scans. This is the mechanism that
lets \name tolerate impure groups without inflating $\textsf{efs}$ to the
worst-case $\lambda\,\textsf{efs}$ on every index.

\noindent\textbf{Recall Under Impurity.}
{When authorized vectors are sparse in the searched structure, ACORN
widens neighbor lists, Filtered-DiskANN stitches label-specific edges, and
HoneyBee/SIEVE over-search or fall back to linear scan. All four raise the
effective $\textsf{efs}$, which our cost model (Definition~\ref{def:hnsw-cost})
shows is linear in latency. \name instead lowers impurity at construction time
(copy/merge) and caps it at execution time via the shared distance bound,
reducing the \emph{need} for inflation rather than enlarging the candidate
set.}

  \section{Conclusion}\label{sec:conclusion}

We presented \name and \effname, two access-aware indexing strategies for
vector databases. \name and \effname organize data in an access-aware lattice and use copy and
merge operations to group co-accessed blocks under a storage budget. Large
nodes are indexed with HNSW and small nodes are stored as leftovers for efficient linear scan. To process queries across indices efficiently, we introduce coordinated
search that probes pure nodes (and leftover vectors) that contain only authorized data first so that the resulting distance bound can be utilized for pruning impure ones. Open directions include incremental maintenance under streaming
inserts and permission revocations and extending the framework to attribute-based and hierarchical policies.










  \bibliographystyle{ACM-Reference-Format}
  \bibliography{reference}

@article{bertino2011access,
  title={Access control for databases: Concepts and systems},
  author={Bertino, Elisa and Ghinita, Gabriel and Kamra, Ashish and others},
  journal={Foundations and Trends{\textregistered} in Databases},
  volume={3},
  number={1--2},
  pages={1--148},
  year={2011},
  publisher={Now Publishers, Inc.}
}

@misc{AI_Act,
  author    = {{European Parliament and Council of the European Union}},
  title     = {Regulation ({EU}) 2024/1689 of the European Parliament and of the Council of 13 June 2024 laying down harmonised rules on artificial intelligence and amending certain Union legislative acts (Artificial Intelligence Act)},
  year      = {2024},
  howpublished = {\url{https://eur-lex.europa.eu/eli/reg/2024/1689/oj/eng}},
  note      = {Accessed: 2025-01-10}
}

@inproceedings{lewis2020retrieval,
  title={Retrieval-Augmented Generation for Knowledge-Intensive {NLP} Tasks},
  author={Lewis, Patrick and Perez, Ethan and Piktus, Aleksandra and Petroni, Fabio and Karpukhin, Vladimir and Goyal, Naman and K\"{u}ttler, Heinrich and Lewis, Mike and Yih, Wen-tau and Rockt\"{a}schel, Tim and Riedel, Sebastian and Kiela, Douwe},
  booktitle={Advances in Neural Information Processing Systems},
  volume={33},
  pages={9459--9474},
  year={2020}
}

@article{hnsw,
  title={Efficient and robust approximate nearest neighbor search using hierarchical navigable small world graphs},
  author={Malkov, Yu A and Yashunin, Dmitry A},
  journal={IEEE transactions on pattern analysis and machine intelligence},
  volume={42},
  number={4},
  pages={824--836},
  year={2018},
  publisher={IEEE}
}

@article{sieve,
  title={SIEVE: Effective Filtered Vector Search with Collection of Indexes},
  author={Li, Zhaoheng and Huang, Silu and Ding, Wei and Park, Yongjoo and Chen, Jianjun},
  journal={arXiv preprint arXiv:2507.11907},
  year={2025}
}

@article{reichman2024retrieval,
  title={Retrieval-Augmented Generation: Is Dense Passage Retrieval Retrieving},
  author={Reichman, Benjamin and Heck, Larry},
  journal={arXiv preprint arXiv:2402.11035},
  year={2024}
}

@article{liu2023lost,
  title={Lost in the middle: How language models use long contexts},
  author={Liu, Nelson F and Lin, Kevin and Hewitt, John and Paranjape, Ashwin and Bevilacqua, Michele and Petroni, Fabio and Liang, Percy},
  journal={arXiv preprint arXiv:2307.03172},
  year={2023}
}

@article{xu2023retrieval,
  title={Retrieval meets long context large language models},
  author={Xu, Peng and Ping, Wei and Wu, Xianchao and McAfee, Lawrence and Zhu, Chen and Liu, Zihan and Subramanian, Sandeep and Bakhturina, Evelina and Shoeybi, Mohammad and Catanzaro, Bryan},
  journal={arXiv preprint arXiv:2310.03025},
  year={2023}
}

@article{sanyal2025orgaccess,
  title={OrgAccess: A Benchmark for Role Based Access Control in Organization Scale LLMs},
  author={Sanyal, Debdeep and Maharana, Umakanta and Sinha, Yash and Tan, Hong Ming and Karande, Shirish and Kankanhalli, Mohan and Mandal, Murari},
  journal={arXiv preprint arXiv:2505.19165},
  year={2025}
}

@article{acorn,
  title={Acorn: Performant and predicate-agnostic search over vector embeddings and structured data},
  author={Patel, Liana and Kraft, Peter and Guestrin, Carlos and Zaharia, Matei},
  journal={Proceedings of the ACM on Management of Data},
  volume={2},
  number={3},
  pages={1--27},
  year={2024},
  publisher={ACM New York, NY, USA}
}

@article{honeybee,
  title={HoneyBee: Efficient role-based access control for vector databases via dynamic partitioning},
  author={Zhong, Hongbin and Lentz, Matthew and Narodytska, Nina and Szekeres, Adriana and Rong, Kexin},
  journal={arXiv preprint arXiv:2505.01538},
  year={2025}
}

@article{faiss,
  title={Billion-scale similarity search with GPUs},
  author={Johnson, Jeff and Douze, Matthijs and J{\'e}gou, Herv{\'e}},
  journal={IEEE transactions on big data},
  volume={7},
  number={3},
  pages={535--547},
  year={2019},
  publisher={IEEE}
}

@inproceedings{diskann,
  title={DiskANN: Fast accurate billion-point nearest neighbor search on a single node},
  author={Subramanya, Suhas Jayaram and Devvrit and Kadekodi, Rohan and Krishnaswamy, Ravishankar and Simhadri, Harsha Vardhan},
  booktitle={Advances in Neural Information Processing Systems},
  volume={32},
  year={2019}
}

@inproceedings{scann,
  title={Accelerating large-scale inference with anisotropic vector quantization},
  author={Guo, Ruiqi and Sun, Philip and Lindgren, Erik and Geng, Quan and Simcha, David and Chern, Felix and Kumar, Sanjiv},
  booktitle={International Conference on Machine Learning},
  pages={3887--3896},
  year={2020},
  organization={PMLR}
}

@inproceedings{filtered_diskann,
  title={Filtered-diskann: Graph algorithms for approximate nearest neighbor search with filters},
  author={Gollapudi, Siddharth and Karia, Neel and Sivashankar, Varun and Krishnaswamy, Ravishankar and Begwani, Nikit and Raz, Swapnil and Lin, Yiyong and Zhang, Yin and Mahapatro, Neelam and Srinivasan, Premkumar and others},
  booktitle={Proceedings of the ACM Web Conference 2023},
  pages={3406--3416},
  year={2023}
}

@article{ung,
  title={Navigating labels and vectors: A unified approach to filtered approximate nearest neighbor search},
  author={Cai, Yuzheng and Shi, Jiayang and Chen, Yizhuo and Zheng, Weiguo},
  journal={Proceedings of the ACM on Management of Data},
  volume={2},
  number={6},
  pages={1--27},
  year={2024},
  publisher={ACM New York, NY, USA}
}

@article{sift,
  title={Product quantization for nearest neighbor search},
  author={Jegou, Herve and Douze, Matthijs and Schmid, Cordelia},
  journal={IEEE transactions on pattern analysis and machine intelligence},
  volume={33},
  number={1},
  pages={117--128},
  year={2010},
  publisher={IEEE}
}

@article{paper,
  title={Navigable proximity graph-driven native hybrid queries with structured and unstructured constraints},
  author={Wang, Mengzhao and Lv, Lingwei and Xu, Xiaoliang and Wang, Yuxiang and Yue, Qiang and Ni, Jiongkang},
  journal={arXiv preprint arXiv:2203.13601},
  year={2022}
}

@inproceedings{amazon_books_reviews,
  title={Ups and downs: Modeling the visual evolution of fashion trends with one-class collaborative filtering},
  author={He, Ruining and McAuley, Julian},
  booktitle={proceedings of the 25th international conference on world wide web},
  pages={507--517},
  year={2016}
}

@inproceedings{amazon_books_reviews2,
  title={Image-based recommendations on styles and substitutes},
  author={McAuley, Julian and Targett, Christopher and Shi, Qinfeng and Van Den Hengel, Anton},
  booktitle={Proceedings of the 38th international ACM SIGIR conference on research and development in information retrieval},
  pages={43--52},
  year={2015}
}

@book{zipf2016human,
  title={Human behavior and the principle of least effort: An introduction to human ecology},
  author={Zipf, George Kingsley},
  year={2016},
  publisher={Ravenio books}
}

@article{mandelbrot1953informational,
  title={An informational theory of the statistical structure of language},
  author={Mandelbrot, Benoit},
  journal={Communication theory},
  volume={84},
  number={21},
  pages={486--502},
  year={1953},
  publisher={London}
}

@article{aumuller2020ann,
  title={ANN-Benchmarks: A benchmarking tool for approximate nearest neighbor algorithms},
  author={Aum{\"u}ller, Martin and Bernhardsson, Erik and Faithfull, Alexander},
  journal={Information Systems},
  volume={87},
  pages={101374},
  year={2020},
  publisher={Elsevier}
}

@inproceedings{fagin2001optimal,
  title={Optimal aggregation algorithms for middleware},
  author={Fagin, Ronald and Lotem, Amnon and Naor, Moni},
  booktitle={Proceedings of the twentieth ACM SIGMOD-SIGACT-SIGART symposium on Principles of database systems},
  pages={102--113},
  year={2001}
}

@article{khuller1999budgeted,
  title={The budgeted maximum coverage problem},
  author={Khuller, Samir and Moss, Anna and Naor, Joseph Seffi},
  journal={Information processing letters},
  volume={70},
  number={1},
  pages={39--45},
  year={1999},
  publisher={Elsevier}
}

@article{martello1987algorithms,
  title={Algorithms for knapsack problems},
  author={Martello, Silvano and Toth, Paolo},
  journal={North-Holland Mathematics Studies},
  volume={132},
  pages={213--257},
  year={1987},
  publisher={Elsevier}
}
  
  \newpage
\appendix

\section{Detailed Introduction of HNSW}
\label{sec:prelim-hnsw}

HNSW is one of the most effective and widely used approximate
nearest-neighbor data structures. It organizes vectors into a multi-layer
hierarchy in which each layer is a navigable proximity graph~\cite{hnsw}. At
insertion time, each vector is assigned a maximum level $\ell\ge 0$ by
sampling from an exponential (geometric) distribution. On every upper layer
$\ell>0$ a node retains up to $M$ bidirectional links chosen by a
diversity-preserving heuristic; on the base layer ($\ell=0$) the limit is
$M_0\le 2M$ (often configurable in implementations).

To answer a query, HNSW starts at the topmost layer and greedily walks toward
the query vector, descending one layer at a time until it reaches the base
layer in $\mathcal{O}(\log|{\tt idx}|)$ expected steps. At the base layer the
search switches to a best-first expansion: candidates are kept in a priority
queue of capacity $\textsf{efs}$ and expanded in increasing distance, each
expansion visiting up to $M_0$ neighbors. The parameter $\textsf{efs}$ governs
the recall--latency trade-off. Counting comparisons, the asymptotic query cost
has three terms: upper-layer descent, base-layer priority-queue maintenance,
and final selection of $k$ results from the $\textsf{efs}$ candidates:
$
\mathcal{O}(M\log |{\tt idx}|)
+ \mathcal{O}(\textsf{efs}\log\textsf{efs})
+ \mathcal{O}(\textsf{efs}\log k),$


In role-based settings the comparison-count bound is not the right proxy for
latency. When an index is impure for the issuing role, the query must inflate
$k$ to $k'=\lceil\lambda k\rceil$ and $\textsf{efs}$ to
$\lceil\lambda\,\textsf{efs}\rceil$ (\S\ref{sec:ac-indexing}), pushing
$\textsf{efs}$ into a regime where the base-layer term dominates.
Figure~\ref{fig:vary_k} sweeps $k$ (and the corresponding
$\textsf{efs}=\alpha k$) and shows that wall-clock latency grows
\emph{linearly} once $k$ exceeds a small threshold $k_o$, not as
$\textsf{efs}\log\textsf{efs}$. The cost model used throughout the paper
therefore charges $C_{\theta}({\tt idx},\textsf{efs})=a\log |{\tt idx}| + b\,\textsf{efs} + c$
(Definition~\ref{def:hnsw-cost}); Appendix~\ref{sec:cost-function-selection}
derives this form from the per-expansion hardware cost and gives the
calibration procedure for $\theta=(a,b,c)$.

\begin{figure}[h!]
  \centering
  \includegraphics[width=0.5\linewidth]{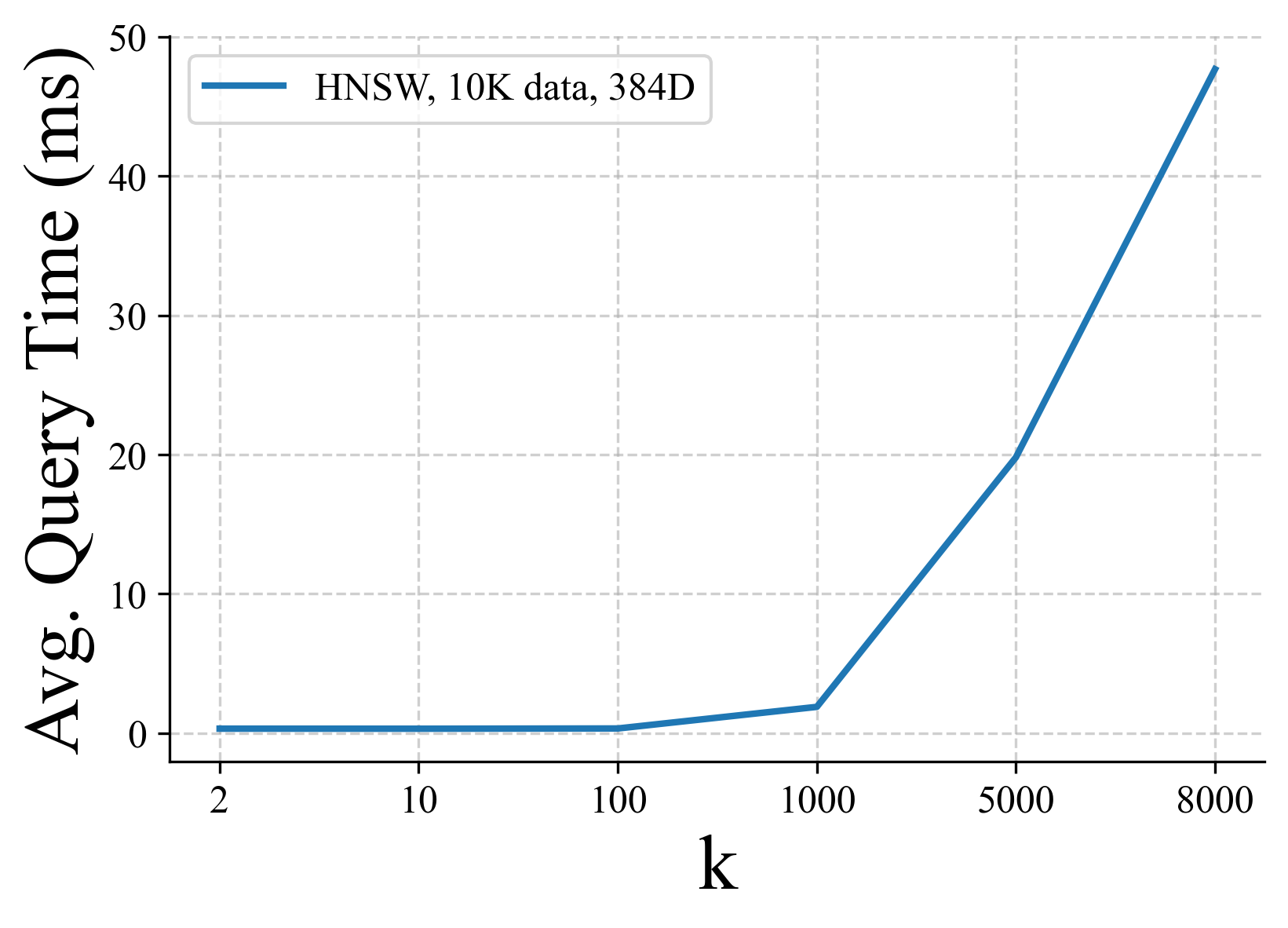}
  \caption{Query time vs.\ top-$k$ (log-$k$ scale).}
  \label{fig:vary_k}
\end{figure}

\begin{figure}[h!]
  \centering
  \includegraphics[width=0.58\linewidth]{figures/linear_scan_vs_HNSW_dim384.png}
  \caption{HNSW vs.\ linear scan ($d{=}384$).}
  \label{fig:vary-dimensions-app}
\end{figure}

Figure~\ref{fig:vary-dimensions-app} complements
Figure~\ref{fig:vary-dimensions}. The trend is consistent with the main
figure: for small partitions, the overhead of entering and traversing an
HNSW index outweighs its pruning benefit, so a linear scan is faster; once
the partition contains a few thousand vectors, HNSW becomes preferable.

\section{Cost Function Construction}
\label{sec:cost-function-selection}

Our partitioning algorithms compare candidate index layouts by their
predicted query latency, so they require a closed-form model
$C_{\theta}({\tt idx},\textsf{efs})$ that maps an index ${\tt idx}$ searched
with beam width $\textsf{efs}$ to wall-clock time. This
appendix justifies the functional form used in
Definition~\ref{def:hnsw-cost}, describes how its coefficients are calibrated,
and validates the choice empirically.

\subsection{
Wall-Clock Latency of HNSW Search}
\label{sec:cost-function-form}

The textbook complexity of an HNSW query decomposes into an upper-layer
greedy descent of $\mathcal{O}(\log |{\tt idx}|)$ hops and a base-layer beam search that
performs $\mathcal{O}(\textsf{efs})$ expansions while maintaining a priority
queue of capacity $\textsf{efs}$. Charging each expansion an
$\mathcal{O}(\log\textsf{efs})$ heap update yields the familiar proxy. Let
$a$ weight upper-layer descent, $b$ weight base-layer work, and $c$ capture
fixed overhead:
$C_\theta^\mathsf{cmp}({\tt idx},\textsf{efs}) \;=\; a\log |{\tt idx}| \;+\; b\,\textsf{efs}\log\textsf{efs} \;+\; c.$
This expression counts \emph{comparisons}. It is the wrong currency for
latency prediction because the heap update is not where time is spent.

Each base-layer expansion also (i)~reads the $M$ neighbor IDs of the popped
node, (ii)~fetches the corresponding $M$ vectors from memory, and
(iii)~computes $M$ distances of $d$ dimensions each. Steps~(ii)--(iii) cost
$\mathcal{O}(Md)$ arithmetic operations plus, once the index outgrows the CPU
cache, $\mathcal{O}(M)$ cache-missing random reads. For a representative
configuration ($d{=}128$, $M{=}16$, $\textsf{efs}{=}100$) a single expansion
performs $M\!\cdot\!d = 2{,}048$ multiply--adds and up to $16$ vector fetches,
against only $\log_2\textsf{efs}\approx 7$ heap comparisons. The heap term is
therefore two to three orders of magnitude cheaper than the distance term and
vanishes into the per-step constant. Since the per-step work does not depend
on $\textsf{efs}$, the base-layer cost is, to first order, \emph{linear} in the
number of expansions:
$C_\theta({\tt idx},\textsf{efs}) \;=\; a\log |{\tt idx}| \;+\; b\,\textsf{efs} \;+\; c.$
Here the constants $a$, $b$, $c$ depend on $d$, $M$, the SIMD width, and the
memory hierarchy of the host, none of which are known a~priori. We therefore
fit them on the deployment machine rather than derive them analytically.

\subsection{Calibrating the Cost Function}
\label{sec:cost-function-calibration}

We isolate the two variable terms with two one-dimensional sweeps.

\smallskip
\noindent\textbf{Upper-layer sweep.} Fix $\textsf{efs}=1$, $k=1$ and vary $|{\tt idx}|$.
With a single-slot beam the base layer does negligible work, so the measured
latency reflects the greedy descent plus fixed overhead. A least-squares fit
of $T_{\textsf{size}}(|{\tt idx}|)=a\log |{\tt idx}| + c_1$ recovers $a$. 

\smallskip
\noindent\textbf{Base-layer sweep.} Fix $|{\tt idx}|=|{\tt idx}_0|$ and vary $\textsf{efs}$. We fit
both candidates,
$T_{\textsf{efs}}^{\text{lin}}(\textsf{efs})=b\,\textsf{efs}+c_2$ and
$T_{\textsf{efs}}^{\text{log}}(\textsf{efs})=b'\,\textsf{efs}\log\textsf{efs}+c_2'$,
and select whichever attains the higher $R^2$.

\smallskip
\noindent\textbf{Combining the intercepts.} Each sweep holds the other
variable fixed, so each intercept already contains the other term's
contribution at the held value. We remove that contribution from both sides,
$c^{(1)} = c_1 - b\cdot 1$ and $c^{(2)} = c_2 - a\log |{\tt idx}_0|$, and average the
two estimates, $c = \tfrac{1}{2}\bigl(c^{(1)}+c^{(2)}\bigr)$, to damp
measurement noise from either sweep.

\label{sec:cost-function-validation}

Figure~\ref{fig:time-vs-ef} shows the base-layer sweep on the default
configuration ($d{=}128$, $M{=}16$). The linear model tracks the measurements
across the full range of $\textsf{efs}$, whereas the
$\textsf{efs}\log\textsf{efs}$ model bends upward and over-predicts at large
$\textsf{efs}$---precisely the regime that impure indices enter when
$\lambda\,\textsf{efs}$ is inflated (Definition~\ref{def:hnsw-cost}). The
linear fit attains $R^2=0.9938$ versus $R^2=0.9811$ for the log-linear fit,
confirming the hardware argument of \S\ref{sec:cost-function-form}. We
therefore adopt
\[
    {\,C_{\theta}(|{\tt idx}|,\textsf{efs}) \;=\; a\log |{\tt idx}| \;+\; b\,\textsf{efs} \;+\; c\,}
\]
with $\theta=(a,b,c)$ fitted once per dataset configuration. For the
configuration of $d{=}128$ and $M{=}16$, the calibrated model is
\[
    C_{\theta}(|{\tt idx}|,\textsf{efs}) \;=\; 0.0821\log_2 |{\tt idx}| \;+\; 0.1159\,\textsf{efs} \;+\; 2.3110,
\]
in which the base-layer term dominates for any practical $\textsf{efs}$ while
the $\log |{\tt idx}|$ term contributes a small but non-negligible offset. The logarithm
base is absorbed into $a$; our implementation uses $\log_2$.



\begin{figure}[t]
    \centering
    \includegraphics[width=0.26\textwidth]{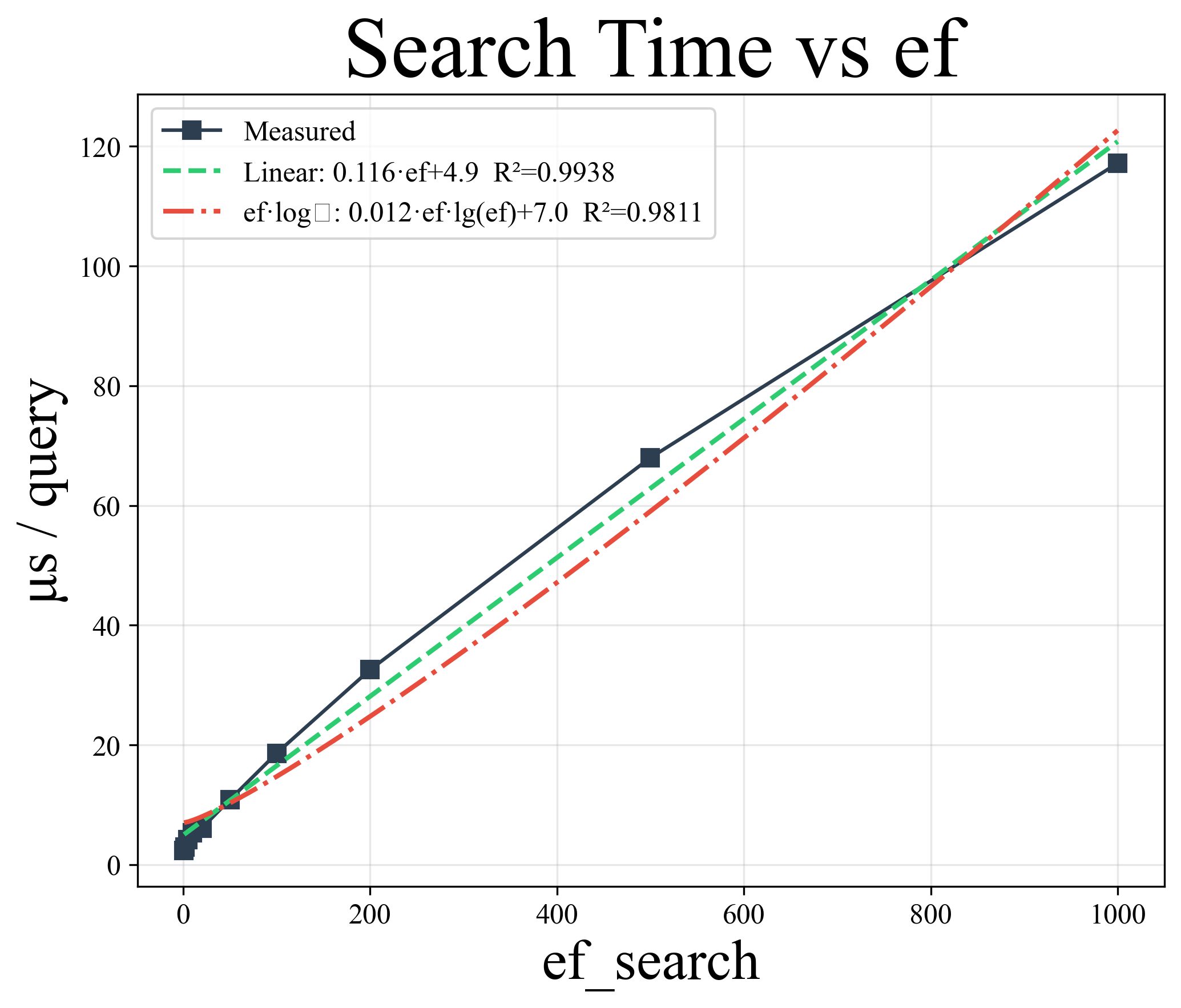}
    \caption{Base-layer sweep: search time as $\textsf{efs}$ varies at fixed
    $|{\tt idx}|$. The linear model ($R^2{=}0.994$) fits better than
    $\textsf{efs}\log\textsf{efs}$ ($R^2{=}0.981$), which over-predicts at large
    $\textsf{efs}$.}
    \label{fig:time-vs-ef}
\end{figure}

\section{Notation}
\label{sec:notations}
Table~\ref{tab:notation} summarizes the notation used throughout the paper.
The symbols are grouped by the section that introduces them: dataset and
access-control primitives from \S\ref{sec:preliminary}, lattice and indexing
quantities from \S\ref{sec:ac-indexing}, and the cost-model parameters from
Definition~\ref{def:hnsw-cost}.

\begin{table}[h]
        \centering
        \caption{Frequently used notation.}
        \label{tab:notation}
        \begin{tabular}{@{}llp{5cm}@{}}
        \toprule
        \textbf{Symbol} & \textbf{Domain} & \textbf{Description} \\
        \midrule
        $\mathcal{D}$ & Vectors & Entire vector dataset \\
        $\mathcal{R}$ & Roles & Set $\{ r_1, r_2, \dots \}$ of user roles \\
        $\tau$ & $\mathcal{P}(\mathcal{R})$ & A combination of roles \\
        $N^{\mathit{ex}}(\tau)$ & Vectors & Vectors exclusively accessible to $\tau$\\
        $\mathcal{L}_{\mathit{ex}}$ & DAG & Exclusive lattice. Nodes $N^{\mathit{ex}}(\tau)\in \mathcal{L}_{\mathit{ex}}$ store vectors exclusively accessible to $\tau$ \\
        $\mathcal{L}$ & DAG & Optimized lattice; each node $N(\tau)$ stores one or more exclusive blocks \\
        $\mathcal{D}(r)$ & Vectors & Set of all vectors accessible to role $r$\\
        $\mathcal{I}$ & Indices & Set of HNSW indices \\
        $\mathcal{D}({\tt idx})$ & Vectors & Set of vectors contained in index ${\tt idx}$ \\
        $\mathcal{I}(r)$ & Indices & HNSW indices touched for queries with role $r$ \\
        $\textsf{QP}(r)$ &  & Query plan for role $r$; data blocks required for queries with role $r$ \\
        $\textsf{Cost}(q)$ & $\mathbb{R}^+$ & Query cost for $q = (\mathbf{x},r)$ \\
        $\textsf{SA}(\mathcal{I})$ & $\mathbb{R}^+$ & Storage amplification: $\frac{\sum_{{\tt idx}\in \mathcal{I}} |\mathcal{D}({\tt idx})|}{|\mathcal{D}|}$ \\
        $\beta$ & $\mathbb{R}^+$ & Storage amplification budget \\
        $\Lambda$ & $\mathbb{N}$ & HNSW indexing threshold \\
        \bottomrule 
        \end{tabular}
        \end{table}

\section{Exclusive Lattice Construction}
\label{sec:exclusive_lattice_construction}

\begin{algorithm}
    \caption{\textbf{Exclusive Lattice Construction}}
    \label{alg:exclusive_lattice_construction}
    \begin{algorithmic}[1]
        \Require $\mathcal{D}$: the access control dataset. 

        \Function{GetExclusiveLattice}{$\mathcal{D}$}

        \State $\mathit{areas}\gets \textsc{GetDataAreas}(\mathcal{D})$

        \State $\mathcal{N}\gets \mathsf{sort}(\mathit{areas}.\mathit{nodes}, \prec_{\mathit{role\_num}})$, $\mathcal{L}_\mathit{ex}\gets \phi$

        \For{$N_i\in \mathcal{N}$}
            \For{$N_c\in \mathcal{N}$}
            \If{$N_c$ is a child of $N_i$}
            \State $\mathit{areas}[N_i]\gets \mathit{areas}[N_i] \backslash \mathit{areas}[N_c]$
            
            \EndIf
            \EndFor
            \If{$\mathit{areas}[N_i]$.$\mathit{size}>0$}
            \State $\mathcal{L}_\mathit{ex}[N_i]\gets\mathit{areas}[N_i]$
            \EndIf
        \EndFor
        \State \textbf{return} $\mathcal{L}_\mathit{ex}$
        \EndFunction

        \Function{GetDataAreas}{$\mathcal{D}$}
        
        \State $\mathit{roles}, \mathit{data} \gets \textsc{GetRolesAndData}(\mathcal{D})$\label{ln:find_single_category_and_data_ids} 

        \State $\mathit{cache}_{\mathit{single}}\gets \textsc{GetSingleRoleData}(\mathit{role}, \mathit{data})$

        \State $\mathit{areas}, \mathit{cache}\gets \mathit{cache}_{\mathit{single}}, \mathit{cache}_{\mathit{single}}$\label{nl:initiate_node_areas}  

        \State $\ell_\mathit{max}\gets \mathit{roles}.\mathit{length}$ \Comment{Max role combinations}

        \For{$i \in \mathit{range}(2, \ell_\mathit{max})$}\label{nl:process_in_layers_lattice_construction}\Comment{From Layer 2 to $\ell_\mathit{max}$}
            \State $\mathit{policies}\gets \mathit{cache}_\mathit{last}.\mathit{keys}()$

            \For{$\mathit{p}\in \mathit{policies}$}
                \For{$\mathit{r}\in \mathit{cache}_{\mathit{single}}.\mathit{keys}()$}
                    \If{$\mathit{r}\notin \mathit{p}$}
                    \State $\mathit{p}^{\prime}\gets \mathit{p} | \mathit{r}$, $d\gets \mathit{cache}[\mathit{p}]\cap\mathit{\mathit{cache}_{\mathit{single}}[r]}$

                    \If{$d$}
                        $\mathit{areas}[{p^{\prime}}]\gets d$, $\mathit{cache}[{p^{\prime}}]\gets d$
                    \EndIf
                    \EndIf
                \EndFor
                \State $\mathit{cache}[\mathit{p}]$.$\mathit{delete}()$
            \EndFor
        \EndFor
        \State \textbf{return} $\mathit{areas}$
        \EndFunction
    \end{algorithmic}
\end{algorithm}

Exclusive lattice can be constructed by processing the dataset twice; see Algorithm~\ref{alg:exclusive_lattice_construction} for details.
Below we describe a method that designs the metadata for the access control dataset properly to construct the exclusive lattice efficiently.
The access-control metadata can be represented as a sparse binary matrix in Compressed Sparse Row (CSR) format. Each row corresponds to one data vector, and each column corresponds to one filter, policy label, or access-control attribute. A nonzero entry indicates that the corresponding data vector satisfies that filter. In CSR format, the matrix is stored using row offsets and column indices: for row $i$, the active filters are given by
$\texttt{indices}[\texttt{indptr}[i] : \texttt{indptr}[i+1]].$
Before constructing the lattice, the matrix is converted to boolean form, duplicate entries are merged, and column indices within each row are sorted. This ensures that each row has a canonical representation of its filter set.

The exclusive lattice groups data vectors by their exact filter combination. For each nonempty row, the sorted tuple of active filter IDs is treated as a  signature for a role set $\tau$, and the row ID is inserted into the lattice cell corresponding to that signature. Therefore, each data vector belongs to exactly one lattice cell, and the cell representing its role set $\tau$. 

\section{The Optimization Problem}\label{sec:opt_formulation}
We now give a precise optimization view of the index grouping problem. Given the lattice $\mathcal{L}_{\mathit{ex}}$ with a set of exclusive nodes $\mathcal{N}_{\mathit{ex}}$ and letting $\mathcal{T}$ be the set of role subsets (e.g., the set of access control policies, or, the set of combinations of accessible roles)
corresponding to $\mathcal{N}_{\mathit{ex}}$.
we build at most one index per $\tau \in \mathcal{T}$ by merging and copying exclusive nodes upward in the lattice.
The decision variables are:
\begin{itemize}
  \item $z_{\tau',\tau} \in \{0,1\}$: $N^{\mathit{ex}}(\tau')$ is included in the index for $\tau$;
  \item $x_\tau \in \{0,1\}$: the index for $\tau$ is materialized;
  \item $y_{q,\tau} \in \{0,1\}$: query $q$ consults the index for $\tau$.
\end{itemize}

The candidate size of an index for a set $\tau$ of roles is
$s_\tau = \sum_{\tau'} z_{\tau',\tau} \cdot |N^{\mathit{ex}}(\tau')|.$
For a query $q$ with role set $q.\rho$, the authorized blocks are
$\mathcal{N}_q = \{N^{\mathit{ex}}(\tau') : \tau' \cap q.\rho \ne \emptyset\}$.
If the query uses indices $\{\tau : y_{q,\tau}=1\}$ to look for its nearest neighbors, the query-time cost is:
\[
\textsf{Cost}(q) = \sum_{\tau : y_{q,\tau} = 1} \log s_\tau + \alpha \sum_{N^{\mathit{ex}}(\tau') \in \mathcal{N}_q} \, |N^{\mathit{ex}}(\tau')| \cdot \prod_{\tau : y_{q,\tau} = 1} (1 - z_{\tau', \tau})
\]
The second term inflates the cost proportionally to any portion of $q$'s authorized blocks that are not covered by the selected indices.

The overall workload objective is:
\begin{equation}
\label{eq:full-minlp-obj}
\min_{x,z,y} \sum_{q \in \mathcal{Q}} q.\omega
\left(
  \sum_{\tau \in \mathcal{T}} y_{q,\tau} \log s_\tau
  + \alpha \sum_{N^{\mathit{ex}}(\tau') \in \mathcal{N}_q} \, |N^{\mathit{ex}}(\tau')| \cdot \prod_{\tau : y_{q,\tau} = 1} (1 - z_{\tau', \tau})
\right)
\end{equation}
subject to:
\begin{align}
  & \forall \tau', \tau: \quad z_{\tau',\tau} \le x_\tau
    && \text{(include in materialized index)} \label{con:minlp-use} \\
  & \forall \tau', \tau: \quad 0 \le z_{\tau',\tau} \le a_{\tau',\tau}
    &&  a_{\tau',\tau} = 1 \text{ if } \tau' \cap \tau \ne \emptyset \text{ else } 0 \label{con:minlp-lattice} \\
  & \forall \tau: \quad s_\tau = \sum_{\tau'} z_{\tau',\tau} |N^{\mathit{ex}}(\tau')| \label{con:minlp-size} \\
  & \sum_{\tau} x_\tau s_\tau \le \beta |\mathcal{D}| && \text{(storage budget)} \label{con:minlp-store}
\end{align}
All variables are binary; $\beta \ge 1$ bounds the storage amplification.
Because $s_\tau$ and the cost expression contain products and logs, the objective \ref{eq:full-minlp-obj} is non-convex.

A common simplification is to pre-enumerate a finite catalogue $\mathcal{G}_c$ of valid groups (from merge/copy operations), treat $|g|$ as a constant, and decide which groups to index.
Let $x_g, y_{q,g}, u_{q,N^{\mathit{ex}}(\tau')}$ be binary variables as in \S\ref{sec:problem-spec}.
With $\log|g|$ constant, the objective becomes linear:
\begin{equation}
\label{eq:milp-obj-compact}
\min_{x,y,u} \sum_{q \in \mathcal{Q}} q.\omega
\left(
  \sum_{g \in \mathcal{G}_c} y_{q,g} \log|g|
  + \alpha \sum_{N^{\mathit{ex}}(\tau') \in \mathcal{N}_q} u_{q,N^{\mathit{ex}}(\tau')} |N^{\mathit{ex}}(\tau')|
\right)
\end{equation}
subject to:
\begin{align}
  y_{q,g} &\le x_g && \forall q,g \label{con:milp-use} \\
  \sum_{g : N(\tau') \in g} y_{q,g} + u_{q,N(\tau')} &\ge 1 && \forall q, N(\tau') \in \mathcal{N}_q \label{con:milp-cover} \\
  x_g &= 0 && \text{if } |g| < \Lambda \label{con:milp-thresh} \\
  \sum_g x_g |g| &\le \beta |\mathcal{D}| \label{con:milp-store}
\end{align}
Equations~\ref{eq:milp-obj-compact}--\ref{con:milp-store} define a 0--1 MILP.

Enumerating $\mathcal{G}_c$ reduces the problem to a MILP, but scale remains a challenge:
$|\mathcal{G}_c|$ can reach tens of thousands; $y_{q,g}$ spans millions of binaries for realistic workloads,
and runtimes remain prohibitive. Moreover, fixed catalogues cannot adapt to shifting role distributions or data updates.

\section{Node Relations in Lattices}
\label{app:lattice-relations}

Below we define the \emph{descendant--ancestor} and \emph{relative} relations used by \name and \effname.

\begin{definition}[Descendant--ancestor relation]
\label{def:lattice-ancestor-descendant}
  For two exclusive blocks $N^{\mathit{ex}}(\tau)$ and
  $N^{\mathit{ex}}(\tau')$ in $\mathcal{L}_{\mathit{ex}}$,
  $N^{\mathit{ex}}(\tau)$ is an \emph{ancestor} of
  $N^{\mathit{ex}}(\tau')$, and $N^{\mathit{ex}}(\tau')$ is a
  \emph{descendant} of $N^{\mathit{ex}}(\tau)$, when both conditions hold:
  \begin{enumerate}
      \item \textbf{Containment:} $\tau \subset \tau'$, so every role
      authorized at the ancestor $N^{\mathit{ex}}(\tau)$ is also authorized at the descendant $N^{\mathit{ex}}(\tau')$.
      \item \textbf{Reachability:} there is a directed path from
      $N^{\mathit{ex}}(\tau)$ to $N^{\mathit{ex}}(\tau')$ in
      $\mathcal{L}_{\mathit{ex}}$.
  \end{enumerate}
  Unlike the parent--child relation defined in Definition~\ref{def:exclusive-lattice}, \emph{descendant--ancestor} relation does not require adjacency.
\end{definition}

\begin{definition}[Sibling relation]
\label{def:lattice-sibling}
  Two exclusive blocks $N^{\mathit{ex}}(\tau)$ and
  $N^{\mathit{ex}}(\tau')$ in $\mathcal{L}_{\mathit{ex}}$ are
  \emph{siblings} when both conditions hold:
  \begin{enumerate}
      \item \textbf{Same layer:} $|\tau| = |\tau'|$ and $\tau \ne \tau'$.
      \item \textbf{Shared roles:} $\tau \cap \tau' \ne \emptyset$.
  \end{enumerate}
\end{definition}

\section{Copy Operation Dominance}\label{sec:copy_dominance}

To justify prioritizing copy operations, we compare their effect on query workloads.
Let $\textsf{Cost}(Q,\mathcal{L})$ denote the cost of evaluating a workload $Q$ over lattice $\mathcal{L}$.
Consider a child node $N_c=N(\tau_c)$ and an ancestor $N_a=N(\tau_a)$.
Let $\mathcal{L}_{\mathit{copy}}$ be the lattice where $N_c$ is copied into $N_a$ (so both remain), and $\mathcal{L}_{\mathit{merge}}$ the lattice where $N_c$ is merged into $N_a$ (and removed). The following result shows that copying never performs worse than merging, and may strictly improve cost for workloads that isolate $N_c$.

\begin{theorem}[Copy Dominance]
\label{thm:copy-dominates}
Suppose both copying and merging $N_c$ into $N_a$ yield positive cost reduction for queries that involve roles in $\tau_a$. Then the following hold for all queries in $Q$:
\begin{itemize}
    \item We have $\textsf{Cost}(Q,\mathcal{L}_{\mathit{copy}})$  $< \textsf{Cost}(Q,\mathcal{L}_{\mathit{merge}})$ if a query accesses only $N_c$.

    \item We have $\textsf{Cost}(Q,\mathcal{L}_{\mathit{copy}}) = \textsf{Cost}(Q,\mathcal{L}_{\mathit{merge}})$ if a query accesses both $N_c$ and $N_a$.
\end{itemize}
\end{theorem}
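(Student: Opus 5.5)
The plan is to compare the two lattices query by query, using the per-index cost of Definition~\ref{def:hnsw-cost}. Since $\textsf{Cost}(Q,\cdot)$ is a sum over queries of a sum over the indices in each query's plan, it suffices to look at each query class separately.

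First, fix the structural facts. In both lattices every index other than the one holding $N_a$ (and, in $\mathcal{L}_{\mathit{copy}}$, $N_c$) is identical. Hence only queries whose plan touches $N_c$ or $N_a$ can differ in cost. In both lattices the enlarged node is $N_a\cup N_c$ with the same contents, so its size $|N_a\cup N_c|$ is the same. Its inflation factor $\lambda^r$ is also the same for every role $r$, because $\lambda^r$ depends only on $\mathcal{D}(\mathtt{idx})$ and $\mathcal{D}(r)$ (Definition~\ref{def:impurity-factor}). Consequently $\mathsf{Cost}_{\textsf{H}}(N_a\cup N_c, r)$ coincides in the two lattices. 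The only difference is that $\mathcal{L}_{\mathit{copy}}$ additionally retains $N_c$ as a separate, smaller node. For $r\in\tau_c$ that node is pure, since $N_c$ was pure for its own role set.

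\emph{Second case: a query accesses both $N_c$ and $N_a$.} In $\mathcal{L}_{\mathit{merge}}$ the query is served by the single node $N_a\cup N_c$. In $\mathcal{L}_{\mathit{copy}}$ the same node covers everything the query needed from $N_a$ and $N_c$. By the hypothesis that copying yields a positive cost reduction for roles in $\tau_a$, the copy plan also routes such a query to $N_a\cup N_c$ and drops $N_c$; otherwise the copy would not reduce cost relative to the pre-operation plan. The plans, and hence the costs, then coincide, which gives equality.

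\emph{First case: a query accesses only $N_c$.} This is a role $r\in\tau_c\setminus\tau_a$. In $\mathcal{L}_{\mathit{merge}}$ it must probe $N_a\cup N_c$, which is impure for $r$ whenever $N_a$ is nonempty. Its cost is therefore
\[
a\log|N_a\cup N_c|+b\lceil\lambda^r\mathsf{efs}\rceil+c, \qquad \lambda^r\ge 2 .
\]
In $\mathcal{L}_{\mathit{copy}}$ it probes the pure $N_c$ at cost $a\log|N_c|+b\,\mathsf{efs}+c$. Since $|N_c|<|N_a\cup N_c|$ and $\lambda^r>1$, this is strictly smaller.

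The step I expect to be the main obstacle is the equality case. It needs the claim that the minimal-cost query plan of \S\ref{sec:qplan} in $\mathcal{L}_{\mathit{copy}}$ actually uses the enlarged $N_a$ rather than keeping both nodes. I would justify it by reading the hypothesis ``copying yields positive cost reduction for roles in $\tau_a$'' as exactly this routing choice. I would also note that any other choice can only lower the copy-side cost, so at worst the equality becomes $\le$. Summing the per-query comparisons over $Q$ then finishes the argument.
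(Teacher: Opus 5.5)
Your proposal is correct and follows the paper's proof essentially step for step. In both lattices the enlarged node $N_a\cup N_c$ is identical, so a query authorized for both blocks is routed to it at equal cost; a query authorized only for $N_c$ pays $C_\theta(|N_c|,\mathsf{efs})$ on the retained pure copy versus $C_\theta(|N_a\cup N_c|,\lambda\,\mathsf{efs})$ with $\lambda\ge 2$ after merging. The only difference is that the paper simply asserts the routing to $N_a\cup N_c$ in the equality case, whereas you derive it from the positive-reduction hypothesis, so your fallback to ``$\le$'' is not needed.
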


\begin{proof}
After either operation the ancestor becomes
$N_a'=N_a\cup N_c$; the two lattices differ only in whether $N_c$ remains.
For a query $q$ authorized for both $N_c$ and $N_a$, the optimal plan in
either lattice routes $q$ to $N_a'$, which covers both blocks at identical
cost, so $\textsf{Cost}(q,\mathcal{L}_{\mathit{copy}})=
\textsf{Cost}(q,\mathcal{L}_{\mathit{merge}})$.
For a query $q$ authorized for $N_c$ but not $N_a$, $\mathcal{L}_{\mathit{copy}}$
still offers the original pure node $N_c$ at cost
$C_\theta(|N_c|,\textsf{efs})$. In $\mathcal{L}_{\mathit{merge}}$ the only node
covering $N_c$'s vectors is $N_a'$, which is impure for $q$ with
$\lambda=\lceil|N_a'|/|N_c|\rceil\ge 2$ and incurs cost
$C_\theta(|N_a'|,\lambda\,\textsf{efs})>C_\theta(|N_c|,\textsf{efs})$ under
Definition~\ref{def:hnsw-cost}. Hence we have
$\textsf{Cost}(q,\mathcal{L}_{\mathit{copy}})$ 
$<\textsf{Cost}(q,\mathcal{L}_{\mathit{merge}})$.
\end{proof}

\section{Supplementary Materials for \name}
\label{app:veda-supplementary}
This section provides supplementary materials for \name, including the overall
workflow pseudocode, details of the finalization phase, correctness proofs for
the copy and merge phases, and complementary algorithms for the copy and merge
phases.

\subsection{Overall Workflow}
\label{app:veda-overview-algorithm}

Algorithm~\ref{alg:adaptive} summarizes how \name converts the exclusive
lattice into the final vector storage layout. The algorithm alternates between
budgeted copy operations and merge operations, maintains the query plans
affected by each refinement, and then finalizes the lattice by separating
unindexable nodes into leftovers and building the corresponding HNSW indices.

\begin{algorithm}[h!]
\caption{\textbf{\name - Overview}}
\label{alg:adaptive}
\begin{algorithmic}[1]

\Require exclusive lattice $\mathcal{L}_\mathit{ex}$, indexing threshold $\Lambda$, storage amplification budget $\beta$, role set $\mathcal{R}$ on $\mathcal{D}$.

\State $\mathcal{L}\gets \mathcal{L}_\mathit{ex}$, $\mathcal{DA}\gets \mathsf{get\_child\_ancestor\_pairs}(\mathcal{L}_\mathit{ex})$

\State $\textsf{QP} = \textsf{GetQueryPlans}(\mathcal{R}, \mathcal{L})$\Comment{Detailed in \S\ref{sec:qplan}}

\While{\texttt{True}} \Comment{Iterate Algorithm~\ref{alg:veda-copy} and Algorithm~\ref{alg:veda-merge}}
\If{$\beta > 0$}~$\mathcal{L}, \textsf{QP}\gets \mathsf{Copy}(\mathcal{L}, \mathcal{L}_\mathit{ex}, \beta, \mathcal{DA}, \textsf{QP})$ 
\EndIf
\If{not first round and no copy applied}~\textbf{break}
\EndIf
\State $\mathcal{L}, \textsf{QP}\gets \mathsf{Merge}(\mathcal{L}, \mathcal{L}_\mathit{ex}, \mathcal{DA}, \textsf{QP})$ 
\If{no merge applied}~\textbf{break}
\EndIf
\EndWhile
\State $\mathcal{L}, \textsf{QP}\gets \mathsf{split\_small\_nodes\_into\_leftovers}(\mathcal{L}, \Lambda)$

\If{$\frac{|\mathcal{L}|}{|\mathcal{D}|}<\beta$} \Comment{Storage reclaimed after merging}
\State $\mathcal{L}, \textsf{QP}\gets \mathsf{HandleSuperImpureNodes}(\mathcal{L}, \mathcal{L}_\mathit{ex}, \beta, \mathcal{DA}, \textsf{QP})$
\EndIf

\State $\mathcal{U}$, $\mathcal{I}\gets \mathsf{build\_vector\_storage}(\mathcal{L})$ 

\State \textbf{return} $\mathcal{U}$, $\mathcal{I}$
\end{algorithmic}
\end{algorithm}

\newpage

\begin{algorithm}
    \caption{\textbf{\name -- Super-Impure Node Refinement}}
    \label{alg:adaptive-finalize}
    \begin{algorithmic}[1]

    \Require current lattice $\mathcal{L}$, exclusive lattice $\mathcal{L}_{\mathit{ex}}$, 
    query plan $\textsf{QP}$, and remaining storage budget $\mathit{buf}$.

    \State \textbf{// Step 1: Collect impure $(r, N)$ candidates}
    \State $\mathcal{C} \gets \emptyset$, $\textsf{Copied} \gets \emptyset$
    \State $\mathit{ref}[N] \gets |\{r : N \in \textsf{QP}(r)\}|$ for each $N \in \mathcal{L}$ \Comment{\# of roles that have $N$ in $\textsf{QP}(r)$}
    \For{role $r \in \textsf{QP}$, node $N \in \textsf{QP}(r)$}
        \State $\mathit{pure}^{\mathit{ex}}\gets \{N^{\mathit{ex}}(\tau) \in N : r \text{ is permitted by } \tau\}$
        \State $\mathit{pure}_s\gets \sum_{N^{\mathit{ex}}(\tau) \in \mathit{pure}^{\mathit{ex}}} |N^{\mathit{ex}}(\tau)|$
        \If{$0 < \mathit{pure}_s < |N|$}
            \State Add $(r, N, \mathit{pure}^{\mathit{ex}}, \mathit{imp}{=}\frac{|N|}{\mathit{pure}_s}, \mathit{pure}_s)$ to $\mathcal{C}$
        \EndIf
    \EndFor
    \State Sort $\mathcal{C}$ by $(\mathit{imp}, -\mathit{pure}_s)$ decreasingly \Comment{Prioritize impure nodes}

    \State \textbf{// Step 2: Refine each candidate, most-impure first}
    \For{$(r, N, \mathit{pure}^{\mathit{ex}}, \cdot, \cdot) \in \mathcal{C}$}
        \If{$N \notin \textsf{QP}(r)$ \textbf{or} $N \notin \mathcal{L}$}~\textbf{continue}
        \EndIf
        \State $\mathit{copy}_s \gets \sum_{N^{\mathit{ex}}(\tau) \in \mathit{pure}^{\mathit{ex}} \setminus \textsf{Copied}}\bigl|N^{\mathit{ex}}(\tau)\bigr|$
        \If{$\mathit{buf} < \mathit{copy}_s$}~\textbf{continue}
        \EndIf
        \For{$N^{\mathit{ex}}(\tau) \in \mathit{pure}^{\mathit{ex}}$} \Comment{Algorithm~\ref{alg:separate_pure_parts}}
            \State $(N^{\mathit{ex}}(\eta), \mathit{buf}) \gets \mathsf{Separate}(N^{\mathit{ex}}(\tau), \mathit{buf}, \mathcal{L}, \textsf{Copied})$
            \State $\textsf{QP}(r).\textsf{add}(N^{\mathit{ex}}(\eta))$
        \EndFor
        \State $\textsf{QP}(r).\textsf{remove}(N)$; $\mathit{ref}[N] \gets \mathit{ref}[N] - 1$
        \If{$\mathit{ref}[N] = 0$}~delete $N$ from $\mathcal{L}$; $\mathit{buf} \gets \mathit{buf} + |N|$
        \EndIf
    \EndFor
    \State \Return $\mathcal{L}$, $\textsf{QP}$

\end{algorithmic}
\end{algorithm}

\subsection{Phase 3: Finalization}
\label{sec:veda_finalize}

In the finalization phase, any remaining nodes that are not indexable are decomposed into exclusive blocks for efficient linear scan.
As there might be duplicated data between data groups that free storage after decomposition, \name identifies nodes that are ``super impure'' to  queried roles and create copies of the pure parts to utilize the freed storage.

Algorithm~\ref{alg:adaptive-finalize} formalizes this procedure.
The algorithm first computes each node $N$'s reference count in $\mathcal{L}$, i.e., the number of roles that access a node $N$, denoted $\mathit{ref}[N]$. Then, for each role $r$, the algorithm computes its impurity of each node in $\textsf{QP}(r)$ with the pure part of $N$ for $r$, denoted $\mathit{pure}^{\mathit{ex}}$.
If this pure part is smaller than $N$, $N$ is impure for $r$, and the pair is added to a candidate list with an impurity score.
Candidates are processed in decreasing impurity order.
Before refining a candidate, the algorithm computes the additional copy size $\mathit{copy}_s$ by counting only exclusive blocks in $\mathit{pure}^{\mathit{ex}}$ that have not already been copied by earlier refinements.
If the remaining budget is sufficient, the algorithm materializes each exclusive block into a standalone node that can be used directly in the query plan.
When the block is already standalone in $\mathcal{L}$, the algorithm reuses it without spending budget.
Otherwise, the block is copied into $\mathcal{L}$: it keeps its original key if that key is free, and uses a renamed key if the original key is already occupied by a non-standalone node.
The query plan $\textsf{QP}(r)$ is redirected to these materialized blocks, and the original impure node is removed from $\textsf{QP}(r)$.
When this update makes $\mathit{ref}[N]=0$, the node is not referenced by any query and is safe to delete from $\mathcal{L}$ to reclaim storage. Algorithm~\ref{alg:separate_pure_parts} describes the details of splitting the pure parts from impure nodes.

\begin{algorithm}
    \caption{\textbf{Separate Pure Parts from Impure Node}}
    \label{alg:separate_pure_parts}
    \begin{algorithmic}[1]

    \Require pure exclusive node $N^{\mathit{ex}}(\tau)$, current lattice $\mathcal{L}$, exclusive lattice $\mathcal{L}_{\mathit{ex}}$, copied-node set $\textsf{Copied}$, and remaining storage budget $\mathit{buf}$.

    \Function{Separate}{$N^{\mathit{ex}}(\tau)$, $\mathit{buf}$, $\mathcal{L}$, $\textsf{Copied}$}
        \Comment{Ensure $N^{\mathit{ex}}(\tau)$ is queryable as a standalone key}
        \If{$N^{\mathit{ex}}(\tau)$ is standalone in $\mathcal{L}$}~\Return $N^{\mathit{ex}}(\tau),\ \mathit{buf}$
        \EndIf
        \State $\eta\gets(-1,\tau)$ \Comment{In case $\tau$ occupied by a non-standalone node}
        \If{$N^{\mathit{ex}}(\tau) \in \textsf{Copied}$}~\Return $N^{\mathit{ex}}(\eta),\ \mathit{buf}$
        \EndIf
        \State $\textsf{Copied}.\textsf{add}(N^{\mathit{ex}}(\tau))$
        \If{$N^{\mathit{ex}}(\tau) \in \mathcal{L}$}
        \State Copy $N^{\mathit{ex}}(\tau)$ to $\mathcal{L}$ as $N^{\mathit{ex}}(\eta)$
        \State \Return $N^{\mathit{ex}}(\eta),\ \mathit{buf} - |\mathcal{L}_{\mathit{ex}}[N^{\mathit{ex}}(\tau)]|$
        \Else
        \State Copy $N^{\mathit{ex}}(\tau)$ to $\mathcal{L}$ as $N^{\mathit{ex}}(\eta)$
        \State \Return $N^{\mathit{ex}}(\tau),\ \mathit{buf} - |\mathcal{L}_{\mathit{ex}}[N^{\mathit{ex}}(\tau)]|$
        \EndIf
    \EndFunction

    \end{algorithmic}
\end{algorithm}

\subsection{\name Correctness Proofs}
\label{app:veda-correctness-proofs}
Below we provide the proofs for Theorem~\ref{thm:copy-phase-correctness} and Theorem~\ref{thm:merge-phase-correctness}.

\begin{proof}[Proof of Theorem~\ref{thm:copy-phase-correctness}]
(1) A copy is applied only if $f(\mathcal{L}_{t-1},e^*)\ge 0$, i.e., the
numerator of Equation~\ref{eqn:benefit_function} is nonnegative, so
$\textsf{AvgCost}$ is non-increasing. (2) Each copy decrements $\mathit{buf}$
by exactly $\Delta S(e^*)$ and is admitted only when
$\Delta S(e^*)\le\mathit{buf}$; by induction
$|\mathcal{L}_t|\le\beta|\mathcal{L}_{\mathit{ex}}|$. (3) Each copy strictly
enlarges some ancestor, so a given ordered pair $(N_c,N_a)$ is applied at most
once; $\textsf{PR}$ is finite and the guard in
Equation~\ref{eqn:greedyconditions} eventually fails.
\end{proof}

\begin{proof}[Proof of Theorem~\ref{thm:merge-phase-correctness}]
(1) A merge is applied only when its benefit is strictly positive, and since
$\Delta S(e)=0$ the numerator of Equation~\ref{eqn:benefit_function} equals
the benefit. (2) Merging takes the set union of the two nodes' vectors and
deletes the child, so total stored vectors drop by exactly the overlap.
(3) Each merge removes one node from $\mathcal{L}$; at most
$|\mathcal{N}_\mathit{ex}|$ merges can occur.
\end{proof}

\subsection{Copy-Phase Complementary Algorithms}
\label{app:veda-copy-helpers}

Algorithms~\ref{alg:veda-copy-complementary} and~\ref{alg:veda-copy-benefit}
describe complementary algorithms for the copy-phase in \name called from
Algorithm~\ref{alg:veda-copy}. \textsf{GetCopyPairs} refreshes the priority
list of available descendant--ancestor copy pairs, rescoring only pairs whose
ancestor was touched by the previous copy. \textsf{GetCopyBenefit} evaluates the benefit ratio for each candidate copy pair based on
Equation~\ref{eqn:benefit_function}, 
and
\textsf{GetDiff} in Algorithm~\ref{alg:veda-copy-benefit} computes the change in a single role's plan cost when its
query plan switches from the current cover to one that uses the enlarged
ancestor.

\begin{algorithm}
\caption{\textbf{GetCopyPairs} for \name}
\label{alg:veda-copy-complementary}
\begin{algorithmic}[1]


\Require lattice $\mathcal{L}$, available copy pairs $\textsf{PR}$, last updated ancestor
$N_a^\mathit{last}$, and query plan $\textsf{QP}$ for $\mathcal{L}$.
    \For{$(N_c, N_a)\in\textsf{PR}$}
        \If{$!N_a^\mathit{last}$ or $N_a^\mathit{last}=N_a$}
        \If{$N_c \notin \mathcal{L}$}~$\textsf{PR}[(N_a, N_c)].\textsf{delete}()$\EndIf
        \State $\textsf{PR}[(N_a, N_c)]\gets \mathsf{GetCopyBenefit}(N_a, N_c,  \mathcal{L}, \mathcal{L}_{\mathit{ex}}, \textsf{QP})$
        \EndIf
    \EndFor
    \State Sort $\textsf{PR}$ by their benefits
    \State \textbf{return } $\textsf{PR}$

\end{algorithmic}
\end{algorithm}

\begin{algorithm}
    \caption{\textbf{GetCopyBenefit} for \name}
    \label{alg:veda-copy-benefit}
    \begin{algorithmic}[1]
\Require ancestor node $N_a$, descendant node $N_c$, exclusive lattice
$\mathcal{L}_{\mathit{ex}}$, lattice $\mathcal{L}$, and query plan $\textsf{QP}$
for $\mathcal{L}$.

\State $\Delta\gets 0$, $\Delta S(e(N_c, N_a))\gets |N_a\cup\mathcal{L}_{\mathit{ex}}[N_c]| - |N_a|$
\For{$r\in \textsf{QP}$}
\If{$N_a\in \textsf{QP}(r)$} 
\If{$N_c\in\textsf{QP}(r)$}
\State $\Delta\gets \Delta + \log (|N_a| + 1) + \log(|\mathcal{L}_{\mathit{ex}}[N_c]| + 1)$
\State $\Delta\gets \Delta - \log (|N_a + \mathcal{L}_{\mathit{ex}}[N_c]| + 1)$
\Else 
\State $\mathit{qp}^r\gets \mathsf{GetCoverage}(r, \mathcal{L})$
\State $\Delta\gets \mathsf{RecomputeBenefit}(qp^r, \textsf{QP}(r), \mathcal{L}, N_a, N_c^\mathit{ex})$
\EndIf
\EndIf
\EndFor
    \State \textbf{return } $\Delta/\Delta S(e(N_c, N_a))$

\end{algorithmic}
\end{algorithm}

\begin{algorithm}
    \caption{\textbf{RecomputeBenefit} for \name}
    \label{alg:veda-get-changes-in-benefit-for-renewed-qp}
    \begin{algorithmic}[1]

\Require recomputed query plan $qp^r$, current query plan $qp$, lattice $\mathcal{L}$, ancestor node $N_a$, and data to be copied into $N_a$, $N_c^\mathit{ex}$.

\Function{RecomputeBenefit}{$qp^r$, $qp$, $\mathcal{L}$, $N_a$, $N_c^\mathit{ex}$}
    \State $\mathit{qp}_\mathit{unq}\gets \mathit{qp}\setminus \mathit{qp}^r$, 
    $\mathit{qp}_\mathit{unq}^r\gets \mathit{qp}^r\setminus \mathit{qp}$
    \If{$N_a\in \mathit{qp}_\mathit{unq}$~and~$N_a\in \mathit{qp}_\mathit{unq}^r$}
    \State $\mathit{qp}_\mathit{unq}.\texttt{add}(N_a)$, $\mathit{qp}_\mathit{unq}^r.\texttt{add}(N_a)$
    \EndIf
    \For{$\tau\in \mathit{qp}_\mathit{unq}$}~$\Delta\gets \Delta + \log(|\mathcal{L}[\tau]| + 1)$
    \EndFor
    \For{$\tau\in \mathit{qp}_\mathit{unq}^r$}
    \If{$\tau\neq N_a$}~$\Delta\gets \Delta - \log(|\mathcal{L}[\tau]| + 1)$
    \Else~$\Delta\gets \Delta - \log(|N_c^\mathit{ex}\cup N_a| + 1)$
    \EndIf
    \EndFor
    \State \textbf{return}~$\Delta$
\EndFunction

\end{algorithmic}
\end{algorithm}

\subsection{Merge-Phase Complementary Algorithms}
\label{app:veda-merge-helpers}

Algorithm~\ref{alg:veda-merge-complementary} supplies the merge-phase
analogues called from Algorithm~\ref{alg:veda-merge}.
\textsf{GetMergeBenefit} differs from the copy case in that it must also
charge the impurity introduced for roles that previously read the child as a
pure node.

\begin{algorithm}[h!]
\caption{\textbf{GetMergePairs} for the Merge Phase in \name}
\label{alg:veda-merge-complementary}
\begin{algorithmic}[1]

\Require lattice $\mathcal{L}$, available merge pairs $\mathit{pairs}$, last merged ancestor
$a^\mathit{last}$, last merged descendant $c^\mathit{last}$, and query plan
$\textsf{QP}$.

\Function{GetMergePairs}{$\mathcal{L}$, $\mathit{pairs}$, $a^\mathit{last}$, $c^\mathit{last}$, $\textsf{QP}$} 
    \For{$(\mathit{c}, \mathit{a})\in\mathit{pairs}$}
        \If{$c\notin \mathcal{L}$~or~$a\notin \mathcal{L}$}~delete $(c, a)$; \textbf{continue}
        \EndIf
        \If{$a \in [a^\mathit{last}, \texttt{None}]$ or  $c \in [c^\mathit{last}s, \texttt{None}]$}
        \State $\mathit{pairs}[(a, c)]\gets \textsf{GetMergeBenefit}(\mathit{a}, \mathit{c}, \mathcal{L})$
        \EndIf
    \EndFor
    \State \textbf{return } $\mathit{pairs}$
\EndFunction

\Function{GetMergeBenefit}{$a$, $c$, $\mathcal{L}$, $\textsf{QP}$} 
\label{alg:adaptive_merge_benefit_beg}
\State $\Delta\gets 0$
\For{$r\in \textsf{QP}$}
\If{$a\in \textsf{QP}(r)$ and $c\in\textsf{QP}(r)$} 

\State $\Delta\gets \Delta + \log (|\mathcal{L}[a]| + 1) + \log(|\mathcal{L}[c]| + 1)$
\State $\Delta\gets \Delta - \log (|\mathcal{L}[a] + \mathcal{L}[c]| + 1)$
\ElsIf{$a\in \textsf{QP}(r)$ or $c\in\textsf{QP}(r)$}
\State $\mathit{qp}^r\gets \textsf{GetCoverage}(r, \mathcal{L})$

\State $\Delta\gets \textsf{RecomputeBenefit}(qp^r, \textsf{QP}(r), \mathcal{L}, a, \mathcal{L}[c])$
\EndIf
\EndFor
    \State \textbf{return } $\Delta$
    \label{alg:adaptive_merge_benefit_end}
\EndFunction

\end{algorithmic}
\end{algorithm}

\section{Supplementary Materials for \effname}
\label{app:effveda-helpers}
This section provides supplementary materials for \effname: the overall
workflow, the helper routines for the copy and merge phases, and the proofs
deferred from \S\ref{sec:eff_veda}.

\subsection{Complementary Algorithms}
\label{app:effveda-complementary-algorithms}
This subsection describes the complementary algorithms for Copy and Merge in \effname in \S\ref{sec:eff_veda}.
Algorithm~\ref{alg:eff-adaptive} gives the
overall workflow of \effname.
Algorithm~\ref{alg:eff-veda-copy-find-best-partition} defines
\textsf{FindBestPartition}, whose goal is to choose the best valid
partition for a node under the remaining storage budget.

\begin{algorithm}[h!]
\caption{\textbf{\effname - Overview}}
\label{alg:eff-adaptive}
\begin{algorithmic}[1]

\Require exclusive lattice $\mathcal{L}_{\mathit{ex}}$, role set  $\mathcal{R}$, indexing threshold $\Lambda$, and \textsf{SA} budget $\beta$.

\State $\mathcal{L}\gets \mathcal{L}_{\mathit{ex}}$, $\mathcal{DA}\gets \mathsf{get\_child\_ancestor\_pairs}(\mathcal{L}_{\mathit{ex}})$ \Comment{All ancestor--descendant pairs induced by reachability in $\mathcal{L}_{\mathit{ex}}$}

\State $\textsf{QP} = \mathsf{GetQueryPlans}(\mathcal{R}, \mathcal{L})$

\If{$\beta > 0$}
\State $\mathcal{L}, \textsf{QP}\gets \mathsf{Copy}(\mathcal{L}, \mathcal{L}_{\mathit{ex}}, \beta, \mathcal{DA}, \textsf{QP})$\Comment{Algorithm~\ref{alg:eff-veda-copy}}
\EndIf

\State $\mathcal{L}, \textsf{QP}\gets \mathsf{Merge}(\mathcal{L}, \mathcal{L}_{\mathit{ex}}, \mathcal{DA}, \textsf{QP})$ \Comment{Algorithm~\ref{alg:eff-veda-merge}}

\State $\mathcal{L}, \textsf{QP}\gets \mathsf{split\_small\_nodes\_into\_leftovers}(\mathcal{L}, \Lambda)$

\If{$\frac{|\mathcal{L}|}{|\mathcal{D}|}<\beta$} \Comment{Storage reclaimed after merging}
\State $\mathcal{L}, \textsf{QP}\gets \mathsf{HandleSuperImpureNodes}(\mathcal{L}, \mathcal{L}_{\mathit{ex}}, \beta, \mathcal{DA}, \textsf{QP})$
\EndIf



\State $\mathcal{U}$, $\mathcal{I}\gets \mathsf{build\_vector\_storage}(\mathcal{L})$ 

\State \textbf{return} $\mathcal{U}$, $\mathcal{I}$
\end{algorithmic}
\end{algorithm}

\begin{algorithm}[t]
       \caption{\textbf{{FindBestPartition}} for \effname-Copy}
       \label{alg:eff-veda-copy-find-best-partition}
       \begin{algorithmic}[1]
       \Require node $N(\tau)$, ancestor set $A_c^\tau$ for $N(\tau)$
                     (sorted by role-set cardinality desc.), lattice $\mathcal{L}$,
                     remaining buffer $\mathit{buf}$.
       \If{$A_c^\tau=\emptyset$ or $|N(\tau)|>\mathit{buf}$} \Return $(\emptyset,0)$ \EndIf
       \State $\textsf{P}_\tau^*\gets\emptyset$, $\Delta^*\gets 0$
       \Statex \textit{// Stage 1: find two-way covers; keep best singleton as fallback}
       \ForAll{$N_a(\tau')\in A_c^\tau$}\label{ln:fbp-two}
              \State  $\textsf{P}_\tau \gets \{N_a(\tau')\}$, $\Delta'\gets|\tau'|\cdot\Delta_c\,\bigl(N(\tau),N_a(\tau')\bigr)$,$\tau''\gets\tau\setminus\tau'$ \label{ln:fbp-compl}
              \If{$N_a(\tau'')\in A_c^\tau$}\label{ln:fbp-compl-hit}
              \State $\textsf{P}_\tau.add(N_a(\tau''))$, $\Delta'\gets\Delta'
                     +|\tau''|\cdot\Delta_c\,\bigl(N(\tau),N_a(\tau'')\bigr)$
              \EndIf
              \If{$\Delta'>\Delta^*$}~$(\textsf{P}_\tau^*,\Delta^*)\gets(\textsf{P}_\tau,\Delta')$
              \EndIf
              \label{ln:fbp-seed}
       \EndFor
       \If{$|\textsf{P}_\tau^*|=2$}~\textbf{return} $(\textsf{P}_\tau^*,\Delta^*/|N(\tau)|)$ \EndIf
       \Statex \textit{// Stage 2: greedy disjoint extension if no two-way cover found}
       \State 
              $N_a(\tau_s)\gets\textsf{P}_\tau^*[0]$, $\tau^{\mathsf{res}}\gets\tau\setminus\tau_s$, $\Delta S\gets 0$
              \label{ln:fbp-extend}
       \ForAll{$N_a(\tau')\in A_c^\tau$} \Comment{$A_c^\tau$ is sorted by role-set cardinality desc.}
              \label{ln:fbp-greedy}

              \If{$\tau'\subseteq \tau^{\mathsf{res}}$}
              \label{ln:fbp-disjoint}
              \State $\Delta S\gets\Delta S + |N(\tau)|$
              \If{$\Delta S>\mathit{buf}$} \textbf{break} \EndIf
              \State $\tau^{\mathsf{res}}\gets\tau^{\mathsf{res}} \setminus \tau'$, $\textsf{P}_\tau\gets\textsf{P}_\tau\cup\{N_a(\tau')\}$
              \State $\Delta^*\gets\Delta^*
                     +|\tau'|\cdot\Delta_c\,\bigl(N(\tau),N_a(\tau')\bigr)$
              \EndIf
       \EndFor
       \If{$\tau^{\mathsf{res}}\neq\emptyset$}
              \label{ln:fbp-residual}
              ~$\textsf{P}_\tau\gets\textsf{P}_\tau\cup\{\tau^{\mathsf{res}}\}$
                     \Comment{Residual roles}
       \EndIf
       \State \Return $\bigl(\textsf{P}_\tau, \Delta^*/(|N(\tau)|\cdot(|\textsf{P}_\tau|-1))\bigr)$
              \label{ln:fbp-return}
       \end{algorithmic}
       \end{algorithm}

\subsection{Supplementary Proofs for \effname}\label{app:effveda-proofs}

This subsection collects the proofs deferred from \S\ref{sec:eff_veda}.
We begin with Theorem~\ref{thm:node_purity_after_copying}.

\begin{proof}[Proof of Theorem~\ref{thm:node_purity_after_copying}]
  Only the ancestors in $\textsf{P}_\tau$ change after copying $N(\tau)$ into ancestors in $\textsf{P}_\tau$. 
  For each ancestor $N_a(\tau_j)\in\textsf{P}_\tau$, 
  before taking vectors from $N(\tau)$, all vectors in $N_a(\tau_j)$ are authorized for $\tau_j$. 
  Since all vectors in $N(\tau)$ are authorized for $\tau$, and $\tau_j\subsetneq\tau$ (i.e., $N_a(\tau_j)$ is an ancestor of $N(\tau)$), every data in $N(\tau)$ is also authorized for $\tau_j$. Thus, $N_a(\tau_j)$ remains pure for $\tau_j$ after copying $N(\tau)$ into $N_a(\tau_j)$.
\end{proof}

Below we provide the proof for Lemma~\ref{lem:copy_locality}.

\begin{proof}[Proof of Lemma~\ref{lem:copy_locality}]
  After copying $N(\tau)$ into ancestors in $\textsf{P}_\tau$, only the query plans of roles in $\tau$ are affected.
For any role $r\in\mathcal{R}$ and its query plan $\textsf{QP}(r)$, if $r\notin\tau$ then $r$'s query plan and its query cost remain unchanged. For role $r\in \tau$, since the role set of each ancestor in $\textsf{P}_\tau$ is disjoint, there is a unique $N_a(\tau_j)\in\textsf{P}_\tau$ such that $r\in\tau_j$. Before the
copy, $r$'s plan contains both $N_a(\tau_j)$ and $N(\tau)$; after the copy it
contains the single node $N_a(\tau_j)\cup N(\tau)$, and all remaining nodes in
the plan are unchanged because purity is preserved
(Theorem~\ref{thm:node_purity_after_copying}). The cost difference for $r$ is
exactly $\Delta_c(N(\tau),N_a(\tau_j))$. Summing over $r\in\tau$ and grouping
by the $\tau_j$ gives
$\sum_{\tau_j}|\tau_j|\cdot\Delta_c(\cdot)$; dividing by $|\mathcal{R}|$
yields the change in $\textsf{AvgCost}$.
\end{proof}

Below we provide the proof for Lemma~\ref{lem:copy_positivity}.
\begin{proof}[Proof of Lemma~\ref{lem:copy_positivity}]
Denote the size of $N_a(\tau_j)$ and $N(\tau)$ as $n_a$ and $n_c$ respectively. According to
Equation~\ref{eqn:derivedcostmodel} and~Equation~\ref{eqn:per_role_gain}, we have 
$\Delta_c(N(\tau),N_a(\tau_j)) = a\bigl(\log n_a+\log n_c-\log(n_a{+}n_c)\bigr)+b\,\textsf{efs}+c
          = a\log\!\tfrac{n_an_c}{n_a+n_c}+b\,\textsf{efs}+c$.
The first term is nonnegative whenever $n_a,n_c\ge 2$ (in fact $2 << \Lambda$ according to Figure~\ref{fig:vary-dimensions} and Figure~\ref{fig:vary-dimensions-app}), and
$b\,\textsf{efs}+c>0$ under any calibration that assigns positive cost to a
probe. Thus $\Delta_c(N(\tau),N_a(\tau_j))>0$; combined with the positive numerator $|N(\tau)|\cdot(|\textsf{P}_\tau|-1)$
in Equation~\ref{eqn:copy_benefit_eff_adaptive_simplified} is a positive combination
of positive terms.
\end{proof}

Below we provide the proof for Theorem~\ref{th:routed_query_for_roles}.

\begin{proof}[Proof of Theorem~\ref{th:routed_query_for_roles}]
  At initialization each $N(\tau)\in\mathcal{L}$ coincides with one frozen $N^\rho(\tau_\rho)$ and $\mathbb{V}(N(\tau))=\{N^\rho(\tau_\rho)\}$. The inherited plan at this point is the natural post-copy plan, which probes $N^\rho(\tau_\rho)$ for role $r$ iff $r\in\tau_\rho$; this plan is valid because the post-copy nodes are pure (Theorem~\ref{thm:node_purity_after_copying}), so the invariant holds.
  Assume the invariant holds before a merge operation where $N(\tau)$ absorbs $N(\tau')$. The absorbed node $N(\tau')$ is  deleted from $\mathcal{L}$, so any role whose post-copy component was routed to $N(\tau')$ must be redirected to the surviving node $N(\tau)$. According to Definition~\ref{def:virtual_decomp}, the virtual decomposition of the survivor $\mathbb{V}(N(\tau))$ becomes $\mathbb{V}(N(\tau))\cup\mathbb{V}(N(\tau')).$
  Thus, the routed roles of the survivor node $N(\tau)$ become $\Pi(N(\tau))\cup\Pi(N(\tau'))$.
  All other nodes keep both their routed roles and their virtual decompositions unchanged. Thus the invariant is preserved after every merge.
\end{proof}







\section{Supplementary Materials for Query Answering}
\label{app:query-exec-supplementary}

This section supplements \S\ref{sec:query-execution}. \S\ref{app:query-answering-algorithms} gives the ILP and greedy
algorithms for query-plan construction; \S\ref{app:topk-exec-details}
details independent top-$k$ execution over a mixed plan;
\S\ref{app:continued-hnsw} gives the continued base-layer HNSW search used by
coordinated search on impure indices; and \S\ref{sec:inflated-k-accuracy}
proves that inflated-$k$ search returns the true authorized top-$k$ with high
probability.

\subsection{Query Plan Construction}
\label{app:query-answering-algorithms}

We give two algorithms for selecting a minimal cover of $\mathcal{D}(r)$ from
the materialized lattice (\S\ref{sec:qplan}).
Algorithm~\ref{alg:query_plan_ILP} solves the exact set-cover ILP and yields
the lowest-cost plan; Algorithm~\ref{alg:query_plan_greedy} greedily picks the
node that covers the most remaining blocks, trading optimality for speed. We
recommend the greedy variant inside the \name and \effname construction loops,
where plans are recomputed many times, and the ILP variant once at the end to
fix the final per-role plans.

\begin{algorithm}[t]
\caption{\textbf{Query Plan Construction (ILP)}}
\label{alg:query_plan_ILP}
\begin{algorithmic}[1]
\Require single roles $\mathcal{R}$, exclusive blocks $\mathcal{N}_{\mathit{ex}}$, location map $\Phi$, where $\Phi(E)$ is the set of nodes in $\mathcal{L}$ containing exclusive block $E$, and node sizes $|N(\tau)|$.
\State $\textsf{QP}\gets\emptyset$
\ForAll{$r\in\mathcal{R}$}
        \State $\mathcal{N}^{\mathit{ex}}(r)\gets\{N^{\mathit{ex}}(\tau)\in\mathcal{N}_{\mathit{ex}}\mid r\in\tau\}$
        \State $\mathsf{cov}\gets\emptyset$, $\mathcal{P}\gets\emptyset$
                \Comment{Selected nodes and pending nodes}
        \ForAll{$N^{\mathit{ex}}(\tau)\in\mathcal{N}^{\mathit{ex}}(r)$}
                \If{$|L_E|=1$}
                        \State $\mathsf{cov}\gets\mathsf{cov}\cup \Phi(N^{\mathit{ex}}(\tau))[0]$
                        \Comment{The node is mandatory}
                \Else~$\mathcal{P}[N^{\mathit{ex}}(\tau)]\gets \Phi(N^{\mathit{ex}}(\tau))$
                \EndIf
        \EndFor
        \State Remove from $\mathcal{P}$ every $N^{\mathit{ex}}(\tau)$ with $\mathcal{P}[N^{\mathit{ex}}(\tau)]\cap\mathsf{cov}\neq\emptyset$
        \State Create binary variable $x_{N(\tau)}$ for each $N(\tau)\in\bigcup_{N^{\mathit{ex}}(\tau)\in\mathcal{P}}\mathcal{P}[N^{\mathit{ex}}(\tau)]$
        \State Minimize $\sum_{N(\tau)} \log_2(|N(\tau)|+1)\cdot x_{N(\tau)}$
        \ForAll{$N^{\mathit{ex}}(\tau)\in\mathcal{P}$}
                \State Add constraint $\sum_{N(\tau)\in\mathcal{P}[N^{\mathit{ex}}(\tau)]}x_{N(\tau)}\ge 1$
        \EndFor
        \State $\mathsf{status}\gets\mathsf{SolveILP}()$
        \If{$\mathsf{status}=\mathsf{optimal}$}
                \ForAll{$N(\tau)$ with binary variable $x_{N(\tau)}$}
                        \If{$x_{N(\tau)}=1$}~$\mathsf{cov}\gets\mathsf{cov}\cup\{N(\tau)\}$ \EndIf
                \EndFor
        \Else \Comment{Fallback to the greedy algorithm (Algorithm~\ref{alg:query_plan_greedy_select})}
                \State $\mathsf{cov}\gets\textsf{GreedySelectCoverage}(\mathsf{cov},\mathcal{P})$
        \EndIf
        \State $\textsf{QP}(r)\gets\mathsf{cov}$
\EndFor
\State \Return $\textsf{QP}$
\end{algorithmic}
\end{algorithm}

\begin{algorithm}[t]
\caption{\textbf{Query Plan Construction (Greedy)}}
\label{alg:query_plan_greedy}
\begin{algorithmic}[1]
\Require single roles $\mathcal{R}$, exclusive blocks $\mathcal{N}_{\mathit{ex}}$, location map $\Phi$, where $\Phi(N^\mathit{ex}_i)$ is the set of nodes in $\mathcal{L}$ containing exclusive block $N^\mathit{ex}_i$, and node sizes $|N(\tau)|$.
\Ensure query plan $\textsf{QP}$ mapping each role $r$ to selected nodes.
\State $\textsf{QP}\gets\emptyset$
\ForAll{$r\in\mathcal{R}$}
       \State $\mathcal{N}^{\mathit{ex}}(r)\gets\{N^{\mathit{ex}}(\tau)\in\mathcal{N}_{\mathit{ex}}\mid r\in\tau\}$
       \State $\mathsf{cov}\gets\emptyset$, $\mathcal{P}\gets\emptyset$
              \Comment{Selected nodes and pending nodes}
       \ForAll{$N^{\mathit{ex}}(\tau)\in\mathcal{N}^{\mathit{ex}}(r)$}
              \If{$|L_E|=1$}
                     \State $\mathsf{cov}\gets\mathsf{cov}\cup \Phi(N^{\mathit{ex}}(\tau))[0]$
                     \Comment{The node is mandatory}
              \Else~$\mathcal{P}[N^{\mathit{ex}}(\tau)]\gets \Phi(N^{\mathit{ex}}(\tau))$
              \EndIf
       \EndFor
       \State Remove from $\mathcal{P}$ every $N^{\mathit{ex}}(\tau)$ with $\mathcal{P}[N^{\mathit{ex}}(\tau)]\cap\mathsf{cov}\neq\emptyset$
       \State $\mathsf{cov}\gets\textsf{GreedySelectCoverage}(\mathsf{cov},\mathcal{P})$\Comment{Algorithm~\ref{alg:query_plan_greedy_select}}
       \State $\textsf{QP}(r)\gets\mathsf{cov}$
\EndFor
\State \Return $\textsf{QP}$
\end{algorithmic}
\end{algorithm}

\begin{algorithm}[t]
       \caption{\textbf{GreedySelectCoverage}}
       \label{alg:query_plan_greedy_select}
       \begin{algorithmic}[1]

\Require current coverage $\mathsf{cov}$ and pending coverage map $\mathcal{P}$ from
exclusive blocks to candidate nodes in $\mathcal{L}$.

\Function{GreedySelectCoverage}{$\mathsf{cov},\mathcal{P}$}
       \ForAll{$N^\mathit{ex}_i\in\mathcal{P}$}
              \If{$\mathcal{P}[N^\mathit{ex}_i]\cap\mathsf{cov}=\emptyset$}
              \State $\mathsf{selected}\gets \mathsf{get\_the\_minimus\_node}(\mathcal{P}[N^\mathit{ex}_i])$
                     \State $\mathsf{cov}\gets\mathsf{cov}\cup\{\mathcal{P}[\mathsf{selected}]\}$
              \EndIf
       \EndFor
       \State \Return $\mathsf{cov}$
\EndFunction
\end{algorithmic}
\end{algorithm}


\subsection{\texorpdfstring{Independent Top-$k$ Query Execution}{Independent Top-k Query Execution}}
\label{app:topk-exec-details}

\begin{algorithm}
\caption{\textbf{Top-$k$ Query Execution}}
\label{alg:topk-execution}
\begin{algorithmic}[1]

\Require Query $q=(\mathbf{x},r)$; plan $\textsf{QP}(r)=(\mathcal{I}^*(r),\mathcal{U}^*(r))$; parameter $k$; ground truth data IDs $\mathcal{D}_\mathit{rids}(r)$ for filtering.

\State $\textsf{RS} \gets [\,]$
\For{each ${\tt idx}\in \mathcal{I}^*(r)$}
    \If{${\tt idx}$ is pure} \Comment{$\mathcal{D}({\tt idx}) \subseteq \mathcal{D}(r)$} 
        \State $\textsf{RS}_{{\tt idx}}\gets \texttt{HNSW}({\tt idx},k)$
    \Else 
        \State $\lambda^r_{{\tt idx}}\gets \lceil \tfrac{|\mathcal{D}({\tt idx})|}{|\mathcal{D}({\tt idx})\cap\mathcal{D}(r)|}\rceil$, $k'\gets \lceil \lambda^r_{{\tt idx}}\cdot k\rceil$
        \State $\textsf{RS}_{{\tt idx}}\gets \texttt{HNSW}({\tt idx},k')$
    \EndIf
    \State Filter $\textsf{RS}_{{\tt idx}}$ with $\mathcal{D}_\mathit{rids}(r)$ and append to $\textsf{RS}$
\EndFor

\For{each $v\in \mathcal{U}^*(r)$} 
    \State Compute $\textsf{dist}(v,\mathbf{x})$ and add $v$ to $\textsf{RS}$
\EndFor
\State Sort $\textsf{RS}$ by distances in descending order

\State \Return Top-$k$ data in $\textsf{RS}$

\end{algorithmic}
\end{algorithm}

Algorithm~\ref{alg:topk-execution} gives the independent (non-coordinated)
execution baseline referenced in \S\ref{sec:coordinated-search}.
Given a query $q=(\mathbf{x},r)$ and a query plan $\textsf{QP}(r)=(\mathcal{I}(r),\mathcal{U}(r))$, the baseline execution strategy searches every selected component independently.
Leftover vectors in $\mathcal{U}(r)$ are stored by block and scanned linearly.
For each HNSW index ${\tt idx}\in\mathcal{I}(r)$, the query uses standard top-$k$ search when ${\tt idx}$ is pure for $r$.
If ${\tt idx}$ is impure, the query inflates the request size to $k'=\lceil\lambda^r_{{\tt idx}} k\rceil$, where $\lambda^r_{{\tt idx}}$ is defined in Definition~\ref{def:impurity-factor}, and filters the returned candidates against $\mathcal{D}(r)$.
The final answer is the top-$k$ closest authorized vectors after merging the filtered HNSW results and the linear-scan results.

\begin{example}
Consider the merged group in \textsf{Strategy~3} of
Figure~\ref{fig:indexing_example} that builds one HNSW index
${\tt idx}(\{r_1,r_2\})$ over
$N^{\mathit{ex}}(\{r_1\})\cup N^{\mathit{ex}}(\{r_1,r_2\})$, each of size
$10{,}000$. For a query $q=(\mathbf{x},r_2)$ only
$N^{\mathit{ex}}(\{r_1,r_2\})$ is authorized inside this index, so
$\lambda^{r_2}_{{\tt idx}}=\lceil 20{,}000/10{,}000\rceil=2$. A top-$2$ query
therefore searches with $k'=4$; if HNSW returns
$[(u_1,0.03),(v_1,0.04),(u_2,0.06),(v_2,0.09)]$ with
$u_i\in N^{\mathit{ex}}(\{r_1\})$ and $v_i\in N^{\mathit{ex}}(\{r_1,r_2\})$,
filtering keeps $[(v_1,0.04),(v_2,0.09)]$ as the authorized top-$2$ from this
component, to be merged with the results of any other components in
$\textsf{QP}(r_2)$.
\end{example}

\subsection{\texorpdfstring{Accuracy Analysis of Inflated-$k$ Search}{Accuracy Analysis of Inflated-k Search}}
\label{sec:inflated-k-accuracy}

We formalize when inflated-$k$ search over an impure index returns the true authorized top-$k$.
For a query $q=(\mathbf{x},r)$ on an impure index ${\tt idx}$, the goal is to retrieve the $k$ closest vectors in $\mathcal{D}_r({\tt idx})$ under $\textsf{dist}(\cdot,\cdot)$.
Because unauthorized vectors may appear before authorized ones in the unfiltered ranking, the $k$-th authorized vector can occur after position $k$.

\begin{definition}[Authorized Position Threshold]
Let $u_1,u_2,\ldots$ denote all vectors in $\mathcal{D}({\tt idx})$ sorted by increasing $\textsf{dist}(\mathbf{x},u_i)$.
Define
\[
\tau_r = \min\Bigl\{t \mid \bigl|\{u_1,\ldots,u_t\}\cap \mathcal{D}_r({\tt idx})\bigr| \ge k \Bigr\},
\]
i.e., the position in the full unfiltered ranking where the $k$-th authorized vector appears.
\end{definition}

Retrieving at least the first $\tau_r$ true unfiltered neighbors is therefore sufficient to include all $k$ authorized nearest neighbors.

\begin{assumption}[HNSW Top-$t$ Success Condition]
\label{assump:hnsw-top-t}
For any index ${\tt idx}$ and any integer $t \ge 1$, there exists a non-decreasing function $f(t)$ such that running HNSW on ${\tt idx}$ with parameter $\textsf{efs}\ge f(t)$ returns the true top-$t$ nearest neighbors by distance with probability at least $1-\delta$.
\end{assumption}

This assumption abstracts the empirical behavior that larger $\textsf{efs}$ values yield higher HNSW recall.
In our implementation, $f(t)=\alpha t$ with a small constant $\alpha$ (e.g., $5\le\alpha\le10$) satisfies the condition with high probability.

\begin{theorem}[Accuracy of Inflated-$k$ Search]
\label{thm:coordinated-inflated-k-accuracy}
\label{thm:inflated-k-accuracy}
\label{thm:inflated-k-sufficiency}
Fix a query $q=(\mathbf{x},r)$ on an impure index ${\tt idx}$.
Let the search retrieve the true top-$k'$ unfiltered neighbors of $\mathbf{x}$ from $\mathcal{D}({\tt idx})$ with probability at least $1-\delta$, for some $k'\ge\tau_r$.
Then, after filtering unauthorized vectors from this result set, the remaining vectors contain exactly the true top-$k$ authorized neighbors of $\mathbf{x}$ in $\mathcal{D}_r({\tt idx})$, with probability at least $1-\delta$.
\end{theorem}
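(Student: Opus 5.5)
The argument conditions on the success event and then reasons purely about the distance ranking. Let $E$ be the event that the search returns exactly the true top-$k'$ unfiltered neighbors $S=\{u_1,\ldots,u_{k'}\}$ of $\mathbf{x}$ in $\mathcal{D}({\tt idx})$. By hypothesis $\Pr[E]\ge 1-\delta$. It therefore suffices to show that, on $E$, the filtered set $S\cap\mathcal{D}_r({\tt idx})$ contains exactly the true authorized top-$k$. The probability bound then transfers directly. If ties in distance are possible, I would fix a tie-breaking rule (for example, by vector ID) so that the ranking $u_1,u_2,\ldots$, and hence $\tau_r$, is well defined. I would use the same ordering for the authorized top-$k$.

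On $E$, the argument is deterministic. Since $k'\ge\tau_r$, we have $\{u_1,\ldots,u_{\tau_r}\}\subseteq S$. By the definition of $\tau_r$, the prefix $\{u_1,\ldots,u_{\tau_r}\}$ contains at least $k$ authorized vectors. By minimality of $\tau_r$, it contains exactly $k$: the prefix of length $\tau_r-1$ has fewer than $k$ authorized vectors, so $u_{\tau_r}$ is authorized and the count is exactly $k$. Call these $a_1,\ldots,a_k$. In the induced order, they are precisely the $k$ closest authorized vectors. Any authorized $v\notin\{a_1,\ldots,a_k\}$ sits at some position $>\tau_r$, so it is at least as far from $\mathbf{x}$ as $a_k=u_{\tau_r}$. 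Hence $\{a_1,\ldots,a_k\}$ is the true authorized top-$k$ in $\mathcal{D}_r({\tt idx})$, and all of it survives filtering.

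The only subtle point is the word ``exactly''. When $k'>\tau_r$, the filtered set may contain additional authorized vectors $u_i$ with $\tau_r<i\le k'$. I would read the claim as saying that the filtered result contains the true top-$k$ authorized neighbors as its $k$ closest elements, so that taking the top-$k$ of the filtered set, as Algorithm~\ref{alg:topk-execution} does, returns exactly them. This follows because every extra authorized vector has rank larger than $\tau_r$, so sorting by distance places $a_1,\ldots,a_k$ first. Finally, the event $E$ itself can be tied to Assumption~\ref{assump:hnsw-top-t} with $t=k'$ and $\textsf{efs}\ge f(k')$. The theorem already takes $E$ as its hypothesis, however, so that link is only a remark.
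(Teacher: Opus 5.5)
Your proof is correct and follows the same route as the paper's. You condition on the success event and use $k'\ge\tau_r$ to place the whole prefix $u_1,\ldots,u_{\tau_r}$ in the result, then note that no authorized vector beyond position $\tau_r$ can precede the $k$-th authorized one, so the first $k$ survivors of filtering are the true authorized top-$k$. Your added care (fixing ties, using minimality of $\tau_r$ to show the prefix holds exactly $k$ authorized vectors, and reading ``exactly'' as the top-$k$ of the filtered set) only makes explicit what the paper leaves implicit, and that reading matches the paper's own conclusion about ``the first $k$ vectors after filtering.''
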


\begin{proof}
By definition of $\tau_r$, the $k$-th authorized neighbor $a_k\in\mathcal{D}_r({\tt idx})$ appears within the first $\tau_r$ elements of the true unfiltered ranking.
If the algorithm returns the true top-$k'$ unfiltered neighbors with $k'\ge\tau_r$, then all $u_1,\ldots,u_{\tau_r}$ are included.
Filtering unauthorized vectors from this set retains all authorized neighbors up to $a_k$.
No authorized vector after position $\tau_r$ can precede $a_k$ in distance.
Hence, the first $k$ vectors after filtering are exactly the true top-$k$ authorized neighbors, with probability at least $1-\delta$.
\end{proof}

\begin{corollary}[Inflation Rule]
\label{cor:operational-inflation}
Under Assumption~\ref{assump:hnsw-top-t}, Theorem~\ref{thm:inflated-k-sufficiency} holds if the search parameters satisfy
$k'\ge\tau_r$ and $\textsf{efs}'\ge f(k')$.
\end{corollary}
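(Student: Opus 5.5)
The plan is to reduce the corollary directly to Theorem~\ref{thm:inflated-k-sufficiency}. That theorem has a single hypothesis: with probability at least $1-\delta$, the search retrieves the true top-$k'$ unfiltered neighbors of $\mathbf{x}$ in $\mathcal{D}({\tt idx})$ for some $k'\ge\tau_r$. So it suffices to show that the two parameter conditions $k'\ge\tau_r$ and $\textsf{efs}'\ge f(k')$ together deliver this hypothesis. The filtering conclusion then transfers verbatim.

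First, I instantiate Assumption~\ref{assump:hnsw-top-t} with $t=k'$. The assumption is stated for every index and every integer $t\ge 1$. It says that running HNSW on ${\tt idx}$ with beam width at least $f(t)$ returns the true top-$t$ neighbors with probability at least $1-\delta$. Since $\textsf{efs}'\ge f(k')$, the search with parameters $(k',\textsf{efs}')$ returns the true top-$k'$ unfiltered neighbors of $\mathbf{x}$ with probability at least $1-\delta$.

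To apply the assumption I need $k'\ge 1$. This holds because $\tau_r\ge k\ge 1$: by definition, $\tau_r$ is the position at which $k$ authorized vectors have been seen.

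One subtlety deserves a remark: $\tau_r$ must be well defined. I will assume, as the theorem implicitly does, that $|\mathcal{D}_r({\tt idx})|\ge k$. Otherwise the minimum in the definition is over an empty set, and the statement should be read with $\tau_r=|{\tt idx}|$, meaning the search degenerates to a full scan, as noted after Definition~\ref{def:hnsw-cost}.

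Second, I combine this with $k'\ge\tau_r$. The event ``the true top-$k'$ is returned'' is exactly the hypothesis of Theorem~\ref{thm:inflated-k-sufficiency}. On that event, the theorem's deterministic argument applies: after filtering, the first $k$ survivors are exactly the true authorized top-$k$. The probability bound $1-\delta$ is therefore inherited unchanged.

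Monotonicity of $f$ is not needed for this step. It only matters if one wants a remark about practical choices. For example, taking $\textsf{efs}'=\alpha\lceil\lambda k\rceil$ satisfies the condition whenever $\lceil\lambda k\rceil\ge\tau_r$, because $f(k')\le f(\lceil\lambda k\rceil)$.

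The only delicate step is bookkeeping rather than mathematics. The probability statement in Assumption~\ref{assump:hnsw-top-t} is over the search randomness for a fixed query, and it must be the same event that Theorem~\ref{thm:inflated-k-sufficiency} conditions on. I will state explicitly that both refer to the single run with parameters $(k',\textsf{efs}')$, so that no union bound is required.
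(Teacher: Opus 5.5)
Your proposal is correct and is the argument the paper intends. The paper gives no separate proof of the corollary; it is meant as the direct composition you describe: instantiate Assumption~\ref{assump:hnsw-top-t} at $t=k'$ using $\textsf{efs}'\ge f(k')$, then apply Theorem~\ref{thm:inflated-k-sufficiency} on that same event since $k'\ge\tau_r$. Your side remark about $\textsf{efs}'=\alpha\lceil\lambda k\rceil$ tacitly assumes $f(t)=\alpha t$ and $k'\le\lceil\lambda k\rceil$, but the proof does not depend on it.
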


In practice, $\tau_r$ is unknown.
Coordinated search estimates an inflation factor $\lambda^r_{{\tt idx}}\ge1$ from the observed relation between the local impure-index ranking and the global result heap $\mathsf{RS}$.
Setting
\[
k' = \lceil \lambda^r_{{\tt idx}} \cdot k \rceil,
\qquad
\textsf{efs}' = \lceil \lambda^r_{{\tt idx}} \cdot \textsf{efs} \rceil
\]
grows the target size and the HNSW beam together.
If the chosen $\lambda^r_{{\tt idx}}$ is large enough that $k'\ge\tau_r$, Corollary~\ref{cor:operational-inflation} guarantees that filtering the returned set yields the exact authorized top-$k$ with probability at least $1-\delta$.

\subsection{Continued Base-Layer Search on Impure Indices}
\label{app:continued-hnsw}

Algorithm~\ref{alg:L0-hnsw} gives the base-layer (layer-$0$) HNSW traversal
used inside Algorithm~\ref{alg:coordinated-search} when an impure index must
be searched under the global distance bound. The traversal carries the current
global $k$-th distance $d_k^{(g)}$ from $\mathsf{RS}$ as an admission
threshold: a neighbor is pushed onto the candidate heap only if it can still
enter the global top-$k$, and the loop terminates as soon as the best
remaining candidate is farther than $d_k^{(g)}$. Authorization is checked
before a vector is admitted to the local result heap, so unauthorized vectors
steer the traversal but never reach $\mathsf{RS}$.

\begin{algorithm}[h!]
\caption{\textbf{Continual L0 Search in HNSW}}
\label{alg:L0-hnsw}
\begin{algorithmic}[1]

\Require Query vector $\mathbf{x}$, 
a max heap for global top-k results $\textsf{RS}$ before searching on the impure nodes, an impure index ${\tt idx}_{\mathit{ip}}$ of role $r$, a list of authorized data ids $\mathcal{D}_\mathit{rids}(r)$ of role $r$, 
current entry point $\mathit{ep}$, distance $\mathit{d\_ep}$, candidate min-heap $C$, , counters $\mathit|\textsf{RS}_l|$, $\mathit{ef}$, $k$, $\mathit{desired\_ef}$, thresholds $\mathit{worst\_dist}$, $d_\mathit{max}^{(l)}$, and optional filter predicate $\mathit{isAllowed}(\cdot)$

\Ensure Top-$k$ authorized results for query vector $\mathbf{x}$ with role $r$.

\State Traverse on ${\tt idx}_{\mathit{ip}}$ to reach the bottom layer of ${\tt idx}_{\mathit{ip}}$
\State $(d_v, v)\gets \textsc{GetEntryPoint}()$, $C\gets []$\Comment{Candidate list}

    \State $C.\textsf{push}(\mathit{d_v}, v)$, $\mathit{ef} \gets 1$, $d_k^{g}\gets \textsf{RS}.\texttt{max}()$, $v.\texttt{visited}\gets \texttt{True}$
    
    \If{$v\in \mathcal{D}_\mathit{rids}(r) \land d_v <d_k^{(g)}$}~$\textsf{RS}_l.\textsf{push}(d_v, v)$, $d_\mathit{max}^{(l)} \gets d_v$
    \EndIf

\State $\textsf{RS}_l$, $C\gets \textsf{BaseLayerSearch}(\mathbf{x}, C, k, 1, \mathit{ef}_\mathit{max}, \mathcal{D}_\mathit{rids}(r))$ \Comment{Algorithm~\ref{alg:base_layer_search_hnsw}}

\State $\textsf{RS}\gets \textsc{MergeRS}(\textsf{RS}_l, \textsf{RS})$

\If{\texttt{STOP\_FLAG}}~\textbf{return } $\textsf{RS}$
\EndIf

\State $\textsf{RS}_l$, $C\gets \textsf{BaseLayerSearch}(\mathbf{x}, C, k, \mathit{ef}_\mathit{default}, \mathit{ef}_\mathit{max}, \mathcal{D}_\mathit{rids}(r))$ \Comment{Algorithm~\ref{alg:base_layer_search_hnsw}}

\State $\textsf{RS}\gets \textsc{MergeRS}(\textsf{RS}_l, \textsf{RS})$

\State \textbf{return } $\textsf{RS}$

\end{algorithmic}
\end{algorithm}

\begin{algorithm}[h!]
    \caption{\textbf{BaseLayerSearch for Algorithm~\ref{alg:L0-hnsw}}}
    \label{alg:base_layer_search_hnsw}
    \begin{algorithmic}[1]
\Require Query vector $\mathbf{x}$, candidate min-heap $C$, top-$k$ parameter $k$,
current expansion factor $\mathit{ef}$, maximum expansion factor $\mathit{ef}_\mathit{max}$,
and authorized data ids $\mathcal{D}_\mathit{rids}(r)$ for role $r$.
\State $\textsf{RS}_{l}\gets []$
\While{$C$ is not empty}
    \State $(d_v, v) \gets C.\textsf{top}()$, $C.\textsf{pop}()$
    
    \If{$d_v >d_k^{(g)} \lor (d_v > d_\mathit{max}^{(l)} \land |\textsf{RS}_l| \geq k)$}
        \State \textbf{break}
    \EndIf
    
    \For{each neighbor $u$ of $v$ in base layer}
        \If{$u$.$\texttt{visited}$}~\textbf{Continue}\EndIf
        \If{$\mathit{ef} \geq \mathit{ef}_\mathit{max}$}~\textbf{break}\EndIf
        \State $u$.$\texttt{visited}\gets\texttt{True}$, $\mathit{ef} \gets \mathit{ef} + 1$, $d_u \gets \textsf{dist}(\mathbf{x}, u)$
        
        \If{$d_u >d_k^{(g)} \lor (d_u > d_\mathit{max}^{(l)} \land \mathit|\textsf{RS}_l| \geq k)$}
            \State \textbf{continue}
        \EndIf
        
        \State $C.\textsf{push}(d_u, u)$

        \If{$u\notin\mathcal{D}(r)$}~\textbf{continue}\EndIf
        \If{$d_u < d_\mathit{max}^{(l)}$}~$\textsf{RS}_l.\textsf{push}(d_u, u)$
        \ElsIf{$d_u <d_k^{(g)}\land \mathit|\textsf{RS}_l| < k$} 
        \State $\textsf{RS}_l.\textsf{push}(d, u)$, $d_\mathit{max}^{(l)} \gets d_u$
        \EndIf
    \EndFor
\EndWhile
\State \textbf{return} $\textsf{RS}_l$, $C$

\end{algorithmic}
\end{algorithm}

\section{Complementary Evaluations}
\label{app:complementary-results}

\subsection{Supplementary Results for Index Creation Evaluation}
\label{app:index-creation-complementary-results}

\begin{figure*}[!t]
  \centering
  \begin{subfigure}[t]{0.24\textwidth}
      \centering
      \includegraphics[width=\linewidth]{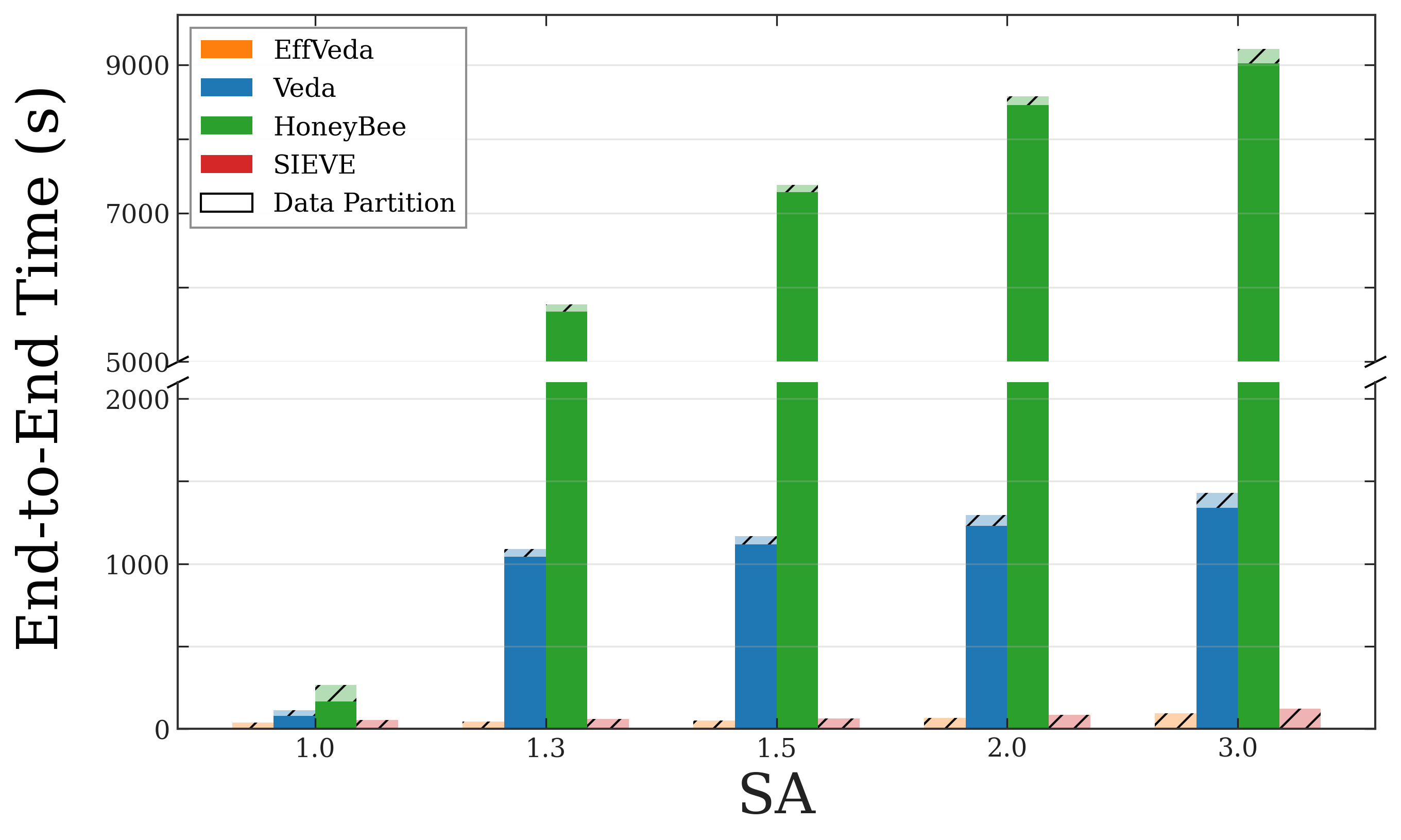}
      \caption{End-to-end build time vs.\ SA.}
      \label{fig:build_time}
  \end{subfigure}
  \hfill
  \begin{subfigure}[t]{0.24\textwidth}
      \centering
      \includegraphics[width=\linewidth]{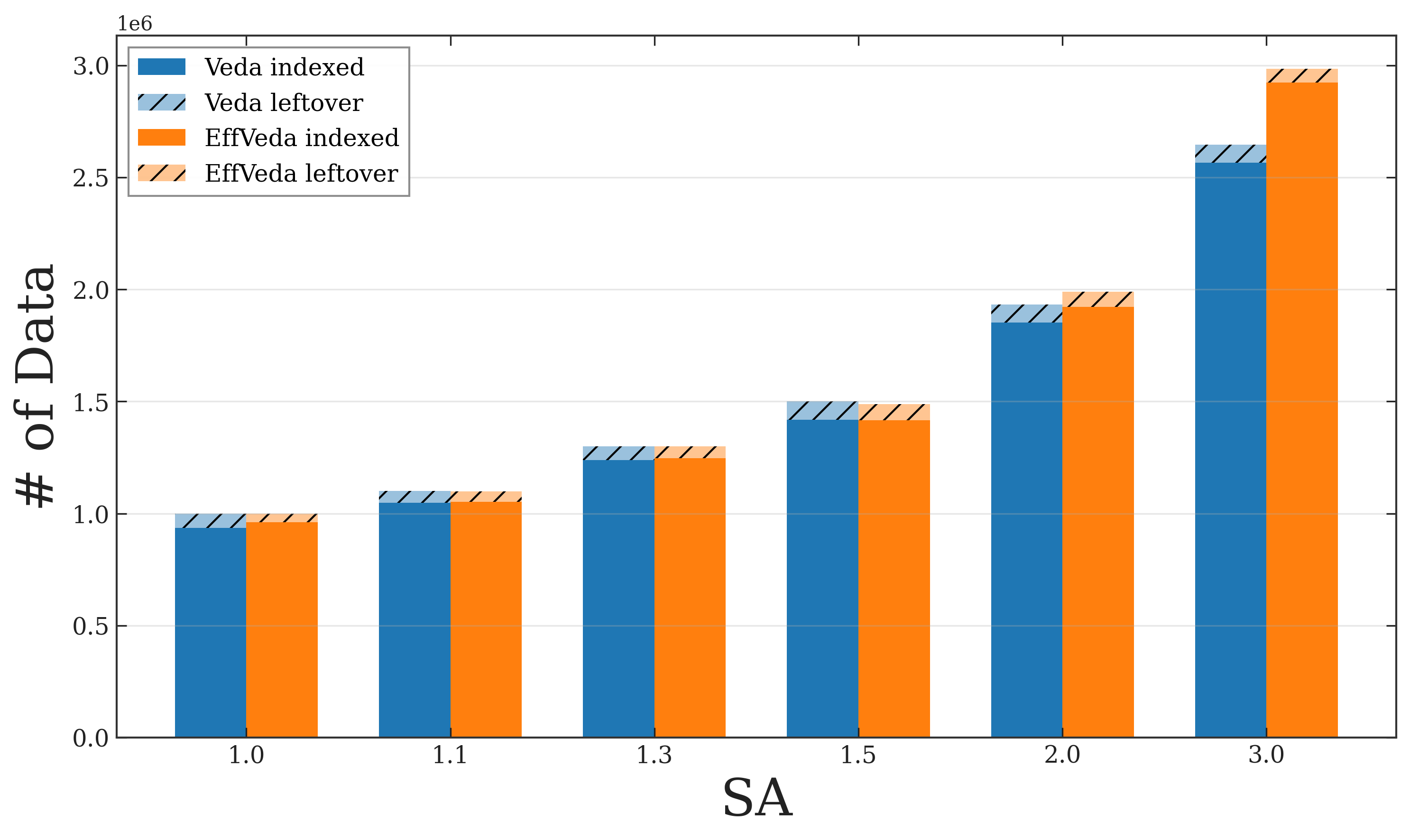}
      \caption{\# Indexed vs.\ \# leftover.}
      \label{fig:number-of-indexed-leftover-data}
  \end{subfigure}
  \hfill
  \begin{subfigure}[t]{0.24\textwidth}
      \centering
      \includegraphics[width=\linewidth]{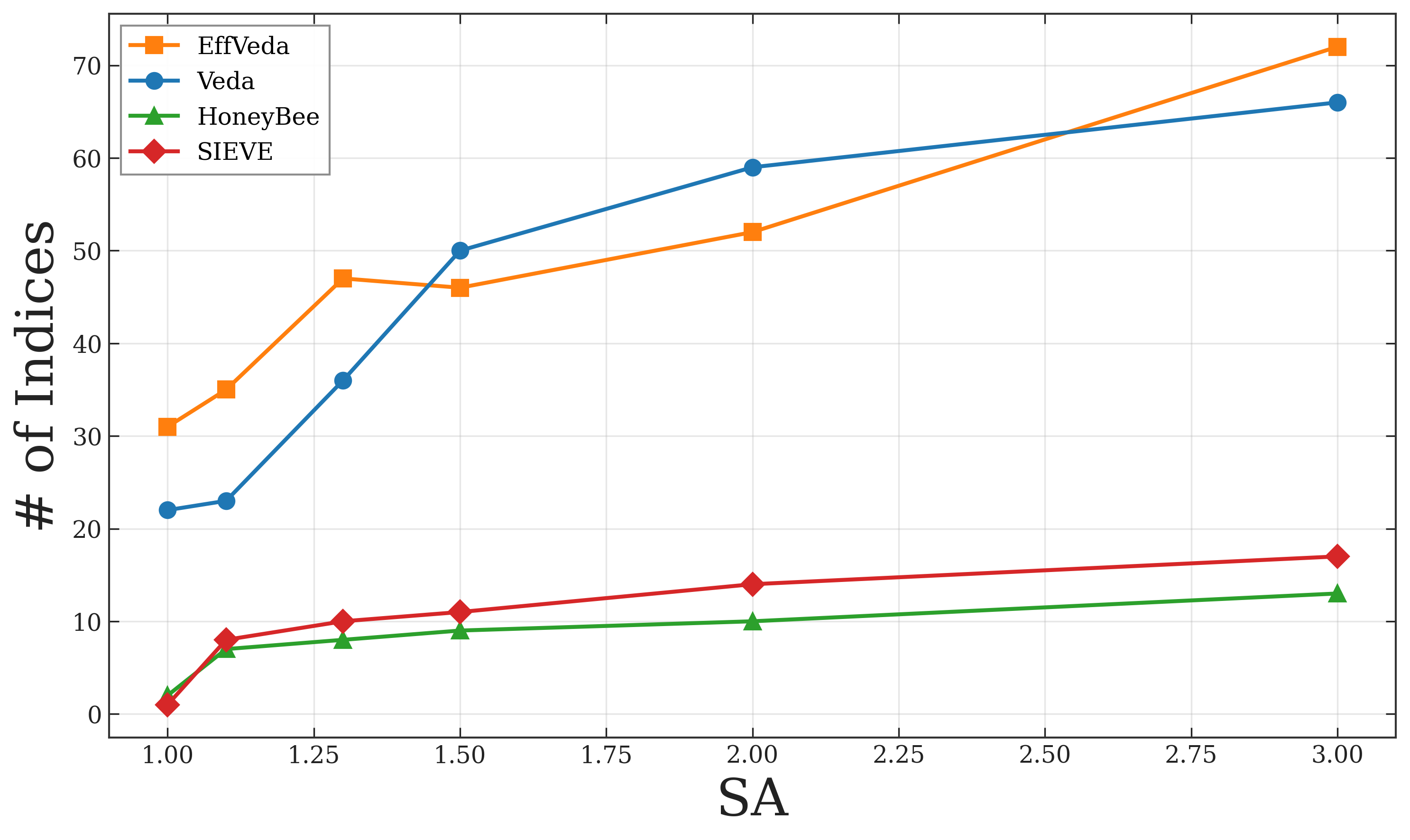}
      \caption{\# of indices vs.\ SA.}
      \label{fig:num_indices_built_vs_sa}
  \end{subfigure}
  \hfill
  \begin{subfigure}[t]{0.24\textwidth}
      \centering
      \includegraphics[width=\linewidth]{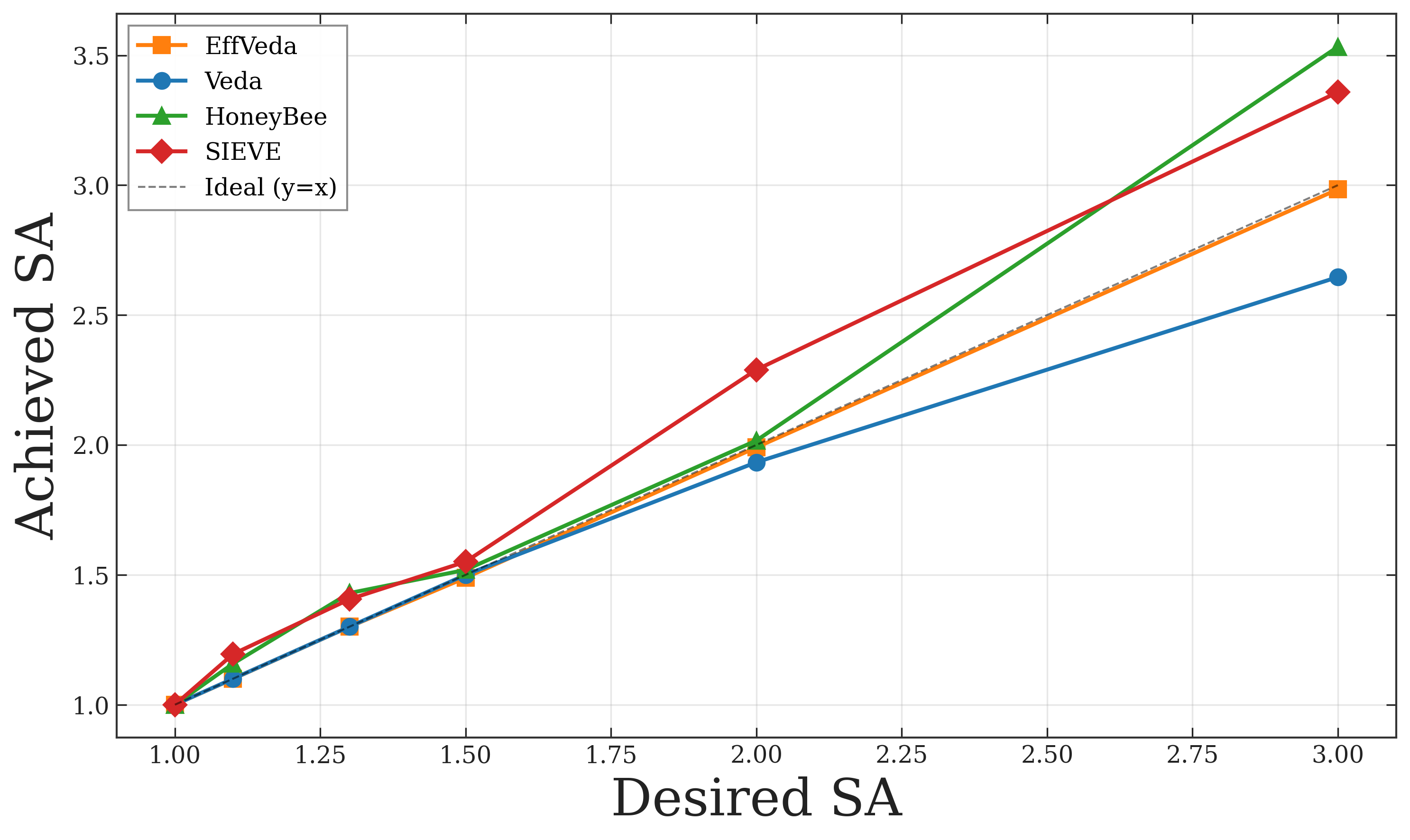}
      \caption{Achieved SA.}
      \label{fig:input_vs_achieved_sa}
  \end{subfigure}
  \caption{Index creation evaluation.}
  \label{fig:index_creation_vs_sa}
\end{figure*}

This subsection provides the detailed results for
Exp~\ref{exp:index-creation-time}--\ref{exp:desired-vs-achieved-sa}. All
measurements are reported on SIFT-1M while varying the target
\textsf{SA} over $\{1.0,1.1,1.3,1.5,2.0,3.0\}$. Figure~\ref{fig:index_creation_vs_sa}
summarizes construction performance across all data partitioning approaches, including end-to-end build time and the data partitioning time. Figure~\ref{fig:number-of-indexed-leftover-data} shows the amount
of indexed and leftover data; Figure~\ref{fig:num_indices_built_vs_sa} reports the number of HNSW indices built as \textsf{SA} varies; and Figure~\ref{fig:input_vs_achieved_sa} shows the achieved \textsf{SA} given the target \textsf{SA}. Table~\ref{tab:construction-time-ac-sift}
complements the figure with the concrete data-partition and total-build
times for HoneyBee, SIEVE, \name, and \effname.

\begin{table*}[t]
  \centering
  \caption{Construction time on SIFT-1M (s).}
  \label{tab:construction-time-ac-sift}
  \begin{tabular}{llrrrrrr}
  \toprule
  \textbf{Method} & \textbf{Metric} & SA=1.0 & SA=1.1 & SA=1.3 & SA=1.5 & SA=2.0 & SA=3.0 \\
  \midrule

  \multirow{2}{*}{HoneyBee}
    & Data Partition Time (s) & 164.849 & 4664.006 & 5671.606 & 7280.113 & 8453.377 & 9017.423 \\
    & Total Build Time (s)    & 265.848 & 4747.681 & 5766.158 & 7378.101 & 8576.683 & 9209.638 \\
  \midrule
  \multirow{2}{*}{SIEVE}
    & Data Partition Time (s) & \multicolumn{6}{c}{negligible (from past queries)} \\
    & Total Build Time (s)   & 54.071 & 56.177 & 60.396 & 63.582 & 85.768 & 120.792 \\
  \midrule
  \multirow{2}{*}{\name}
    & Data Partition Time (s) & 77.412 & 667.110 & 1042.916 & 1118.612 & 1229.079 & 1338.673 \\
    & Total Build Time (s)    & 111.214 & 708.913 & 1089.213 & 1168.700 & 1296.613 & 1428.290 \\
    \midrule
    \multirow{2}{*}{\effname}
      & Data Partition Time (s) & 1.983 & 2.161 & 2.315 & 2.539 & 3.053 & 3.794 \\
      & Total Build Time (s)    & 36.131 & 38.832 & 43.264 & 49.165 & 64.531 & 94.709 \\
  \bottomrule
  \end{tabular}
\end{table*}

\subsection{Ablation Tables for Exps~\ref{exp:n_indcies_per_query}
            and~\ref{exp:coordinated-search}}
\label{app:ablation-tables}

Tables~\ref{tab:avg-hnsw-indices-per-query}--\ref{tab:ef-budget-savings}
report the per-\textsf{SA} numbers behind the ablations in
\S\ref{sec:evaluations}. Table~\ref{tab:avg-hnsw-indices-per-query} gives the
average number of HNSW indices touched per query
(Exp~\ref{exp:n_indcies_per_query}). Table~\ref{tab:phase2-skip-rate} gives the
fraction of impure-index visits for which coordinated search skips the
inflated phase~2 probe, and Table~\ref{tab:ef-budget-savings} gives the
resulting average reduction in \textsf{efs} relative to always inflating
(Exp~\ref{exp:coordinated-search}). All numbers are on SIFT-1M.

\begin{table}[h]
  \centering
  \caption{Avg.\ \# of HNSW indices per query (Exp~\ref{exp:n_indcies_per_query}).}
  \label{tab:avg-hnsw-indices-per-query}
  \begin{tabular}{@{}lcccccc@{}}
  \toprule
  \textbf{SA} & \textbf{1.0} & \textbf{1.1} & \textbf{1.3} & \textbf{1.5} & \textbf{2.0} & \textbf{3.0} \\
  \midrule
  \name    & 3.53 & 3.69 & 3.01 & 5.01 & 5.03 & 5.14 \\
  \effname & 6.55 & 5.14 & 5.11 & 4.41 & 3.91 & 3.79 \\
  \bottomrule
  \end{tabular}
\end{table}

\begin{table}[h]
  \centering
  \caption{Phase-2 skip rate on impure nodes (Exp~\ref{exp:coordinated-search}).}
  \label{tab:phase2-skip-rate}
  \begin{tabular}{@{}lcccccc@{}}
  \toprule
  \textbf{SA} & \textbf{1.0} & \textbf{1.1} & \textbf{1.3} & \textbf{1.5} & \textbf{2.0} & \textbf{3.0} \\
  \midrule
  \name    & 89.81\% & 67.95\% & 74.89\% & 90.00\% & 74.07\% & 100.00\% \\
  \effname & 
              91.56 \% & 90.07 \% & 90.45 \% & 91.67 \% & 91.80 \% & 95.61\% \\
  \bottomrule
  \end{tabular}
\end{table}

\begin{table}[h]
  \centering
  \caption{Avg.\ \textsf{efs} savings on impure nodes (Exp~\ref{exp:coordinated-search}).}
  \label{tab:ef-budget-savings}
  \begin{tabular}{@{}lcccccc@{}}
  \toprule
  \textbf{SA} & \textbf{1.0} & \textbf{1.1} & \textbf{1.3} & \textbf{1.5} & \textbf{2.0} & \textbf{3.0} \\
  \midrule
  \name    & 16.36\% & 23.67\% & 19.27\% & 13.63\% & 13.65\% &  8.57\% \\
  \effname & 22.79 \%& 13.55 \%& 9.55 \%& 7.95 \%& 8.57 \%& 8.14 \% \\ 
  \bottomrule
  \end{tabular}
\end{table}

\subsection{Supplementary Experiment: Effectiveness of Leftovers}
\label{app:leftover-effectiveness}

\experiment{Effectiveness of leftovers}{exp:leftover-effectiveness}
Figure~\ref{fig:leftover-effectiveness} evaluates leftovers for our
strategies, \name and \effname, on SIFT-1M. Across the \textsf{SA} sweep,
\effname leaves about $39{,}000$--$70{,}000$ vectors as leftovers, while
\name leaves about $53{,}000$--$82{,}000$.
For each query, only the leftover blocks authorized for the searching
role are needed. If all leftover vectors were stored in one shared index,
the impurity would be high: the average leftover
impurity is $46.7$--$83.2$ for \effname and $74.4$--$112.0$ for \name. Such a
single leftover index would force large result and beam-width inflation before
filtering. Instead, \name and \effname store leftovers separately as exclusive
blocks. The selected leftover blocks for a query are pure for the searching
role and small enough for efficient linear scan. Moreover, coordinated search
processes them before impure nodes, and their exact candidates further tighten the
global top-$k$ bound to prune later searches on impure indices.

\begin{figure}[t]
    \centering
    \begin{subfigure}[t]{0.23\textwidth}
        \centering
        \includegraphics[width=\linewidth]{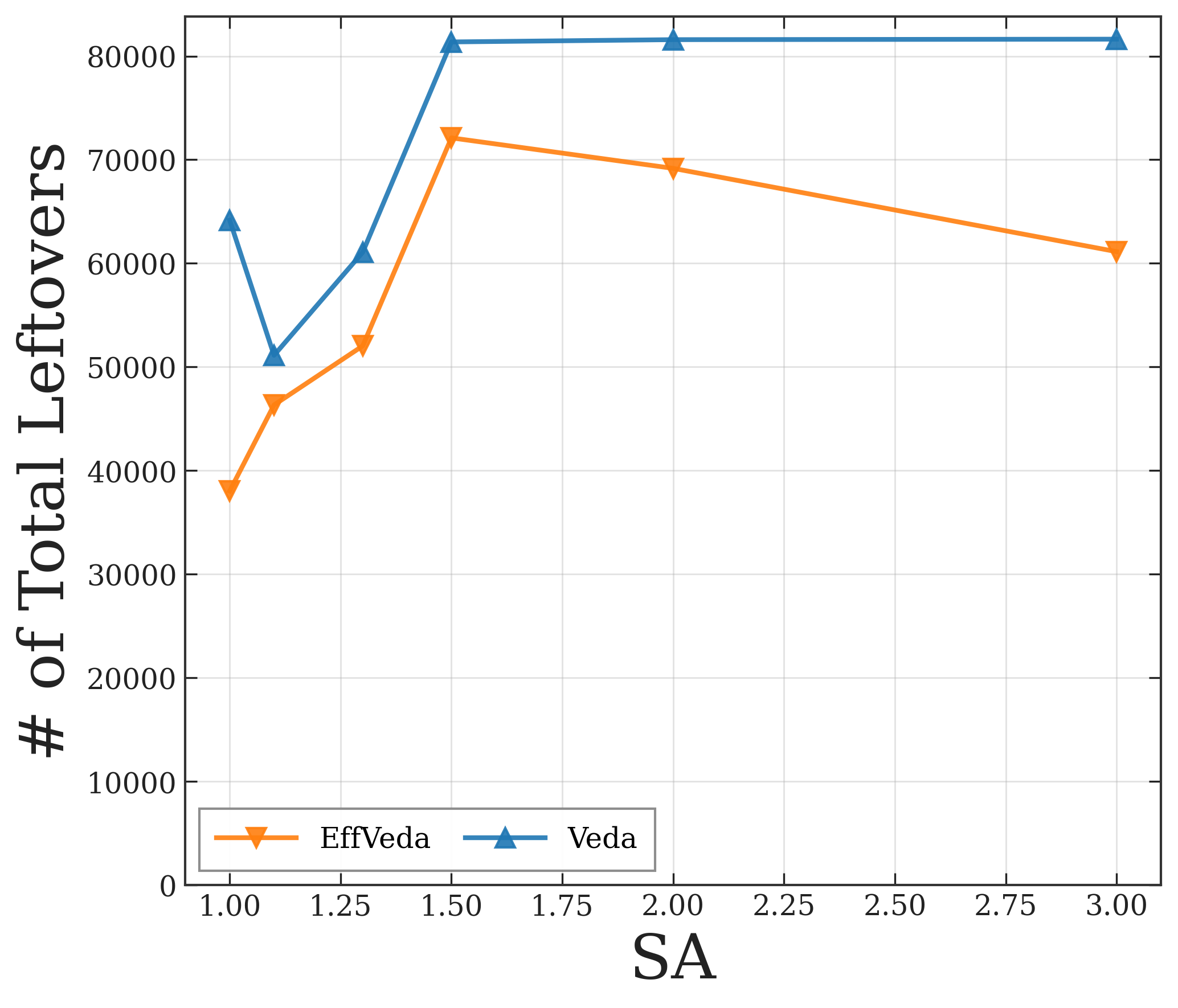}
        \caption{Total \# of leftover vectors.}
        \label{fig:total-leftover-data-varying-sa}
    \end{subfigure}
    \hfill
    \begin{subfigure}[t]{0.23\textwidth}
        \centering
        \includegraphics[width=\linewidth]{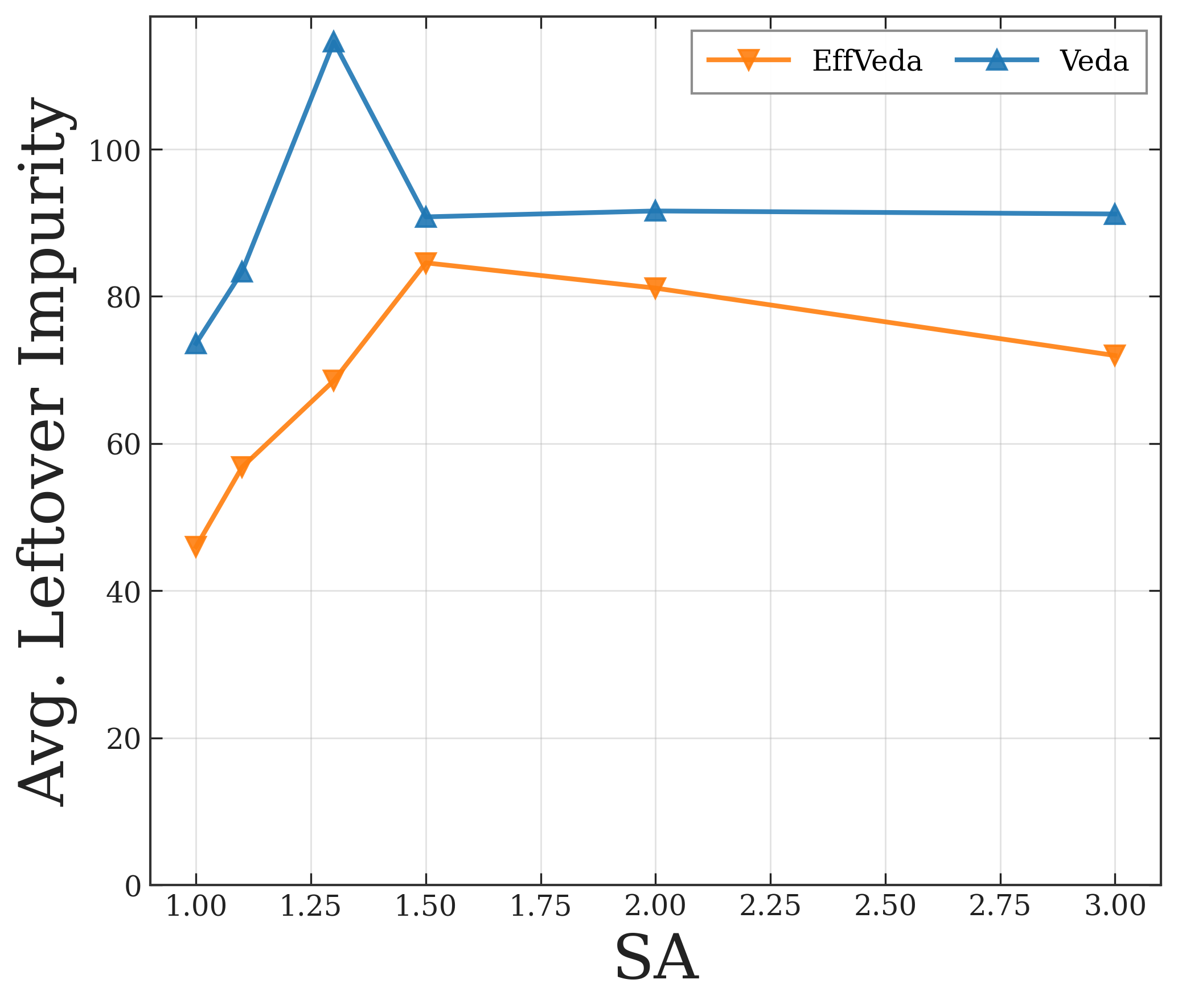}
        \caption{Avg. leftover impurity.}
        \label{fig:leftover-impurity}
    \end{subfigure}
    \caption{Effectiveness of leftovers on SIFT-1M.}
    \label{fig:leftover-effectiveness}
\end{figure}

\subsection{Supplementary Results for Exp~\ref{exp:indexing-threshold}: Varying Indexing Thresholds}
\label{app:vary-lambda-more-sa}

Tables~\ref{tab:vary-lambda-sift1m-qps-sa11}--\ref{tab:vary-lambda-sift1m-qps-sa15}
and Figure~\ref{fig:big-idx-num-vs-indexing-threshold-sift}
report QPS and the number of materialized indices as $\Lambda$ varies at
$\textsf{SA}\in\{1.1,1.3,1.5\}$.
Most strong results occur around $\Lambda{=}2{,}900$, but performance is not
sensitive to the exact threshold. 
Across \name and \effname, the worst QPS remains at least $77.96\%$ of the
best, and in most \textsf{SA} settings it is close to $90\%$. This is
because finalization absorbs many threshold-induced differences in the lattice.
The number of indices reduces as the indexing threshold increases, since fewer
lattice nodes are large enough to be materialized as HNSW indices as $\Lambda$ increases.

\begin{table}[h]
  \centering
  \caption{Indexing threshold $\Lambda$ vs.\ QPS on SIFT-1M, SA = 1.1.}
  \label{tab:vary-lambda-sift1m-qps-sa11}
  \begin{tabular}{@{}lccccc@{}}
  \toprule
  \textbf{$\Lambda$} & \textbf{1900} & \textbf{2400} & \textbf{2900} & \textbf{3400} & \textbf{3900} \\
  \midrule
  \name & 2025.08 & 1893.38 & 2080.03 & 1971.71 & 1973.65 \\
  \effname & 1295.56 & 1530.55 & 1617.66 & 1567.40 & 1661.37 \\
  \bottomrule
  \end{tabular}
\end{table}

\begin{table}[t]
  \centering
  \caption{Indexing threshold $\Lambda$ vs. QPS on SIFT-1M, SA = 1.3.}
  \label{tab:vary-lambda-sift1m-qps-sa13}
  \begin{tabular}{@{}lcccccc@{}}
  \toprule
  \textbf{$\Lambda$} & \textbf{1900} & \textbf{2400} & \textbf{2900} & \textbf{3400} & \textbf{3900} \\
  \midrule
  \name & 2536.74 & 2455.95 & 2583.94 & 2522.00 & 2814.27 \\
  \effname & 1773.61 & 1821.52 & 1904.54 & 2033.74 & 2001.90 \\
  \bottomrule
  \end{tabular}
  \end{table}

\begin{table}[t]
\centering
\caption{Indexing threshold $\Lambda$ vs. QPS on SIFT-1M, SA = 1.5.}
\label{tab:vary-lambda-sift1m-qps-sa15}
\begin{tabular}{@{}lcccccc@{}}
\toprule
\textbf{$\Lambda$} & \textbf{1900} & \textbf{2400} & \textbf{2900} & \textbf{3400} & \textbf{3900} \\
\midrule
\name & 1521.78 & 1658.08& 1658.91 & 1699.36 & 1694.52 \\
\effname & 1364.52 & 1534.84 & 1670.74 & 1735.53 & 1701.31 \\
\bottomrule
\end{tabular}
\end{table}

\begin{figure}[t]
    \centering
    \begin{subfigure}[t]{0.32\linewidth}
        \centering
        \includegraphics[width=\linewidth]{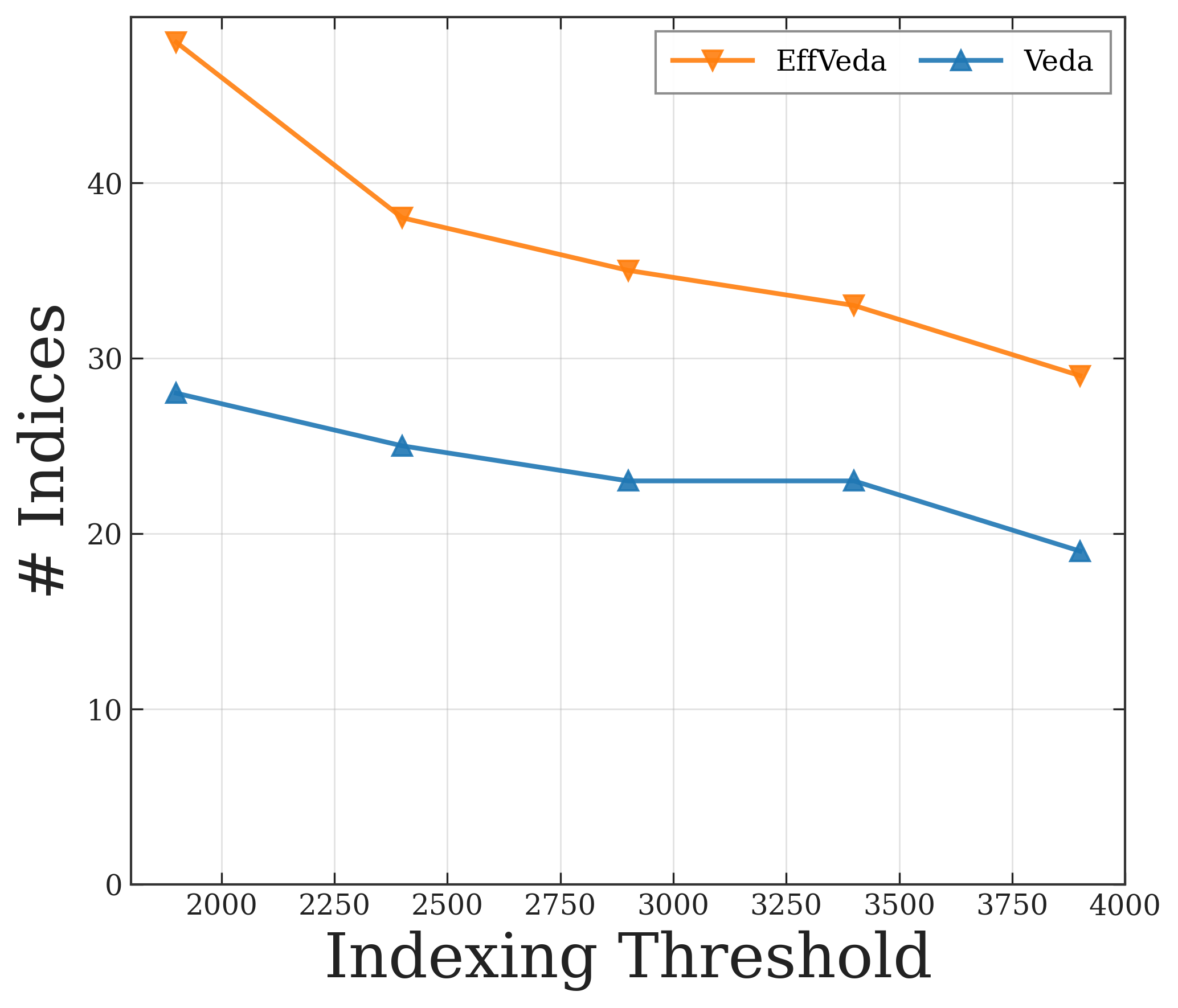}
        \caption{\textsf{SA} = 1.1.}
        \label{fig:big-idx-num-vs-indexing-threshold-sift-sa11}
    \end{subfigure}
    \hfill
    \begin{subfigure}[t]{0.32\linewidth}
        \centering
        \includegraphics[width=\linewidth]{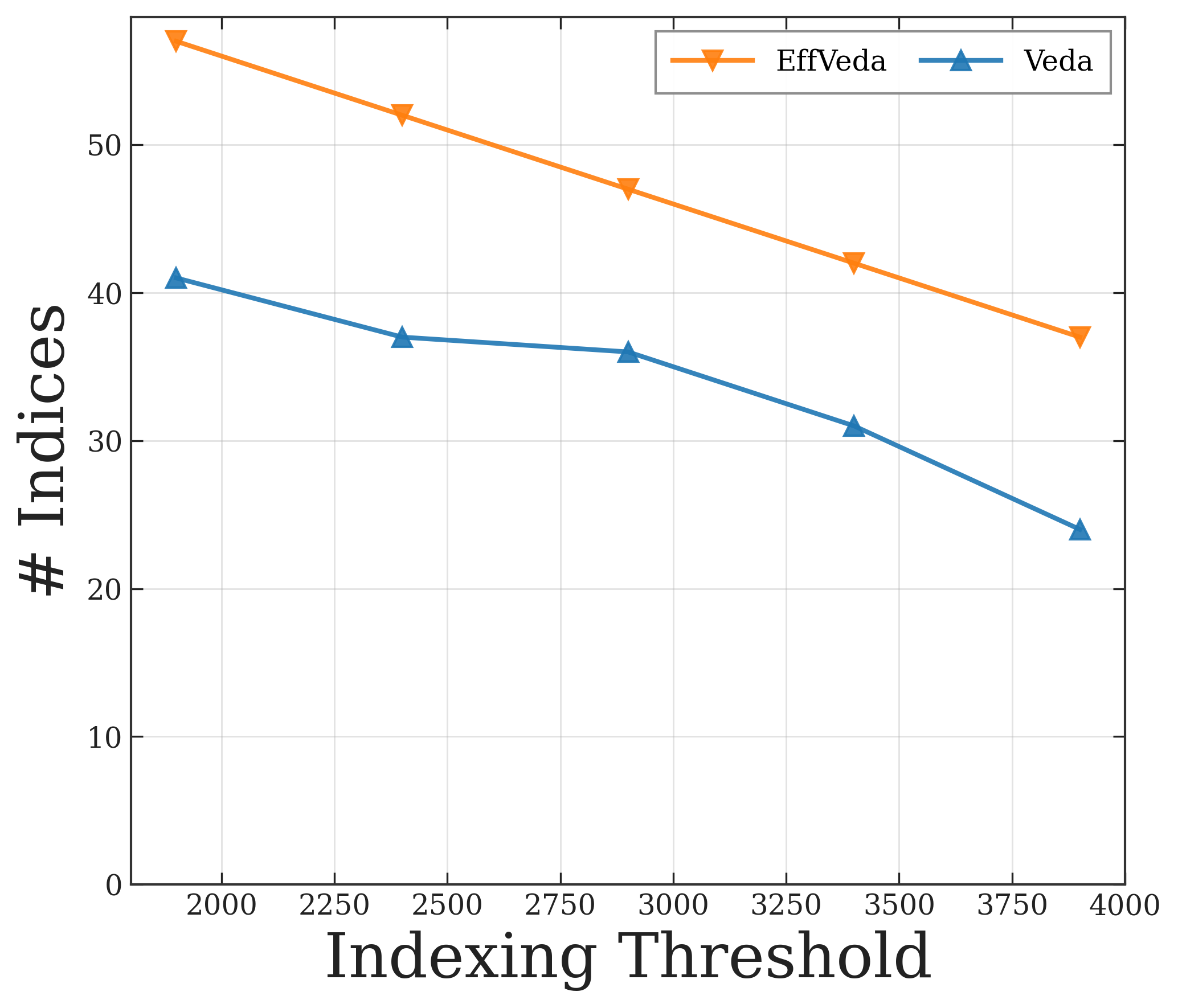}
        \caption{\textsf{SA} = 1.3.}
        \label{fig:big-idx-num-vs-indexing-threshold-sift-sa13}
    \end{subfigure}
    \hfill
    \begin{subfigure}[t]{0.32\linewidth}
        \centering
        \includegraphics[width=\linewidth]{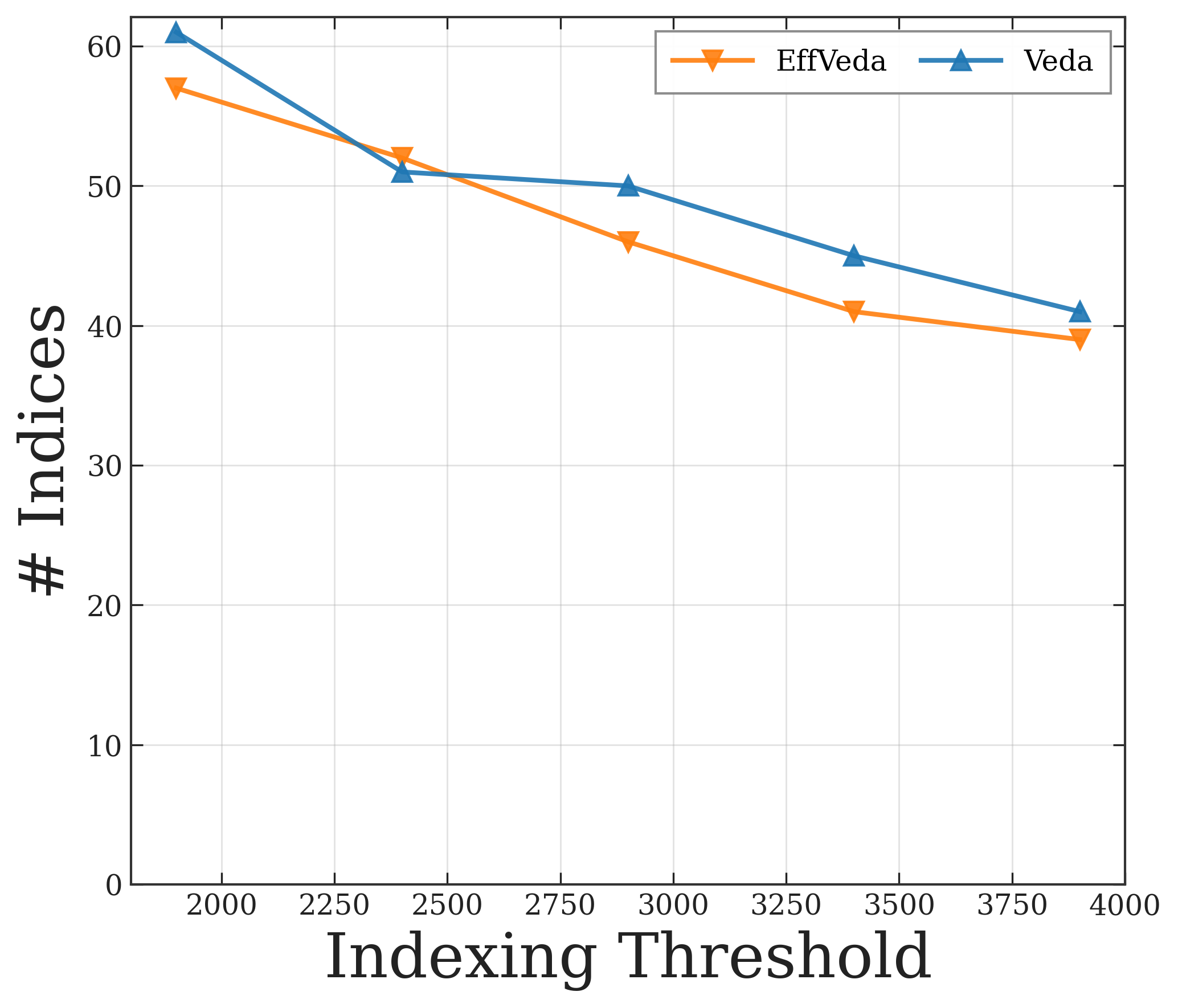}
        \caption{\textsf{SA} = 1.5.}
        \label{fig:big-idx-num-vs-indexing-threshold-sift-sa15}
    \end{subfigure}
    \caption{\# of indices vs.\ indexing threshold $\Lambda$ on SIFT-1M.}
    \label{fig:big-idx-num-vs-indexing-threshold-sift}
\end{figure}

\subsection{Supplementary Results for Exp~\ref{exp:efs}: Impacts of \textsf{efs}}
\label{app:efs-more-datasets}

We provide evaluation results in Figure~\ref{fig:qps_vs_ef_paper} and Figure~\ref{fig:qps_vs_ef_amzn} on the PAPER and AMZN datasets as a complement to Exp~\ref{exp:efs} in
\S\ref{sec:evaluations}. As on SIFT-1M, increasing \textsf{efs} lowers QPS for
all methods because each HNSW search expands a wider beam. The relative trends
are consistent across datasets. Oracle achieves the highest QPS on all datasets, and \name and \effname maintain high QPS over most of the evaluations.

\begin{figure*}[t]
    \centering
    \begin{minipage}{\linewidth}
        \centering
        \includegraphics[width=0.9\linewidth]{figures/legend.png}
    \end{minipage}
    \begin{subfigure}[t]{0.24\textwidth}
        \centering
        \includegraphics[width=\linewidth]{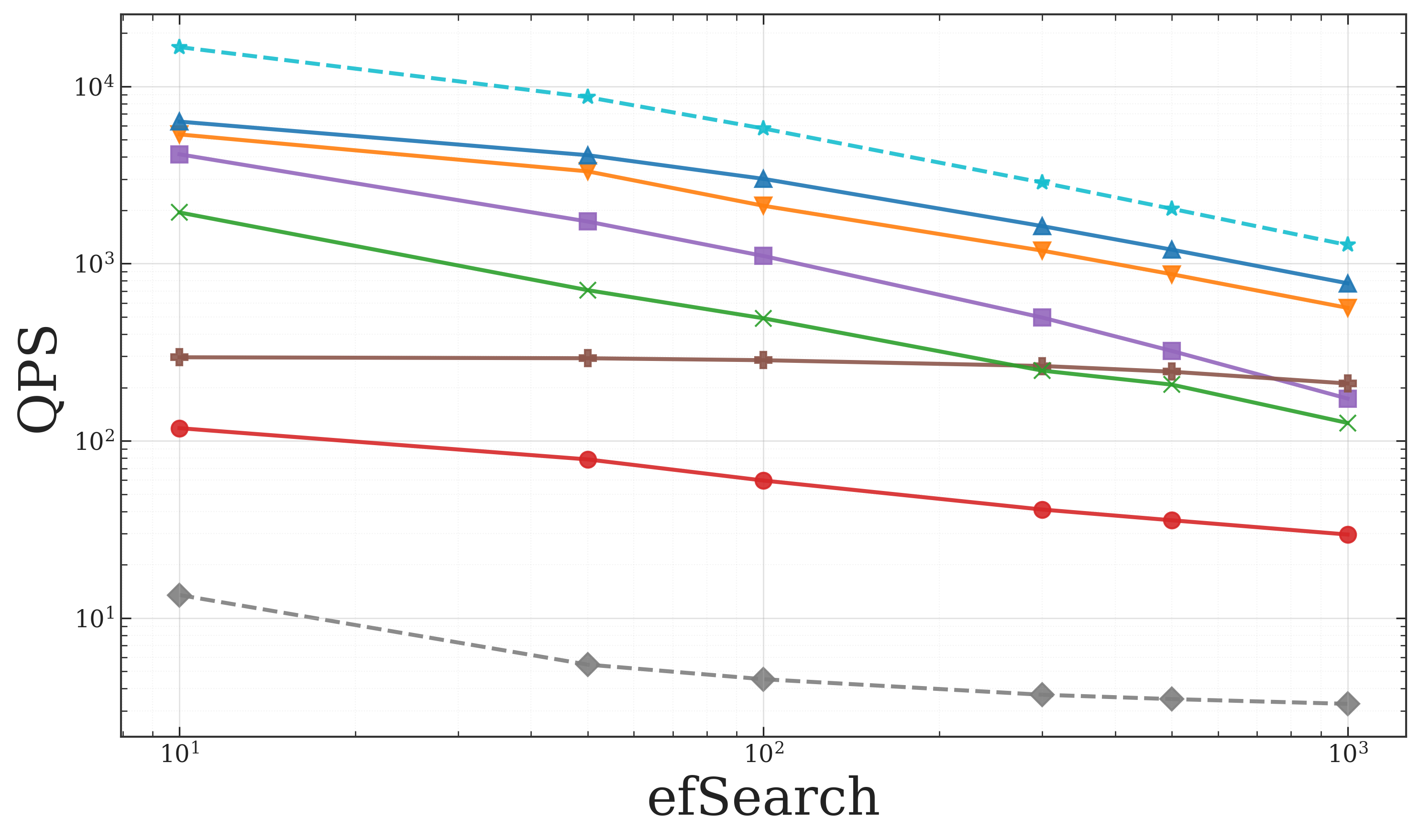}
        \caption{QPS vs.\ \textsf{efs}, PAPER.}
        \label{fig:qps_vs_ef_paper}
    \end{subfigure}
    \hfill
    \begin{subfigure}[t]{0.24\textwidth}
        \centering
        \includegraphics[width=\linewidth]{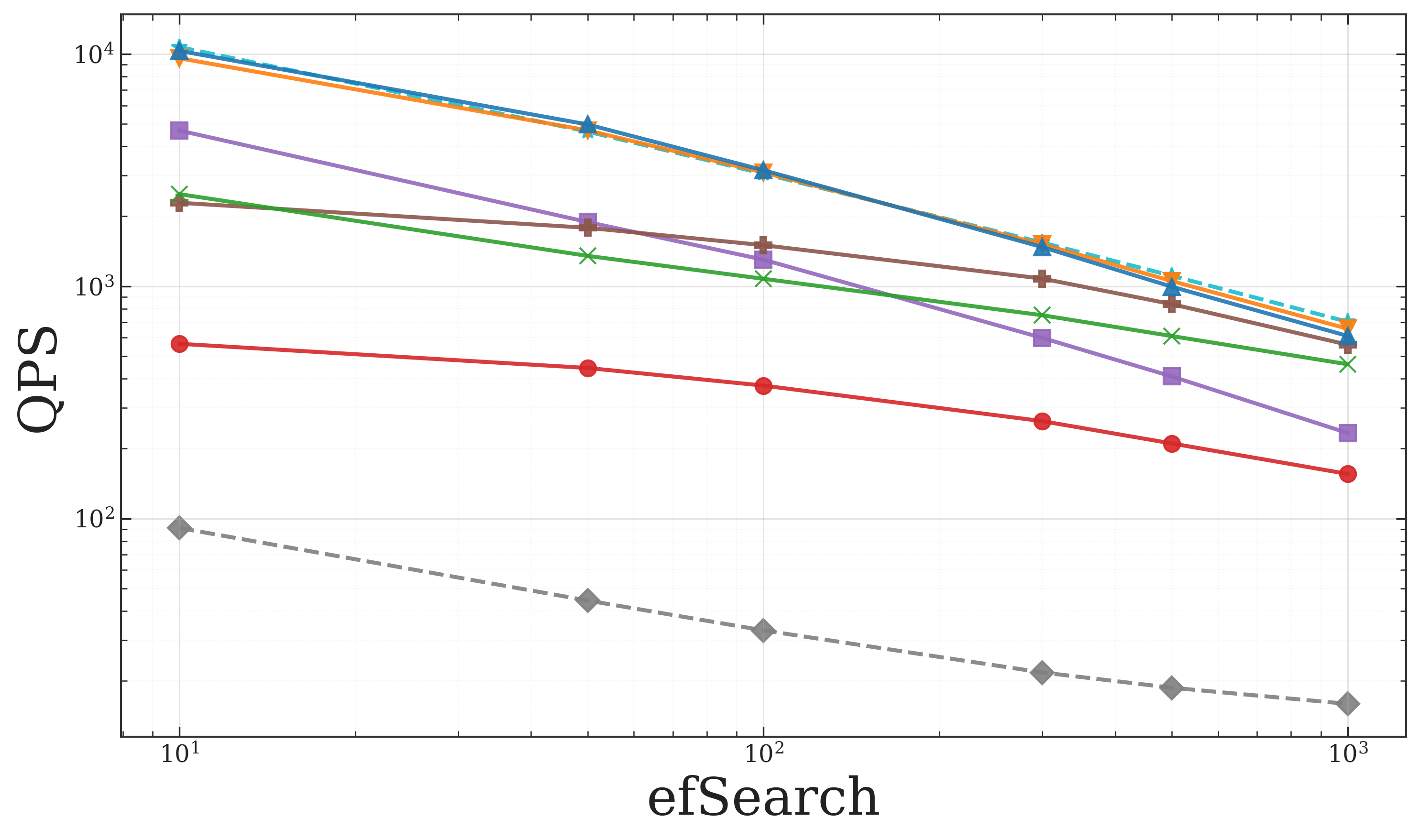}
        \caption{QPS vs.\ \textsf{efs}, AMZN.}
        \label{fig:qps_vs_ef_amzn}
    \end{subfigure}
    \hfill
    \begin{subfigure}[t]{0.24\textwidth}
        \centering
        \includegraphics[width=\linewidth]{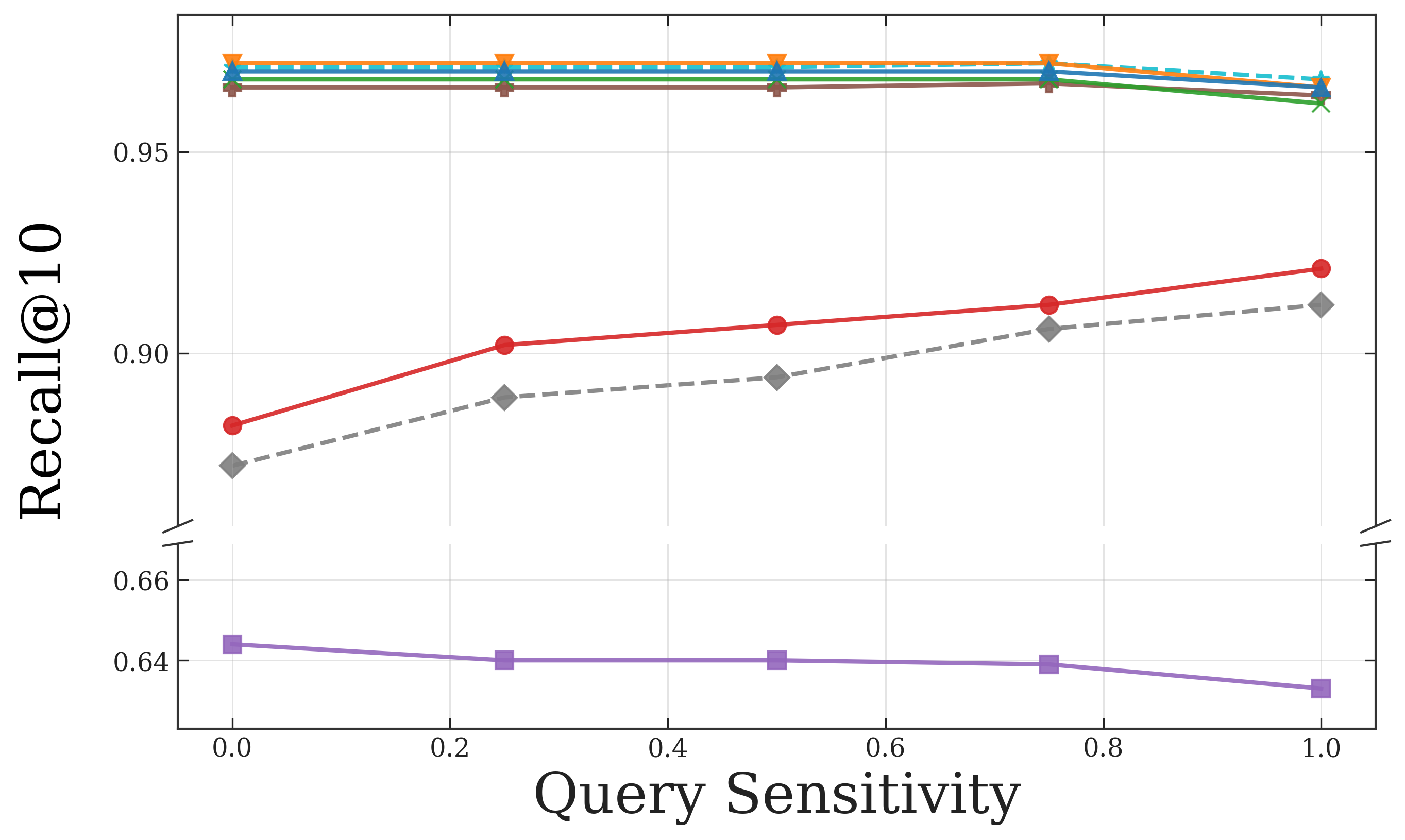}
        \caption{Sensitivity, PAPER.}
        \label{fig:recall_vs_sensitivity_paper}
    \end{subfigure}
    \hfill
    \begin{subfigure}[t]{0.24\textwidth}
        \centering
        \includegraphics[width=\linewidth]{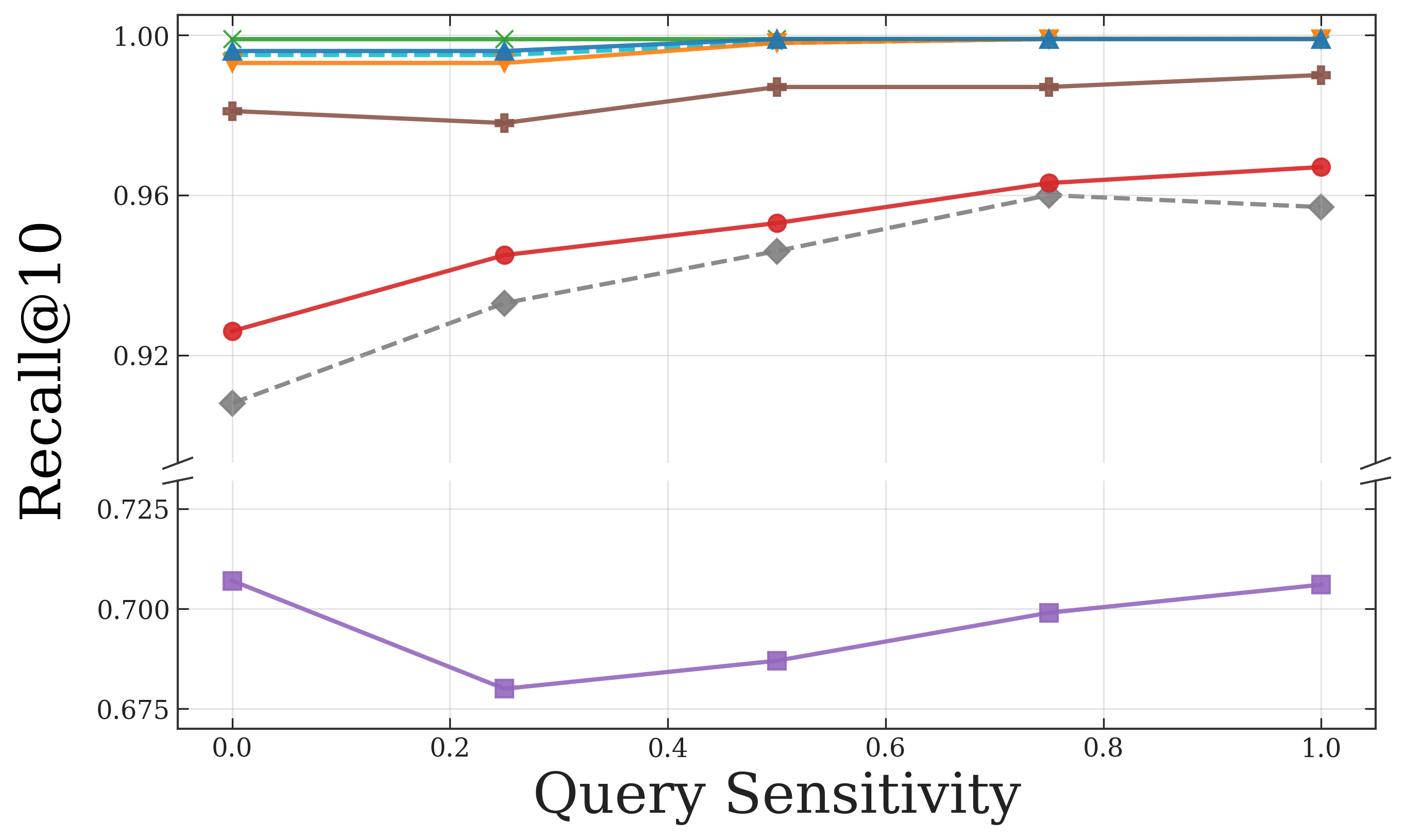}
        \caption{Sensitivity, AMZN.}
        \label{fig:recall_vs_sensitivity_amzn}
    \end{subfigure}
    \caption{Supplementary results on PAPER and AMZN: (a)--(b) show QPS vs.\ \textsf{efs} for Exp~\ref{exp:efs}, and (c)--(d) show Recall@10 vs.\ query sensitivity for Exp~\ref{exp:query-sensitivity}.}
    \label{fig:qps_vs_ef_additional}
    \label{fig:recall_vs_sensitivity_additional}
\end{figure*}

\subsection{Supplementary Results for Exp~\ref{exp:query-sensitivity}: Query Sensitivity}
\label{app:query-sensitivity-complementary}

Panels (c)--(d) of Figure~\ref{fig:recall_vs_sensitivity_additional}
report the query-sensitivity results on PAPER and AMZN. The results are
consistent with the SIFT-1M trend in
\S\ref{sec:evaluations}: \name, \effname, SIEVE, and Oracle maintain high recall
across all sensitivity levels, while global-graph filtering methods lose recall
because they do not reliably recover all authorized neighbors after filtering.
HoneyBee and the single global baseline also become less robust when queries are
drawn outside the issuing role's authorized set.

\subsection{Supplementary Results: Concrete Numbers for QPS}
\label{app:qps-vs-efs-numbers}

This subsection reports the QPS values under varying \textsf{efs} for
Exps~\ref{exp:qps-recall-sift}, \ref{exp:weighted-single-role},
and~\ref{exp:multi-role}. Tables~\ref{tab:qps-vs-efs-sift},
\ref{tab:qps-vs-efs-paper}, and~\ref{tab:qps-vs-efs-amzn} give the
uniform single-role results for Exp~\ref{exp:qps-recall-sift} on SIFT-1M,
PAPER, and AMZN, respectively. Table~\ref{tab:qps-vs-efs-single-role-weighted}
reports the weighted single-role results for
Exp~\ref{exp:weighted-single-role}. Tables~\ref{tab:qps-vs-efs-multi-role-avg}
and~\ref{tab:qps-vs-efs-multi-role-weighted} report the uniform and weighted
multi-role results for Exp~\ref{exp:multi-role} on SIFT-1M.

\begin{table}[t]
  \centering
  \footnotesize
  \caption{AVG. QPS by \textsf{efs} on SIFT-1M.}
  \begin{tabular}{l|c|c|c|c|c|c}
  \toprule
  \textbf{Method}\slash\textbf{\textsf{efs}} & \textbf{10} & \textbf{50} & \textbf{100} & \textbf{300} & \textbf{500} & \textbf{1000} \\
  \midrule
  HoneyBee & 187.492 & 176.164 & 162.605 & 130.234 & 112.222 & 85.636 \\
  \midrule
  SIEVE & 3119.077 & 1218.921 & 865.443 & 438.074 & 317.172 & 200.035 \\
  \midrule
  ACORN-1 & 5729.053 & 3062.951 & 2118.346 & 921.013 & 611.262 & 333.653 \\
  \midrule
  ACORN-$\gamma$ & 1560.614 & 1416.311 & 1289.799 & 966.778 & 755.451 & 479.835 \\
  \midrule
  Global & 82.484 & 33.225 & 24.282 & 13.799 & 11.043 & 8.525 \\
  \midrule
  Oracle & 19469.000 & 9054.963 & 5765.889 & 2761.838 & 1943.102 & 1210.747 \\
  \midrule
  \name & 5570.916 & 3067.795 & 2024.261 & 1010.634 & 716.655 & 451.970 \\
  \midrule
  \effname & 5086.119 & 2853.596 & 1913.186 & 1004.394 & 741.375 & 505.997 \\
  \bottomrule
  \end{tabular}\label{tab:qps-vs-efs-sift}
\end{table}


\begin{table}[t]
  \centering
  \footnotesize
  \caption{AVG. QPS by \textsf{efs} on Paper.}
  \begin{tabular}{l|c|c|c|c|c|c}
  \toprule
  \textbf{Method}\slash\textbf{\textsf{efs}} & \textbf{10} & \textbf{50} & \textbf{100} & \textbf{300} & \textbf{500} & \textbf{1000} \\
  \midrule
  HoneyBee & 117.963 & 78.603 & 59.744 & 40.985 & 35.648 & 29.621 \\
  \midrule
  SIEVE & 1944.825 & 708.676 & 491.717 & 249.205 & 207.699 & 126.251 \\
  \midrule
  ACORN-1 & 4141.419 & 1728.886 & 1104.888 & 496.706 & 321.846 & 173.284 \\
  \midrule
  ACORN-$\gamma$ & 296.255 & 292.565 & 285.247 & 264.597 & 245.668 & 210.790 \\
  \midrule
  Global & 13.497 & 5.466 & 4.520 & 3.694 & 3.493 & 3.282 \\
  \midrule
  Oracle & 16665.226 & 8699.385 & 5788.201 & 2868.663 & 2040.315 & 1274.185 \\
  \midrule
  \name & 6331.331 & 4093.082 & 3009.591 & 1628.945 & 1199.562 & 774.996 \\
  \midrule
  \effname & 5360.852 & 3317.271 & 2121.343 & 1184.398 & 871.909 & 563.141 \\
  \bottomrule
  \end{tabular}\label{tab:qps-vs-efs-paper}
\end{table}


\begin{table}[t]
  \centering
  \footnotesize
  \caption{AVG. QPS by \textsf{efs} on AMZN.}
  \begin{tabular}{l|c|c|c|c|c|c}
  \toprule
  \textbf{Method}\slash\textbf{\textsf{efs}} & \textbf{10} & \textbf{50} & \textbf{100} & \textbf{300} & \textbf{500} & \textbf{1000}  \\
  \midrule
  HoneyBee & 565.157 & 444.717 & 373.449 & 262.600 & 210.475 & 155.598 \\
  \midrule
  SIEVE & 2492.058 & 1350.972 & 1076.478 & 750.226 & 610.329 & 461.493 \\
  \midrule
  ACORN-1 & 4680.636 & 1889.237 & 1301.074 & 599.429 & 409.007 & 233.500 \\
  \midrule
  ACORN-$\gamma$ & 2287.439 & 1784.572 & 1502.050 & 1078.822 & 837.557 & 560.395 \\
  \midrule
  Global & 91.025 & 44.339 & 32.982 & 21.737 & 18.684 & 15.925 \\
  \midrule
  Oracle & 10702.294 & 4632.108 & 3049.857 & 1542.105 & 1109.317 & 703.179 \\
  \midrule
  \name & 10311.192 & 4970.202 & 3157.867 & 1474.527 & 995.525 & 611.028 \\
  \midrule
  \effname & 9583.520 & 4688.725 & 3089.447 & 1515.846 & 1052.995 & 656.514 \\
  \bottomrule
  \end{tabular}\label{tab:qps-vs-efs-amzn}
\end{table}


\begin{table}[t]
  \centering
  \footnotesize
  \caption{AVG. QPS by \textsf{efs} for single-role weighted queries on SIFT-1M.}
  \begin{tabular}{l|c|c|c|c|c|c}
  \toprule
  \textbf{Method}\slash\textbf{\textsf{efs}} & \textbf{10} & \textbf{50} & \textbf{100} & \textbf{300} & \textbf{500} & \textbf{1000} \\
  \midrule
  HoneyBee & 97.382 & 93.386 & 89.888 & 81.455 & 75.069 & 64.803 \\
  \midrule
  SIEVE & 3501.189 & 1233.381 & 825.063 & 430.808 & 294.577 & 158.195 \\
  \midrule
  ACORN-1 & 3709.269 & 2343.051 & 1710.205 & 825.065 & 556.623 & 292.556 \\
  \midrule
  ACORN-$\gamma$ & 2452.853 & 1937.127 & 1635.744 & 974.125 & 716.507 & 400.558 \\
  \midrule
  Global & 98.307 & 95.057 & 89.066 & 81.430 & 74.872 & 63.743 \\
  \midrule
  Oracle & 14325.983 & 6889.699 & 4199.032 & 1814.939 & 1279.416 & 760.606 \\
  \midrule
  \name & 4402.896 & 2235.286 & 1462.140 & 700.332 & 494.311 & 314.068 \\
  \midrule
  \effname & 3407.543 & 1872.680 & 1249.333 & 647.557 & 473.391 & 311.541 \\
  \bottomrule
  \end{tabular}\label{tab:qps-vs-efs-single-role-weighted}
\end{table}

\begin{table}[t]
  \centering
  \footnotesize
  \caption{AVG. QPS by \textsf{efs} for multi-role average queries on SIFT-1M.}
  \begin{tabular}{l|c|c|c|c|c|c}
  \toprule
  \textbf{Method}\slash\textsf{efs} & 10 & 50 & 100 & 300 & 500 & 1000 \\
  \midrule
  HoneyBee & 45.221 & 43.903 & 42.590 & 39.228 & 36.986 & 32.613 \\
  \midrule
  SIEVE & 6319.986 & 2896.698 & 1983.929 & 899.122 & 649.306 & 399.583 \\
  \midrule
  ACORN-1 & 5474.367 & 3695.168 & 2461.091 & 1175.011 & 764.182 & 406.920 \\
  \midrule
  ACORN-$\gamma$ & 6812.420 & 4524.342 & 3056.299 & 1508.263 & 979.229 & 512.690 \\
  \midrule
  Global & 45.220 & 44.681 & 44.260 & 42.995 & 42.418 & 41.746 \\
  \midrule
  Oracle & 11122.613 & 4797.081 & 2903.125 & 1018.783 & 416.122 & 38.906 \\
  \midrule
  \name & 6189.466 & 3175.273 & 2064.598 & 994.156 & 702.501 & 434.872 \\
  \midrule
  \effname & 5175.412 & 2818.353 & 1856.056 & 934.938 & 663.272 & 415.225 \\
  \bottomrule
  \end{tabular}\label{tab:qps-vs-efs-multi-role-avg}
\end{table}

\begin{table}[t]
  \centering
  \footnotesize
  \caption{AVG. QPS by \textsf{efs} for multi-role weighted queries on SIFT-1M.}
  \begin{tabular}{l|c|c|c|c|c|c}
  \toprule
  \textbf{Method}\slash\textsf{efs} & 10 & 50 & 100 & 300 & 500 & 1000 \\
  \midrule
  HoneyBee & 41.873 & 41.194 & 40.629 & 38.697 & 37.503 & 34.968 \\
  \midrule
  SIEVE & 5645.932 & 3113.955 & 2096.044 & 1021.345 & 736.921 & 469.705 \\
  \midrule
  ACORN-1 & 3776.462 & 2499.611 & 1877.226 & 1022.629 & 706.584 & 383.792 \\
  \midrule
  ACORN-$\gamma$ & 3640.366 & 2837.153 & 2601.271 & 1321.165 & 918.777 & 491.827 \\
  \midrule
  Global & 43.846 & 43.315 & 43.051 & 41.832 & 41.368 & 39.981 \\
  \midrule
  Oracle & 10939.646 & 4765.773 & 3006.480 & 1396.386 & 961.537 & 582.810 \\
  \midrule
  \name & 10157.987 & 4947.543 & 3117.508 & 1455.797 & 1022.009 & 616.855 \\
  \midrule
  \effname & 9346.597 & 5023.792 & 3239.555 & 1504.310 & 1056.843 & 636.080 \\
  \bottomrule
  \end{tabular}\label{tab:qps-vs-efs-multi-role-weighted}
\end{table}

\section{Supporting Dynamic Workloads}
\label{app:dynamic-workloads}

The construction in \S\ref{sec:veda}--\S\ref{sec:eff_veda} assumes a
static dataset and policy. We sketch here how the lattice absorbs both
data updates and policy changes, including permission grants/revocations
and role merges, without a full rebuild.

\paragraph{Data updates.}
Every vector $v$ belongs to exactly one exclusive block
$N^{\mathit{ex}}(\tau)$, and the container map $\Phi$
(\S\ref{sec:qplan}) records the set of lattice nodes that physically hold
that block. $\textsf{Insert}(v,\tau)$ adds $v$ to each node in
$\Phi(N^{\mathit{ex}}(\tau))$; $\textsf{Delete}(v)$ removes it from the
same set; $\textsf{Update}(v,v')$ is a delete followed by an insert.
Indexed nodes use HNSW's native incremental insertion~\cite{hnsw} and
tombstone deletion; leftover nodes are arrays with $\mathcal{O}(1)$
update.

\paragraph{Policy updates.}
Granting or revoking role $r$ on $v$ moves $v$ from block $\tau$ to
$\tau\cup\{r\}$ or $\tau\setminus\{r\}$; only nodes in the symmetric
difference $\Phi(N^{\mathit{ex}}(\tau))\,\triangle\,\Phi(N^{\mathit{ex}}(\tau'))$ are touched, and a
previously unseen destination block is created as a fresh leftover node.
Adding a role is metadata only. Dropping role $r$ relabels every block
$\tau\ni r$ to $\tau\setminus\{r\}$ and merges blocks that collide; node
contents are unchanged, so no HNSW index is rebuilt.
Administrative changes can be expressed by the same relabeling operation.
For example, if two departments are merged, the corresponding roles are
replaced by the merged role, and any blocks with identical resulting role
sets are coalesced.

\paragraph{Optimization.}
These operations keep the layout \emph{correct}. Every authorized vector
stays reachable through some node in $\textsf{QP}(r)$, and coordinated
search post-filters any newly impure node---but let it drift from the
\textsf{QA} optimum as block sizes and impurities change. We restore
optimality lazily: when a node crosses the threshold $\Lambda$ or its
size/impurity drifts beyond a slack, we re-run copy/merge on that node and
its lattice neighbors only. Only after large policy changes (e.g.,
$\textsf{DropRole}$), we re-run \effname in full.

\section{Detailed Related Work}
\label{app:detailed_related_work}

This appendix expands the qualitative comparison of \S\ref{sec:related_work}
into a method-level survey. The unifying question is \emph{where} a system pays
for the access predicate: before ANN search (pre-filtering), during graph
traversal (in-index filtering), through additional materialized sub-indexes
(workload-aware selection), or through an access-control-aware partitioning of
the data itself. We organize the discussion along that axis and, for each
method, state how \name and \effname differ.

\subsection{Access-Control Foundations}

Classical database access control defines policy models such as DAC, MAC, and
RBAC and studies their enforcement, administration, and auditing in relational
systems~\cite{bertino2011access}. That literature fixes the security semantics
a vector store must satisfy but says nothing about ANN-specific concerns:
approximate traversal, candidate-set inflation, or coordination across several
indices. Recent benchmarks for organization-scale retrieval-augmented
generation~\cite{lewis2020retrieval}, e.g., OrgAccess~\cite{sanyal2025orgaccess},
confirm that policy-compliant retrieval over embedding stores is now a
first-class requirement and motivate the problem we study.

\subsection{General-Purpose ANN Indexing}

Unconstrained ANN indices are the primitives every system below builds on.
FAISS~\cite{faiss} popularized partition- and compression-based indices (IVF,
product quantization) that assign vectors to coarse clusters and probe a subset
of inverted lists. HNSW~\cite{hnsw} builds a multi-layer navigable small-world
graph and answers queries by greedy descent from sparse upper layers to a dense
base layer. DiskANN~\cite{diskann} introduces the Vamana graph for high-recall
billion-scale search with most of the index resident on SSD.
ScaNN~\cite{scann} combines partition pruning with anisotropic vector
quantization for large-scale maximum-inner-product search. All four assume a
single global search space; they are orthogonal to access control and appear in
\name only as the per-group index implementation.

\subsection{Filtered and Category-Aware Vector Search}

Filtered ANN attaches attributes to vectors and returns nearest neighbors that
satisfy a predicate. The generic strategies---\emph{pre-filtering} (materialize
the matching set, then search it), \emph{post-filtering} (search globally, then
discard non-matching results), and \emph{result-set filtering} (test the
predicate before admitting a candidate to the top-$k$)---are easy to layer onto
any index but fail in opposite regimes: pre-filtering degenerates to a scan
when the predicate is broad, and post-filtering must over-search heavily when
the predicate is selective. The systems below each move the predicate deeper
into the index to avoid that ``unhappy middle.''

\noindent\textbf{Filtered-DiskANN}~\cite{filtered_diskann} makes the Vamana
graph label-aware. \emph{FilteredVamana} inserts points incrementally and
prunes edges using both geometry and label overlap; \emph{StitchedVamana}
builds one Vamana graph per label, overlays their edge sets into a single
graph, and prunes high-degree nodes. The result remains a single SSD-friendly
graph, so storage stays near \textsf{SA}${\approx}1$, but the relevant label
universe must be fixed at construction time. RBAC roles satisfy that
assumption, yet a role's visible set is the \emph{union} of many label-specific
subgraphs; Filtered-DiskANN has no mechanism to stitch those subgraphs at query
time, so a role with broad access still falls back to over-search on the
overlaid graph. \name avoids this by materializing the union explicitly as a
group when the storage budget allows.

\noindent\textbf{UNG}~\cite{ung} augments a graph index with a separate
\emph{label-navigating graph} that encodes containment among label sets, so
equality, subset, and overlap predicates can prune vectors that cannot match
before any distance is computed. The predicate model is richer than
Filtered-DiskANN's but remains a category-search model: it accelerates ``find
vectors whose labels contain $L$,'' not ``find vectors visible to role $r$
across all of $r$'s permissions.'' \name's role-subset lattice plays the
analogous navigational role for RBAC, but its nodes are physical index groups
rather than logical label sets.

\noindent\textbf{ACORN}~\cite{acorn} targets predicate-\emph{agnostic} hybrid
search on HNSW. For a query predicate $p$, search is confined to the subgraph
induced by vectors satisfying $p$; because filtering can disconnect that
subgraph, ACORN-$\gamma$ widens neighbor lists by a factor $\gamma$ at
construction time (with predicate-agnostic pruning), and ACORN-$1$ keeps
construction close to vanilla HNSW but expands one- and two-hop neighbors at
query time. Both variants restore reachability by exposing a denser
neighborhood before the predicate is applied. ACORN handles arbitrary
predicates and therefore cannot exploit that RBAC predicates are few, fixed,
and heavily overlapping; in our setting its $\gamma$-expansion pays a uniform
construction tax for flexibility \name does not need.

\noindent\textbf{SIEVE}~\cite{sieve} is the filtered-ANN system closest to our
optimization. Rather than modify traversal, it \emph{selects} a workload-aware
collection of HNSW sub-indexes under a memory budget and routes each query to
one of them. For a sub-index $I_h$ whose filter $h$ subsumes a query filter
$f$, SIEVE models memory as
\[
S(I_h)=M\cdot \operatorname{card}(h),
\]
and indexed search cost as
\[
C(I_h,\mathit{sef},w,f)
= \log(\operatorname{card}(h))\cdot \mathit{sef}\cdot
\left(\frac{\operatorname{card}(h)}{\operatorname{card}(f)}\right)^{
\operatorname{cor}(w,f,h)} ,
\]
where $\operatorname{card}(\cdot)$ is filter cardinality, $\mathit{sef}$ is the
HNSW search parameter, and $\operatorname{cor}(w,f,h)$ captures query--filter
correlation inside $I_h$. Brute-force search costs
$C_{bf}(f)=\operatorname{card}(f)$ up to a calibration constant, and a built
collection $\mathcal{I}$ serves $f$ at
\[
C(\mathcal{I},f)
= \min\!\left(C_{bf}(f),\;\min_{I_h\in\mathcal{I}} C(I_h,f)\right).
\]
Given a historical workload $\mathcal{H}=\{(h_i,c_i)\}$ and budget $B$, SIEVE
solves
\[
\min_{\mathcal{I}}\;\sum_{(h_i,c_i)\in\mathcal{H}} c_i\,C(\mathcal{I},h_i)
\quad\text{s.t.}\quad
I_{\infty}\in\mathcal{I},\;
\sum_{I_h\in\mathcal{I}} S(I_h)\leq B
\]
greedily, adding at each step the sub-index with the largest marginal
cost-reduction per memory unit, and organizes the result into a Hasse diagram
over filter subsumption so that a BFS from the root $I_{\infty}$ finds the
smallest subsuming sub-index (any node whose filter does not subsume $f$ prunes
its entire subtree). Two design choices separate SIEVE from \name. First, SIEVE
serves each query from \emph{one} sub-index; an RBAC role whose visible set is
the union of several disjoint blocks must either fall back to $I_{\infty}$ or
accept an impure superset. \effname instead probes multiple authorized indices
and shares a global distance bound across them. Second, SIEVE's search space is
``which sub-indexes to add''; \name's is ``which blocks to merge and which to
copy,'' which can also \emph{reduce} the number of indices a role touches
without spending budget.

\subsection{Access-Control-Aware Vector Search}

Production deployments typically sit at one of two extremes. A \emph{global
index with row-level filtering} stores every vector once
(\textsf{SA}${=}1$) and enforces access control by post-filtering, so queries
spend most of their beam on unauthorized vectors whenever the role is
selective. An \emph{oracle index} builds the ideal pure HNSW graph for each
access predicate known at construction time, e.g., a role or user visibility
set; this makes search trivial but duplicates every shared vector and becomes
unmanageable when permissions overlap heavily. \name treats these as the two
endpoints of the \textsf{SA}/\textsf{QA} curve
(\S\ref{sec:ac-indexing}) and searches the interior.

\noindent\textbf{HoneyBee}~\cite{honeybee} is the closest prior work. It casts
RBAC vector search as a dynamic partitioning problem: overlapping partitions
replicate vectors selectively, using the ``thin waist'' of RBAC roles to
balance storage, latency, and recall. Its per-user latency model is
\[
C_u(\Pi,u_i,\mathit{ef}_s)
= \sum_{j\in \mathrm{AP}_{\min}(u_i,\Pi)}
\log(|\pi_j|)\cdot(a\cdot \mathit{ef}_s+b),
\]
where $\Pi$ is the partitioning, $\mathrm{AP}_{\min}(u_i,\Pi)$ is the minimal
set of partitions that cover $u_i$'s visible data, and $\mathit{ef}_s$ is the
HNSW search parameter. HoneyBee pairs this with a selectivity-based recall
model, formulates partitioning as a constrained MINLP, and---because the exact
problem is NP-hard---solves it with a greedy \emph{split} heuristic that
repeatedly divides role groups when the predicted latency gain justifies the
storage cost.

\name shares HoneyBee's objective but differs on three axes.
(\emph{i})~\emph{Search space.} HoneyBee moves top-down by splitting; \name
moves over the role-subset lattice with both \emph{copy} and \emph{merge}, so
it can reach layouts---e.g., two blocks merged into one pure group for a
frequent role pair---that no sequence of splits produces.
(\emph{ii})~\emph{Cost model.} HoneyBee charges $\log|\pi_j|\cdot(a\,\mathit{ef}_s+b)$
per partition; \name uses the hardware-calibrated $C_{\theta}$ of
Definition~\ref{def:hnsw-cost}, whose linear-$\textsf{efs}$ term is the quantity
the optimizer actually manipulates through impurity $\lambda$.
(\emph{iii})~\emph{Execution.} HoneyBee searches each partition in
$\mathrm{AP}_{\min}$ independently. \effname runs them in a coordinated
schedule in which pure partitions are probed first and their top-$k$ distances
become an admission threshold for the beam on impure partitions and residual
linear scans, so the inflated $\lambda\,\textsf{efs}$ of
Definition~\ref{def:hnsw-cost} is paid only on the fraction of impure
candidates that survive the bound.

\end{document}